
\documentclass[journal]{IEEEtran}
\usepackage{amsthm}
\usepackage{cite}
\usepackage{amsmath,amssymb,amsfonts,bbm}
\usepackage{graphicx}
\usepackage{subfigure}
\usepackage[linesnumbered,ruled]{algorithm2e}
\usepackage{color}

\newtheorem{lemma}{\textbf{\textit{Lemma}}}

\newtheorem{theorem}{\textbf{\textit{Theorem}}}

\allowdisplaybreaks[3] 
\usepackage{booktabs}
\usepackage{threeparttable}

\usepackage[colorlinks,linkcolor=blue]{hyperref}

\usepackage{stfloats}
%

\normalsize

%
\ifCLASSINFOpdf

\else

\fi

\hyphenation{op-tical net-works semi-conduc-tor}

\makeatletter

\makeatother

\begin{document}
	%
	\title{Joint Port Selection Based Channel Acquisition \\for FDD Cell-Free Massive MIMO}
	%
	%
	%
	
\author{ Cheng Zhang,~\IEEEmembership{Member,~IEEE,} Pengguang Du, Minjie Ding,\\  Yindi Jing,~\IEEEmembership{Senior Member,~IEEE,} Yongming Huang,~\IEEEmembership{Senior Member,~IEEE}
 \thanks{This work was supported in part by the National Natural Science Foundation of China under Grant 62271140 and 62225107, the Fundamental Research Funds for the Central Universities 2242022k60002, the Natural Science Foundation on Frontier Leading Technology Basic Research Project of Jiangsu under Grant BK20222001, and the Major Key Project of PCL. (Corresponding authors: Y. Huang, C. Zhang)}
\thanks{C. Zhang, P. Du, M. Ding and Y. Huang are with the National Mobile Communication Research Laboratory, the School of Information Science and Engineering, Southeast University, Nanjing 210096, China, and also with the Purple Mountain Laboratories, Nanjing 211111, China (e-mail: $ \left\lbrace \right.  $zhangcheng\_seu; pgdu; 220200816; huangym$ \left.  \right\rbrace $@seu.edu.cn).}
\thanks{Y. Jing is with the Department of Electrical and Computer Engineering, University of Alberta, Edmonton T6G 1H9, Canada (e-mail: yindi@ualberta.ca).}}

	\markboth{IEEE Transactions on COMMUNICATIONS}%
	{Submitted paper}
	\maketitle
	\begin{abstract}
	In frequency division duplexing (FDD) cell-free massive MIMO, the acquisition of the channel state information (CSI) is very challenging because of the large overhead required for the training and feedback of the downlink channels of multiple cooperating base stations (BSs). 
	In this paper, for systems with partial uplink-downlink channel  reciprocity, and a general spatial domain channel model with variations in the average port power and correlation among port coefficients, we propose a joint-port-selection-based CSI acquisition and feedback 
	scheme for the downlink transmission with zero-forcing precoding.
	The scheme uses an eigenvalue-decomposition-based transformation to reduce the feedback overhead by exploring the port correlation.
	We derive the sum-rate of the system for any port selection.
	Based on the sum-rate result, we propose a low-complexity greedy-search-based joint port selection (GS-JPS) algorithm. Moreover, to adapt to fast time-varying scenarios, a supervised deep learning-enhanced joint port selection (DL-JPS) algorithm is proposed. Simulations verify the effectiveness of our proposed schemes and their advantage over existing port-selection channel acquisition schemes.
	\end{abstract}

	\vspace{0em}
	\begin{IEEEkeywords}
		Cell-free massive MIMO, channel acquisition, joint port selection, channel correlation, deep learning.
		
	\end{IEEEkeywords}

	%
	\IEEEpeerreviewmaketitle

	\section{Introduction}
	
	\IEEEPARstart{M}{assive} MIMO technology can significantly improve the capacity and reliability of wireless networks. Via configuring arrays with hundreds of antennas at the base stations (BSs), it is possible to serve dozens of single-antenna users in the same time-frequency resource, thus meeting the high multiplexing requirements in multi-user scenarios \cite{lu2014overview,larsson2014massive}.
	At the same time, network densification has received great attention in 5G wireless communications \cite{wp5d2015imt}.  
	In the existing cellular network architecture, however, the reduction of cell size leads to a sharp increase in the number of users at the cell edge, making it difficult to achieve the expected network throughput. 
	To address the problem, the concept of cell-free has been proposed in \cite{ngo2017cell}, where a group of BSs collaborate to serve the users, by which a significant reduction in multi-user interference can be achieved and the so-called cell edge effect is greatly diminished\cite{wang2023full}.

	In order to maximize the BS cooperation gain for the cell-free downlink, the quality of the downlink channel state information (CSI) at the BS is crucial\cite{jin2019channel}. 
	For the time division duplexing (TDD) massive MIMO systems, it is relatively easy for the BS to acquire downlink CSI due to the reciprocity between the uplink and downlink channels. 
	However, for the frequency division duplexing (FDD) massive MIMO systems, to achieve the rate comparable to the systems with perfect CSI \cite{jindal2006mimo}, the number of pilots and especially the CSI feedback overhead are both proportional to the number of transmit antennas at the BS side \cite{lee2015antenna}, \cite{shen2015joint}.
	For this characteristic, most of the work on massive MIMO has focused on TDD. However, 
	FDD is still widely adopted in current 5G communication systems with more mature industrial products and market share, and many frequency bands are allocated for the FDD systems \cite{liang2016fdd}.
	In addition, since the TDD systems have the operation of constantly switching between uplink and downlink transmission, uplink access of users as well as instantaneous channel estimation at the BSs are challenging, whereas the FDD systems have the advantages of low transmission delay, continuous channel estimation, and broad range of coverage \cite{lee2015antenna}.
	Considering the realistic scenario, the TDD systems requires the reciprocity calibration of the transmitter and receiver radio frequency (RF) chains. Non-ideal calibration also creates the need for downlink channel estimation, and the feedback overhead cannot be avoided \cite{jiang2015achievable}.
	The CSI acquisition task is even more challenging in FDD cell-free massive MIMO systems since the users need to estimate and feed back the downlink CSI of multiple BSs. 

	Current studies mainly focus on the domain knowledge model to fully exploit the partial reciprocity of the uplink and downlink channels in the FDD system to reduce the feedback overhead\cite{yin2022partial,liu2022learning}.
	In typical environments, the relative permittivity and conductivity of the obstacle do not change significantly in the tens of gigahertz range, i.e., the reflection and deflection characteristics are almost identical. Therefore, the angle of departure (AoD) of the downlink channel is almost the same as the angle of arrival (AoA) of the uplink channel \cite{zhong2020fdd}.
	In \cite{abdallah2020efficient}, based on the discrete Fourier transform (DFT) operation and the log-likelihood function, a method for the multipath component estimation was proposed, which provides a significant enhancement over traditional approaches in terms of mean-squared-error (MSE) of the estimated AoA and large-scale fading coefficients.
	Compared to these statistical schemes, it is interesting to explore instantaneous CSI to realize higher cell-free performance gain.
	
	To reduce the feedback overhead of FDD cell-free massive MIMO systems, a partial reciprocity-based port-selection feedback framework was proposed in \cite{kim2020downlink} where the BSs obtain the multipath AoD of the downlink channel through uplink pilots, and select some dominant paths for the sum-rate maximization. The BSs send downlink precoded pilots from which the users obtain the gains of the selected paths and feed them back to the BSs. 
	The path selection problem is solved in an alternating way by determining a signal-to-leakage-and-noise-ratio (SLNR)-type precoding for given path selection and updating the path selection via removing the path index with the minimal impact on the SLNR.
	Large-scale-fading differences between different BS-User links and different paths on the same link are not considered.
	In \cite{kim2021energy}, a joint dominate path selection and power allocation algorithm was proposed for the energy efficiency maximization in millimeter wave cell-free systems, where the weighted sum of array responses for the selected paths is adopted as the precoder while the path selection is conducted in a non-heuristic manner via the formulation of the sparse support identification problem.

	Compared to the SLNR precoding, the signal-to-interference-plus-noise-ratio (SINR) maximization-orientated zero-forcing (ZF) precoding is more relevant to the sum-rate performance and preferred in the high signal-to-noise-ratio (SNR) or interference-limited scenarios\cite{wiesel2008zero}. 
	The subtractive port selection in \cite{kim2020downlink} results in a high overhead when the number of ports is large and the number of selected ports is relatively small.
	In addition, the difference among the average power of different paths \cite{you2015pilot} is not taken into account in the derivation of the path selection metric in the existing works.

In addition to the uplink and downlink partial reciprocity in the FDD system, there are other channel characteristics for cell-free systems that can be explored for the downlink CSI acquisition.
In \cite{cheng2012cooperative}, \cite{zhou2019geometry} and \cite{zhang2020novel}, the channel correlation of multiple collaborative BSs was investigated via geometric statistical channel modeling, with emphasis on the effect of the local scatterer density at the user side on this correlation. In \cite{zhou2019geometry}, the channel correlation of adjacent collaborative BSs in a high-speed-railway wireless communication scenario was verified based on real measurements. 
The channel correlation between adjacent users served by the same BS has been exploited to reduce the CSI feedback overhead for massive MIMO systems in \cite{guo2020deep}. The potential correlation between the channels of multiple collaborating BSs in cell-free MIMO systems has yet to be fully utilized.

	

In this paper, we study the joint-port-selection-based channel acquisition for FDD cell-free massive MIMO downlink. Different from existing works, we consider the ZF precoding, a more realistic channel spatial power profile where the average power of the channel varies in different port directions and the coefficient correlation between ports of the same BS and between ports of different BSs. These distinctions make the port selection and port coefficient feedback designs completely different. Our major contributions are summarized as follows.
\begin{itemize}
\item We propose a joint-port-selection-based CSI feedback and reconstruction scheme for FDD cell-free massive MIMO downlink with ZF precoding. 
The scheme uses an eigenvalue-decomposition-based transformation (EDT) approach to reduce the overhead of the port coefficient feedback by exploiting the correlation between the multi-BS port coefficients.
\item An expression is derived for the sum-rate of the FDD cell-free massive MIMO downlink with ZF precoding and port selection. The result shows explicitly the effect of the different average channel strengths for different ports and the possible correlation among the port coefficients, and enables the sum-rate maximization-orientated joint port selection.
\item We propose a low-complexity greedy-search-based joint port selection (GS-JPS) algorithm based on the derived sum-rate expression, which improves the search efficiency by rationally setting the update priority of users, BSs, and ports. In addition, a supervised deep learning (DL)-enhanced joint port selection (DL-JPS) algorithm is proposed, which can adapt to fast time-varying scenarios.
\item Simulations validate our derived closed-form sum-rate expression and demonstrate that the GS-JPS algorithm and the EDT port coefficient feedback approach work together to achieve a higher sum-rate compared to existing port-selection-based channel acquisition schemes, especially in the medium-to-high SNR scenario. 
Moreover, the DL-JPS algorithm can quickly obtain a port selection with comparable performance to that of the GS-JPS algorithm, showing its suitability for fast time-varying scenarios.
\end{itemize}


	The rest of this paper is organized as follows. In Section \ref{System Model}, we introduce a typical cell-free massive MIMO scenario with correlated channels and describe the proposed joint-port-selection-based CSI acquisition and feedback scheme. 
	In Section \ref{Sum-Rate Analysis}, an analytical sum-rate expression based on ZF precoding for arbitrary port selection is derived.
	In the following Section \ref{Port Selection Design Schemes},
	we formulate the optimization problem
	for port selection and design two algorithms, the GS-JPS algorithm and the DL-JPS algorithm, respectively.
	Detailed simulations and discussions are provided in Section \ref{Simulation and Discussion}. Section \ref{Conclusion} summarizes this work.

\emph{Notation}: In this paper, bold upper case letters and bold lower case letters denote matrices and vectors, respectively. 
		$\mathbb{{R}}^{m \times n}$, $\mathbb{{C}}^{m \times n}$ and $\mathbb{{B}}^{m \times n}$ denote the sets of $m$-by-$n$ matrices with real-valued entries, complex-valued entries, and binary-valued entries, respectively.
		$\mathbf{I}_{n}$ denotes the $n$-by-$n$ identity matrix. The
		conjugate transpose, transpose, trace, and determinant of $\mathbf{A}$ are denoted
		by $\mathbf{A}^{\rm{H}}$, $\mathbf{A}^{\rm{T}}$, ${\rm tr}\left\{ \mathbf{A}\right\} $ and $|\mathbf{A}|$.
		Also, $\left\Vert \mathbf{a}\right\Vert $ and
		$\left\Vert \mathbf{A}\right\Vert _{\rm{F}} $ denote the Euclidean norm of $\mathbf{a}$ and
		the Frobenius norm of $\mathbf{A}$, respectively. The vector $\mathbf{a}_m$ is the $m$-th column of the matrix $\mathbf{A}$.
		$\rm{diag}(\mathbf{a})$ is the diagonal matrix whose diagonal entries are elements of vector $\mathbf{a}$.
		${\rm{blk}}[\cdot]$ indicates the block-diagonal operator.
		$\mathbb{E}\left\{ \cdot\right\} $ 
		is the expected value operator and $f_{X}\left(\cdot\right)$
		denotes the probability density function
		(PDF) of the random variable
		$X$. 
		$\mathbb{{CN}}\left(\text{\ensuremath{\boldsymbol{\mu}}},\boldsymbol{\Sigma}\right)$ denotes the circularly symmetric complex Gaussian distribution with
		mean vector $\boldsymbol{\mu}$ and covariance matrix $\boldsymbol{\Sigma}$. 
		$\chi _2^2\left( a \right)$ denotes a chi-squared distribution with $2$ degrees of freedom and the noncentrality parameter $a$.
		Additionally, $\left|\mathbb{\mathbb{A}}\right|$ is the cardinality of the set $\mathbb{A}$. 
		$\mathbb{\mathbb{A}}-\mathbb{\mathbb{B}}$ denotes the set of elements in $\mathbb{\mathbb{A}}$ but not in $\mathbb{\mathbb{B}}$.

	\section{System Model and Proposed CSI Feedback and Reconstruction Scheme}\label{System Model}
	We consider a typical FDD cell-free massive MIMO scenario, where in the coverage area of interest, $B$ BSs each with $M$ antennas serve $U$ single-antenna users via the collaborative transmission provided by the central unit (CU) through fronthaul links. 
	
	\subsection{Channel Model}\label{Multiple-BS Channel Model}
	We consider the typical uniform array, e.g., uniform linear array (ULA) or uniform planar array (UPA), at the BSs and the massive MIMO configuration, i.e., $M\gg 1$.
	The beamspace channel representation becomes a natural choice \cite{brady2013beamspace}, \cite{zhang2018interleaved} where the spatial DFT matrix $\mathbf{F}\in \mathbb{C}^{M \times M}$ is adopted to relate the antenna space and the beam space. Specifically, the downlink channel vector from the BS $b\in \mathbb{B}=\{1,... ,B\}$ to the User $u\in \mathbb{U}=\{1,... ,U\}$, denoted as $\mathbf{h}_{b, u}\in \mathbb{C}^{M\times1}$, can be written as
	\begin{equation}\label{channelmodel}
		\mathbf{h}_{b, u} = \sqrt{M}\mathbf{F}\mathbf{B}_{b,u}\bar{\mathbf h}_{b,u},
	\end{equation}
where $\mathbf{B}_{b,u}={\rm diag}\left(\left [\sqrt{{\bar \beta}_{b,u,1}},...,\sqrt{{\bar \beta}_{b,u,M}}\right ] \right)$ with ${\bar \beta}_{b,u,m}$ being the average power of the channel in the beam/port $\mathbf{f}_m$, the $m$-th column of $\mathbf{F}$  and $\bar{\bf h}_{b,u}=\left[\bar{h}_{b,u,1},...\bar{h}_{b,u,M}\right]^{\text{T}}$ is the vector of port coefficients. 
	In existing works, the port coefficients are generally assumed to be independent of each other \cite{heath2016overview}. However, due to the existence of common scatterers covered by beams of different ports, as shown in Fig. \ref{figure1_1},
	\begin{figure}[htp!]
		\centering 
		\includegraphics[width=1\linewidth]{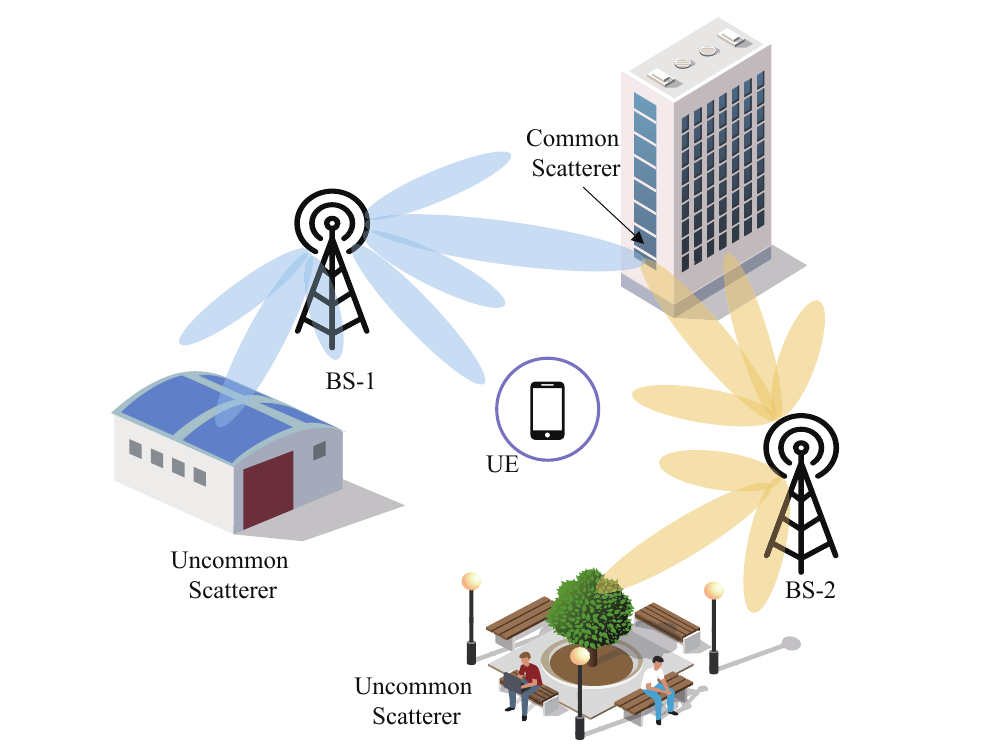}
		\caption{Schematic illustration of the existence of common scatterers covered by different beams.}
		\label{figure1_1}
	\end{figure}
	different port coefficients can be correlated.
	Therefore, the vector of the port coefficients of all BSs, denoted as  ${{{\bar{\mathbf h}}}_{u}}={{\left[ {{{{\bar{\mathbf h}}}}^{\rm{T}}_{1,u}},\ldots ,{{{{\bar{\mathbf h}}}}^{\rm{T}}_{B,u}} \right]}^{\rm{T}}}\in {{\mathbb{C}}^{BM\times 1}}$, follows
	$\mathbb{CN}\left(\mathbf{0},\mathbf{R}_{u} \right)$, where $\mathbf{R}_{u}$ is the covariance matrix of $\bar{\bf h}_u$. The $((b-1)M+l,(b'-1)M+l')$-th element of ${\bf R}_u$, denoted as $\rho_{u,b,b'}^{l,l'}$ or $\left[\mathbf{R}_{u}\right]_{(b-1)M+l,(b'-1)M+l'}$, is the correlation between the $l$-th port coefficient of BS $b$ and the $l'$-th port coefficient of BS $b'$ for User $u$. 
	Note that the coefficient correlation is constant in the statistical channel coherence time, which is typically much longer than the channel acquisition and feedback period.

	\subsection{Port-Selection-Based CSI Acquisition, Feedback, and Reconstruction}
	\label{Section_feedback_reconstruction}
	To reduce the overhead of CSI acquisition, one way is to select a small number of ports based on the downlink AoD and port statistics, and only estimate and feed back the channel coefficients of the selected ports.	
	The downlink channel statistics can be obtained with the help of the partial reciprocity between the uplink and downlink channels \cite{kim2020downlink}.
	
	Define $\Lambda_{b,u}=\{a_{b,u,1},...,a_{b,u,|\Lambda_{b,u}|}\}\subseteq\mathbb{M}=\{1,... ,M\}$ as the set of ports assigned to User $u$ by BS $b$ and $\Lambda^C_{b,u}=\mathbb{M}-\Lambda_{b,u}$. Define the set of selected ports and the set of remaining unselected ports of BS $b$ for users as ${{\Lambda }_{b}}=\underset{u\in \left\{ 1,\ldots, U \right\}}{\mathop{\bigcup }}\,{{\Lambda }_{b,u}}$ and $\Lambda^C_{b}=\mathbb{M}-\Lambda_{b}$, respectively.
	Define $\bar{\bf{h}}_{{\Lambda}_{b,u}}$ as the sub-vector of $\bar{\bf{h}}_{{b,u}}$ consisting of the elements of the set $\{{\bar h}_{b,u,m}, m\in{\Lambda_{b,u}}\}$. Equivalently, $\bar{\bf{h}}_{{\Lambda}_{b,u}} = {\bf{A}}_{b,u} \bar{\bf{h}}_{{b,u}}$ where the port selection matrix ${\bf{A}}_{b,u} \in \mathbb{B}^{|\Lambda_{b,u}| \times M}$ satisfies
	\begin{equation}
		\linespread{1.2} \selectfont
		[{\bf{A}}_{b,u}]_{i,j} = 	\begin{cases}
			1, &\text{if } j = a_{b,u,i}, \forall i=1,...,|\Lambda_{b,u}|  \\
			0, &\text{otherwise} \\
		\end{cases}.
	\end{equation}
	For example, for a system with $4$ antennas at each BS, if ports $2$ and $4$ of BS $b$ are selected for User $u$, we have
	\begin{equation}
		\linespread{1.2} \selectfont
		{\bf{A}}_{b,u} = \begin{bmatrix}
			0&1&0  &0 \\
			0& 0 &0  &1
		\end{bmatrix}, \text{and}\,\,
		\bar{\bf{h}}_{{\Lambda}_{b,u}} = 
		\begin{bmatrix}
			{\bar h}_{b,u,2}\\
			{\bar h}_{b,u,4}
		\end{bmatrix}.
	\end{equation}
	Define ${\bar{\mathbf{{h}}}_{\Lambda_u}}={{\left[ \bar{\mathbf{{h}}}_{\Lambda_{1,u}}^{\mathrm{T}},\ldots ,\bar{\mathbf{{h}}}_{\Lambda_{B,u}}^{\mathrm{T}} \right]}^{{\mathrm{T}}}}\in {{\mathbb{C}}^{K_u\times 1}}$ where $K_u=\sum\nolimits_{b=1}^{B}{\left| {{\Lambda }_{b,u}} \right|}$ is the total number of ports assigned to User $u$. Considering the possible correlation between port coefficients due to the common scatterers, we utilize the second-order covariance matrix of $ \bar{\bf{h}}_{{\Lambda}_{u}}$, i.e., 
	$\mathbf{R}_{{\Lambda}_{u}}=\mathbb{E}\left\{  \bar{\bf{h}}_{{\Lambda}_{u}} \bar{\bf{h}}_{{\Lambda}_{u}}^{{\mathrm{H}}} \right\} = {\bf{A}}_{u}{\bf{R}}_{u}{\bf{A}}^{\rm{H}}_{u}$
	as a priori knowledge, where ${\bf{A}}_{u} = {\rm{blk}}[{\bf{A}}_{1,u},\ldots,{\bf{A}}_{B,u}]$.
	In the following, we design an EDT approach that exploits the correlation among port coefficients to reduce their feedback overhead.
	
	 Under a general port selection $\Lambda_{b,u}$'s,
the BSs first send downlink pilots from the selected ports and the users perform the estimation of the downlink port coefficients to obtain ${\dot{\bar h}}_{b,u,m}, m\in{\Lambda_{b,u}}$. 
Denote ${\dot{\bar{\bf{h}}}}_{{\Lambda}_{b,u}}$ as the estimated vector of ${{\bar{\bf{h}}}}_{{\Lambda}_{b,u}}$.
For User $u \in \mathbb{U}$, by combining the estimated port coefficients ${\dot{\bar{\bf{h}}}}_{{\Lambda}_{b,u}}$ of all BSs into a vector, we obtain ${\dot{\bar{\bf{h}}}}_{{\Lambda}_{u}}= {{\left[ \dot{\bar{\mathbf{{h}}}}_{\Lambda_{1,u}}^{\mathrm{T}},\ldots ,\dot{\bar{\mathbf{{h}}}}_{\Lambda_{B,u}}^{\mathrm{T}} \right]}^{{\mathrm{T}}}}\in {{\mathbb{C}}^{K_u\times 1}}$.
Denote by ${{\breve{\bar{\bf{h}}}}}_{{\Lambda}_{u}} = {\bar{\mathbf{{h}}}_{\Lambda_u}} - {\dot{{\bar{\bf{h}}}}}_{{\Lambda}_{u}}$ the channel estimation error of User $u$. From
the minimum-mean-squared-error (MMSE) estimation property, ${{\breve{\bar{\bf{h}}}}}_{{\Lambda}_{u}}$ is independent of ${{\dot{\bar{\bf{h}}}}}_{{\Lambda}_{u}}$, and ${{\breve{\bar{\bf{h}}}}}_{{\Lambda}_{u}} \sim \mathbb{CN} (\mathbf{0}, \varepsilon_{\rm{CE}} ^2{ {{\bf{R}}_{\Lambda_u}}})$. Note that $\varepsilon_{\rm{CE}} ^2$ denotes the CSI estimation error variance, which is dependent on both the pilot energy and the port average power \cite{zhang2017performance,ngo2017total}.
 Denote the rank of $\mathbf{R}_{{\Lambda}_{u}}$ as $r_u$,
	and we consider the compact eigenvalue decomposition and eigenvalue decomposition
	\begin{equation}\label{equal3_16}
		\mathbf{R}_{{\Lambda}_{u}}={{\mathbf{U}}_{u,\text{r}}}\boldsymbol{\Sigma} _{u,\text{r}}{{\mathbf{U}}^{\mathrm{H}}_{u,\text{r}}}
		\,
			 ={{\bf{U}}_{u}}{\rm{diag}}\left\{\boldsymbol{\Sigma} _{u,\text{r}},{\bf{0}}\right\} {{\bf{U}}^{\rm{H}}_{u}}.
	\end{equation}
where ${\boldsymbol{\Sigma}_{u,\text{r}}}\in {{\mathbb{C}}^{r_u\times r_u}}$ contains the $r_u$ non-zero eigenvalues of $\mathbf{R}_{{\Lambda}_{u}}$ as its diagonal elements, ${{\mathbf{U}}_{u,\text{r}}}\in {{\mathbb{C}}^{K_u\times r_u}}$ is composed of eigenvectors with non-zero eigenvalues,
${{\bf{U}}^{c}_{u,{\rm{r}}}}$ is the unitary complement of ${{\bf{U}}_{u,{\rm{r}}}}$, that is, ${{\bf{U}}_{u}} \triangleq \left[{{\bf{U}}_{u,{\rm{r}}}},{{\bf{U}}^{c}_{u,{\rm{r}}}}\right]$ is a $K_u \times K_u$ unitary matrix.

	For the channel feedback, User $u$ first calculates the dimension-reduced transformation-domain port coefficient vector ${{\mathbf{{r}}}_{u,\text{r}}}\in {{\mathbb{C}}^{r_u\times 1}}$ as 
	\begin{equation}\label{equal3_18}
		\setlength{\abovedisplayskip}{2pt}%
		\setlength{\belowdisplayskip}{2pt}%
		\setlength{\abovedisplayshortskip}{0pt}%
		\setlength{\belowdisplayshortskip}{0pt}%
		{{\mathbf{{r}}}_{u,\text{r}}}=\boldsymbol{\Sigma} _{u,\text{r}}^{-1/2}\mathbf{U}_{u,\text{r}}^{\rm{H}}{\dot{\bar{\mathbf{{h}}}}_{\Lambda_u}},
	\end{equation}
	then quantizes ${{\mathbf{{r}}}_{u,\text{r}}}$ into ${{\mathbf{\overline{{r}}}}_{u,\text{r}}} ={{\mathbf{{r}}}_{u,\text{r}}}-{\tilde{\mathbf{{r}}}}_{u,\text{r}}$ by standard scalar quantization or vector quantization methods.
	Since ${{\mathbf{{r}}}_{u,\text{r}}} \sim \mathbb{CN}\left(\mathbf{0}, \left(1-\varepsilon_{\rm{CE}} ^2\right){ {{\bf{I}}_{r_u}}}\right)$, 
	 the quantization error vector can be approximated as ${\tilde{\mathbf{{r}}}}_{u,\text{r}} \sim (\mathbf{0}, \varepsilon_{\rm{Q}} ^2{ {{\bf{I}}_{r_u}}})$, where $\varepsilon_{\rm{Q}} ^2$ is the quantization error variance. ${\tilde{\mathbf{{r}}}}_{u,\text{r}}$ and ${{\mathbf{{r}}}_{u,\text{r}}}$ are independent \cite{wagner2012large}.
	 Finally, user feeds back the quantized values to the BS side.

For the CSI reconstruction, the BS side recovers the vector of the selected port coefficients  of all BSs for User $u$ 
${\hat{\bar{\mathbf{h}}}_{\Lambda_u}}$ as
\begin{equation}
	\setlength{\abovedisplayskip}{2pt}%
	\setlength{\belowdisplayskip}{2pt}%
	\setlength{\abovedisplayshortskip}{0pt}%
	\setlength{\belowdisplayshortskip}{0pt}%
	{\hat{\bar{\mathbf{h}}}_{\Lambda_u}}={{\mathbf{U}}_{u,\text{r}}}\boldsymbol{\Sigma }_{u,\text{r}}^{1/2}{{\mathbf{\overline{{r}}}}_{u,\text{r}}}.
\end{equation}
It can be further expressed as
\begin{equation} \label{eq_7}
	\begin{aligned}
		{\hat{\bar{\mathbf{h}}}_{\Lambda_u}}&={{\bf{U}}_{u,{\rm{r}}}}{\bf{U}}_{u,{\rm{r}}}^{\rm{H}}{\bar {\bf{h}} _{{\Lambda _u}}} - {{\bf{U}}_{u,{\rm{r}}}}{\bf{U}}_{u,{\rm{r}}}^{\rm{H}}{{\breve{\bar{\bf{h}}}}}_{{\Lambda}_{u}} - {{\mathbf{U}}_{u,\text{r}}}\boldsymbol{\Sigma }_{u,\text{r}}^{1/2}{\tilde{\mathbf{{r}}}}_{u,\text{r}},\\
		&\overset{(a)}{=}  {{\bar{\mathbf{h}}}_{\Lambda_u}} - {{\breve{\bar{\bf{h}}}}}_{{\Lambda}_{u}} - {{\mathbf{U}}_{u,\text{r}}}\boldsymbol{\Sigma }_{u,\text{r}}^{1/2}{\tilde{\mathbf{{r}}}}_{u,\text{r}}.
	\end{aligned}
\end{equation}
Since  ${{{\bar{\bf{h}}}}}_{{\Lambda}_{u}}$ follows $ \mathbb{CN} (\mathbf{0}, { {{\bf{R}}_{\Lambda_u}}})$, we have
\begin{equation} \label{R-4}
	\setlength{\abovedisplayskip}{2pt}%
	\setlength{\belowdisplayskip}{2pt}%
	\setlength{\abovedisplayshortskip}{2pt}%
	\setlength{\belowdisplayshortskip}{3pt}%
	\begin{aligned}
		{{\bf{U}}^{\rm{H}}_{u}}{\bar {\bf{h}} _{{\Lambda _u}}} = 
		\begin{pmatrix}
			{{\bf{U}}^{\rm{H}}_{u,{\rm{r}}}}{\bar {\bf{h}} _{{\Lambda _u}}}\\
			\left({{\bf{U}}^{c}_{u,{\rm{r}}}}\right)^{\rm{H}}{\bar {\bf{h}} _{{\Lambda _u}}}
		\end{pmatrix}
		\sim \mathbb{CN} \left({\bf{0}},{\rm{diag}}\left\{\boldsymbol{\Sigma} _{u,\text{r}},{\bf{0}}\right\}\right),
	\end{aligned}
\end{equation}
which shows that $\left({{\bf{U}}^{c}_{u,{\rm{r}}}}\right)^{\rm{H}}{\bar {\bf{h}} _{{\Lambda _u}}}$ is a random vector whose mean vector
and covariance matrix are both zero. Therefore, $\left({{\bf{U}}^{c}_{u,{\rm{r}}}}\right)^{\rm{H}}{\bar {\bf{h}} _{{\Lambda _u}}} = {\bf{0}}$.
From this and since ${\bf U}_u$ is unitary, we have,
\begin{equation}
	\setlength{\abovedisplayskip}{2pt}%
	\setlength{\belowdisplayskip}{3pt}%
	\setlength{\abovedisplayshortskip}{1pt}%
	\setlength{\belowdisplayshortskip}{2pt}%
	\begin{aligned}
		\left[	{{\bf{U}}_{u,{\rm{r}}}}{{\bf{U}}^{\rm{H}}_{u,{\rm{r}}}}+{{\bf{U}}^{c}_{u,{\rm{r}}}}\left({{\bf{U}}^{c}_{u,{\rm{r}}}}\right)^{\rm{H}}
		\right] {\bar {\bf{h}} _{{\Lambda _u}}}  &= {\bar {\bf{h}} _{{\Lambda _u}}}, \\
		\Rightarrow
		{{\bf{U}}_{u,{\rm{r}}}}{\bf{U}}_{u,{\rm{r}}}^{\rm{H}}{\bar {\bf{h}} _{{\Lambda _u}}} &= {\bar {\bf{h}} _{{\Lambda _u}}}.
	\end{aligned}
\end{equation}
With the same reasoning, we have ${{\bf{U}}_{u,{\rm{r}}}}{\bf{U}}_{u,{\rm{r}}}^{\rm{H}}{{\breve{\bar{\bf{h}}}}}_{{\Lambda}_{u}} = {{\breve{\bar{\bf{h}}}}}_{{\Lambda}_{u}}$.
These explain the equality $(a)$ in Eq. \eqref{eq_7}.

 Denote ${\tilde{\bar{\bf{h}}}}_{{\Lambda}_{u}} \triangleq {{\breve{\bar{\bf{h}}}}}_{{\Lambda}_{u}} + {{\mathbf{U}}_{u,\text{r}}}\boldsymbol{\Sigma }_{u,\text{r}}^{1/2}{\tilde{\mathbf{{r}}}}_{u,\text{r}}$ as the CSI error introduced by the port coefficient estimation and feedback quantization. 
Since  ${{\mathbf{U}}_{u,\text{r}}}\boldsymbol{\Sigma }_{u,\text{r}}^{1/2}{\tilde{\mathbf{{r}}}}_{u,\text{r}} \sim \mathbb{CN}({\bf{0}},\varepsilon_{\rm{Q}} ^2{ {{\bf{R}}_{\Lambda_u}}})$, it can be shown that ${\tilde{\bar{\bf{h}}}}_{{\Lambda}_{u}} \sim \mathbb{CN}({\bf{0}},{ {{\tilde{\bf{R}}}_{\Lambda_u}}})$, where ${ {{\tilde{\bf{R}}}_{\Lambda_u}}} = \varepsilon^2{ {{\bf{R}}_{\Lambda_u}}}$ and $\varepsilon^2 = \varepsilon_{\rm{CE}} ^2+ \varepsilon_{\rm{Q}} ^2 $ indicates the overall error level
\footnote{For simplicity, we assume the same level of CSI error at the selected ports, but it is straightforward to extend to the scenario where the error level is inhomogeneous.}.
The vector ${\hat{\bar{\mathbf{h}}}_{\Lambda_u}}$ can be decomposed as  $	{\hat{\bar{\mathbf{h}}}_{\Lambda_u}}= {{\left[ {\hat{\bar{\mathbf{h}}}^{\mathrm{T}}_{\Lambda_{1,u}}},\ldots ,{\hat{\bar{\mathbf{h}}}^{{\mathrm{T}}}_{\Lambda_{B,u}}} \right]}^{{\mathrm{T}}}}$ where ${\hat{\bar{\mathbf{h}}}_{\Lambda_{b,u}}}$ contains the selected port coefficients of BS $b$ for User $u$.
Similarly, ${\tilde{\bar{\mathbf{h}}}_{\Lambda_u}}$ can be decomposed as  $	{\tilde{\bar{\mathbf{h}}}_{\Lambda_u}}= {{\left[ {\tilde{\bar{\mathbf{h}}}^{\mathrm{T}}_{\Lambda_{1,u}}},\ldots ,{\tilde{\bar{\mathbf{h}}}^{{\mathrm{T}}}_{\Lambda_{B,u}}} \right]}^{{\mathrm{T}}}}$.
BS $b$ gets the recovered channel vector from BS $b$ to User $u$ as
\begin{equation}
		\setlength{\abovedisplayskip}{8pt}%
	\setlength{\belowdisplayskip}{8pt}%
	\setlength{\abovedisplayshortskip}{6pt}%
	\setlength{\belowdisplayshortskip}{6pt}%
\hat{\mathbf{h}}_{b,u}= \sqrt{M}
\mathbf{F}_{\Lambda_{b,u}}\mathbf{B}_{\Lambda_{b,u}} \mathbf{A}^{\rm{H}}_{{b,u}}{\hat{\bar{\mathbf{h}}}_{\Lambda_{b,u}}} ,
\end{equation}
 where 
 ${\mathbf{B}}_{\Lambda_{b,u}} = {\mathbf{A}}_{{b,u}} {\mathbf{B}}_{{b,u}}$ denotes the matrix consisting of the rows of $\mathbf{B}_{{b,u}}$ with indices belonging to $\Lambda_{b,u}$, and $\mathbf{F}_{\Lambda_{b,u}}$ is the matrix formed by $\mathbf{f}_m$, $m\in{\Lambda_{b,u}}$.

	\subsection{ZF Precoding with Recovered CSI and Transmission Model}
	Under the joint downlink transmission commonly used in cell-free massive MIMO systems, the received signal of User $u$ can be represented as
	\begin{equation}
		y_{u}=\sum_{b=1}^{B} \mathbf{h}_{b, u}^{\mathrm{H}} \mathbf{w}_{b, u} s_{u}+\sum_{v\neq u}^{U} \sum_{b=1}^{B} \mathbf{h}_{b, u}^{\mathrm{H}} \mathbf{w}_{b, v} s_{v}+n_{u},
	\end{equation}
where $\mathbf{w}_{b, u}\in \mathbb{C}^{M\times1}$ denotes the precoding vector of BS $b$ for User $u$, $s_{u}\sim \mathbb{CN}\left(0, 1\right)$ is the random data symbol for User $u$, and $n_u\sim \mathbb{CN}\left(0, \sigma_{n}^{2}\right)$ denotes the receiver additive noise. The precoding vectors $\mathbf{w}_{b, u}$'s are designed based on the CSI available at the BSs, i.e., $\hat{\mathbf{h}}_{b,u}$'s. Let $\mathbf{w}_u = \left[\mathbf{w}_{1,u}^{\mathrm{T}},...,\mathbf{w}_{B,u}^{\mathrm{T}}\right]^{\mathrm{T}}$, $\mathbf{W} = \left[\mathbf{w}_1,...,\mathbf{w}_{U}\right]$, $\hat{\mathbf{h}}_u = \left[\hat{\mathbf{h}}_{1,u}^{\mathrm{T}},...,\hat{\mathbf{h}}_{B,u}^{\mathrm{T}}\right]^{\mathrm{T}}$, and $\hat{\mathbf{H}} = \left[\hat{\mathbf{h}}_1,...,\hat{\mathbf{h}}_{U}\right]$.
	The ZF precoding is adopted in this work where
	\begin{equation}\label{eqn_ZF}
		\setlength{\abovedisplayskip}{1pt}%
		\setlength{\belowdisplayskip}{1pt}%
		\setlength{\abovedisplayshortskip}{0pt}%
		\setlength{\belowdisplayshortskip}{0pt}%
		\mathbf{W} = \hat{\mathbf{H}}\left(\hat{\mathbf{H}}^{\rm{H}}\hat{\mathbf{H}}\right)^{-1}\boldsymbol{\Sigma}.
	\end{equation}
	In the above ZF precoding design, $\boldsymbol{\Sigma}={\rm diag}\left(\left[\omega_1,... ,\omega_U\right]\right)$ is the power scaling matrix, where 
	$\omega_u = \frac{\sqrt{P_u}}{\sqrt{{\mathbb E}\{\left\Vert \bar{\mathbf{w}}_{u}\right\Vert ^{2}\}}}$, $\bar{\mathbf{W}}=\hat{\mathbf{H}}\left(\hat{\mathbf{H}}^{\rm{H}}\hat{\mathbf{H}}\right)^{-1}$ and $P_u$ is the transmit power allocated to User $u$. 
	Based on this precoding, the received signal of User $u$  can be rewritten as
	\begin{equation}
		\setlength{\abovedisplayskip}{1pt}%
		\setlength{\belowdisplayskip}{1pt}%
		\setlength{\abovedisplayshortskip}{0pt}%
		\setlength{\belowdisplayshortskip}{0pt}%
		y_{u} = \omega_u s_{u} + \sum_{v=1}^{U}\sum_{b=1}^{B} \tilde{\mathbf{h}}_{b, u}^{\mathrm{H}} \mathbf{w}_{b, v} s_{v} + n_{u},
	\end{equation}
	where $\tilde{\mathbf{h}}_{b,u}=\mathbf{h}_{b,u}-\hat{\mathbf{h}}_{b,u}$ is the vector of the CSI error. By exploiting the capacity bound of the fading channel with side information \cite[Sec. 2.3]{marzetta2016fundamentals}, the downlink achievable rate of User $u$ can be expressed as
	\begin{equation}\label{eqn1-6}
		\setlength{\abovedisplayskip}{2pt}%
		\setlength{\belowdisplayskip}{2pt}%
		\setlength{\abovedisplayshortskip}{0pt}%
		\setlength{\belowdisplayshortskip}{0pt}%
		{{R}_{u}}={{\log }_{2}}\left( 1+\frac{ \displaystyle\frac{{{P}_{u}}}{\mathbb{E}\left\{ {{\left\| {{\bar{\mathbf{w}}}_{u}} \right\|}^{2}} \right\}}}{\sum\limits_{v=1}^{U}\mathbb{E}\left\{ {{\left| \sum\limits_{b=1}^{B}{\tilde{\mathbf{h}}_{b,u}^{\rm{H}}{{\mathbf{w}}_{b,v}}} \right|}^{2}} \right\}+\sigma _{n}^{2}} \right),
	\end{equation}
	where $\frac{{{P}_{u}}}{\mathbb{E}\left\{ {{\left\| {{\bar{\mathbf{w}}}_{u}} \right\|}^{2}} \right\}}$  represents the desired signal power, and ${\sum_{v=1}^{U}\mathbb{E}\left\{ {{\left| \sum_{b=1}^{B}{\tilde{\mathbf{h}}_{b,u}^{\rm{H}}{{\mathbf{w}}_{b,v}}} \right|}^{2}} \right\}}$  indicates the beamforming uncertainty power \cite{ngo2017total} due to the ZF precoding with recovered CSI. And the sum-rate can be expressed as 
	\begin{equation}\label{sum-rate-v1}
		\setlength{\abovedisplayskip}{2pt}%
		\setlength{\belowdisplayskip}{2pt}%
		\setlength{\abovedisplayshortskip}{0pt}%
		\setlength{\belowdisplayshortskip}{0pt}%
		R_{\text {sum }}=\sum_{u=1}^U R_u.
	\end{equation}

	\section{Sum-Rate Analysis}
	\label{Sum-Rate Analysis}
	

In this section, we derive an analytical expression of $R_{\text{sum}}$, which is needed for the port selection optimization and the performance evaluation.

	Denote ${\tilde{\bar {\bf h}}}_{{b,u}} = \mathbf{A}^{\rm{H}}_{{b,u}}{\tilde{\bar {\bf h}}}_{\Lambda_{b,u}} = \left[\tilde{\bar{h}}_{b,u,1},\ldots,\tilde{\bar{h}}_{b,u,M}\right]^{\text{T}}$ as the CSI error vector of overall port coefficients of BS $b$ for User $u$, and denote ${\hat{\bar {\bf h}}}_{{b,u}} = \mathbf{A}^{\rm{H}}_{{b,u}}{\hat{\bar{\mathbf{h}}}_{\Lambda_{b,u}}} = \left[\hat{\bar{h}}_{b,u,1},\ldots,\hat{\bar{h}}_{b,u,M}\right]^{\text{T}}$ as the recovered CSI vector of overall port coefficients of BS $b$ for User $u$.
	Therefore, the reconstructed CSI at the BS is
	\begin{equation}\label{feedback_channel}
		\begin{aligned}
			\hat{\mathbf{h}}_{b, u}
			& = \sqrt{M}\mathbf{F}_{\Lambda_{b,u}}\mathbf{B}_{\Lambda_{b,u}}{\hat{\bar {\bf h}}}_{{b,u}}
		\end{aligned}
	\end{equation}
	for the channel vector from BS $b$ to User $u$. 
	
From Eq. \eqref{channelmodel} and Eq. \eqref{feedback_channel}, the vector of the CSI error can then be expressed as
	\begin{equation}
		\tilde{\mathbf{h}}_{b, u}=\sqrt{M}\mathbf{F}_{\Lambda^C_{b,u}}\mathbf{B}_{\Lambda^C_{b,u}}\bar{\mathbf h}_{b,u} + \sqrt{M} {{\bf{F}}_{{\Lambda _{b,u}}}}{{\bf{B}}_{{\Lambda _{b,u}}}}{\tilde{\bar {\bf h}}}_{b,u}.
	\end{equation}
Define ${{\tilde{\mathbf{h}}}_{u}}={{\left[ \tilde{\mathbf{h}}_{1,u}^{{\mathrm{T}}},\ldots ,\tilde{\mathbf{h}}_{B,u}^{{\mathrm{T}}} \right]}^{{\mathrm{T}}}}$. For the interference power term in Eq. \eqref{eqn1-6}, we have 
	\begin{equation}
		\setlength{\abovedisplayskip}{2pt}%
		\setlength{\belowdisplayskip}{2pt}%
		\setlength{\abovedisplayshortskip}{0pt}%
		\setlength{\belowdisplayshortskip}{0pt}%
		\begin{aligned}
			\mathbb{E}\left\{ {{\left| \sum\limits_{b=1}^{B}{\tilde{\mathbf{h}}_{b,u}^{\rm{H}}{{\mathbf{w}}_{b,v}}} \right|}^{2}} \right\}
			&\overset{(a)}{=} 
			{\mathbb{E}\left\{ |{{{\tilde{\mathbf{h}}}}^{\rm{H}}_{u}}\mathbf{{w}}_{v}|^2\right\}},\\
			&\overset{(b)}{=}
			\frac{\mathbb{E}\left\{ |{{{\tilde{\mathbf{h}}}}^{\rm{H}}_{u}}\mathbf{\bar{w}}_{v}|^2\right\}}{\mathbb{E}\left\{ {{\left\| {{{\mathbf{\bar{w}}}}_{v}} \right\|}^{2}} \right\}}{{P}_{v}}
		\end{aligned} 
	\end{equation} 
for $v \in \mathbb{U}$, 
	where $(a)$ follows from $ \sum\limits_{b=1}^{B}{\tilde{\mathbf{h}}_{b,u}^{\rm{H}}{{\mathbf{w}}_{b,v}}}=\tilde{\mathbf{h}}_{u}^{\rm{H}}{{\mathbf{w}}_{v}}$ and $(b)$ is from
	$ {{{\mathbf{{w}}}}_{v}} = \frac{\sqrt{P_v}{{{\mathbf{\bar{w}}}}_{v}}}
	{\sqrt{ \mathbb{E}\left \{ \left \| {{{\mathbf{\bar{w}}}}_{v}} \right \|^2  \right \}   }}$.
	Therefore, to derive $R_{\text {sum}}$, we need to derive  $\mathbb{E}\left\{ {{\left\| {{{{\bar{\mathbf w}}}}_{u}} \right\|}^{2}} \right\}$ and ${\mathbb{E}\left\{ |{{{\tilde{\mathbf{h}}}}^{\rm{H}}_{u}}\mathbf{\bar{w}}_{v}|^2\right\}}$.

		\begin{lemma}
		\label{lemma_0}
		\textit{${\hat {\mathbf{H}}^{\rm{H}}}\hat {\mathbf{H}}$ is a diagonal matrix and its $u$-th diagonal entry is
			\textcolor{black}{
		\begin{equation}\label{equal3_4}
			\left[\hat{\mathbf{H}}^{\mathrm{H}} \hat{\mathbf{H}}\right]_{u,u}= 
				M \sum_{b=1}^{B} \hat{\bar{\mathbf{h}}}_{b, u}^{\mathrm{H}} \mathbf{B}_{\Lambda_{b, u}}^{\mathrm{H}}\mathbf{B}_{\Lambda_{b, u}} \hat{\bar{\mathbf{h}}}_{b, u},\forall u \in \mathbb{U},
		\end{equation}}if no two users can share the same port from the same BS.}
	\end{lemma}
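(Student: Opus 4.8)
The plan is to compute the $(u,v)$-th entry of $\hat{\mathbf H}^{\rm H}\hat{\mathbf H}$ directly, showing it vanishes whenever $u\neq v$ and reduces to the claimed expression when $u=v$. Since $\hat{\mathbf h}_u=[\hat{\mathbf h}_{1,u}^{\rm T},\ldots,\hat{\mathbf h}_{B,u}^{\rm T}]^{\rm T}$ stacks the per-BS contributions, the inner product splits as a sum over BSs,
\begin{equation}
[\hat{\mathbf H}^{\rm H}\hat{\mathbf H}]_{u,v}=\hat{\mathbf h}_u^{\rm H}\hat{\mathbf h}_v=\sum_{b=1}^{B}\hat{\mathbf h}_{b,u}^{\rm H}\hat{\mathbf h}_{b,v},
\end{equation}
so it suffices to handle each term $\hat{\mathbf h}_{b,u}^{\rm H}\hat{\mathbf h}_{b,v}$ separately.

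First I would rewrite the reconstructed per-BS channel in full-dimensional form. Using $\hat{\bar{\mathbf h}}_{b,u}=\mathbf A_{b,u}^{\rm H}\hat{\bar{\mathbf h}}_{\Lambda_{b,u}}$, the $M\times 1$ vector supported on $\Lambda_{b,u}$, Eq. \eqref{feedback_channel} can be written equivalently as $\hat{\mathbf h}_{b,u}=\sqrt M\,\mathbf F\mathbf B_{b,u}\hat{\bar{\mathbf h}}_{b,u}$, because the columns of $\mathbf F$ and rows of $\mathbf B_{b,u}$ indexed outside $\Lambda_{b,u}$ multiply the zero entries of $\hat{\bar{\mathbf h}}_{b,u}$. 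Substituting this and invoking the orthonormality of the DFT basis, $\mathbf F^{\rm H}\mathbf F=\mathbf I_M$, gives
\begin{equation}
\hat{\mathbf h}_{b,u}^{\rm H}\hat{\mathbf h}_{b,v}=M\,\hat{\bar{\mathbf h}}_{b,u}^{\rm H}\mathbf B_{b,u}^{\rm H}\mathbf B_{b,v}\hat{\bar{\mathbf h}}_{b,v}.
\end{equation}
Since $\mathbf B_{b,u}^{\rm H}\mathbf B_{b,v}$ is diagonal, this equals $M\sum_{m=1}^{M}\sqrt{\bar\beta_{b,u,m}\bar\beta_{b,v,m}}\,\hat{\bar h}_{b,u,m}^{*}\hat{\bar h}_{b,v,m}$.

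The decisive step is the off-diagonal case $u\neq v$. The factor $\hat{\bar h}_{b,u,m}^{*}\hat{\bar h}_{b,v,m}$ is nonzero only when the port index $m$ lies in $\Lambda_{b,u}\cap\Lambda_{b,v}$, because $\hat{\bar h}_{b,u,m}=0$ for $m\notin\Lambda_{b,u}$. The hypothesis that no two users share a port from the same BS means precisely $\Lambda_{b,u}\cap\Lambda_{b,v}=\emptyset$ for $u\neq v$, so every term in the sum vanishes and hence $[\hat{\mathbf H}^{\rm H}\hat{\mathbf H}]_{u,v}=0$, establishing diagonality. For the diagonal case $u=v$, I would reintroduce the selection matrix via $\mathbf A_{b,u}^{\rm H}\mathbf A_{b,u}\hat{\bar{\mathbf h}}_{b,u}=\hat{\bar{\mathbf h}}_{b,u}$ (the coefficients being supported on $\Lambda_{b,u}$) to rewrite $\mathbf B_{b,u}^{\rm H}\mathbf B_{b,u}$ as $\mathbf B_{\Lambda_{b,u}}^{\rm H}\mathbf B_{\Lambda_{b,u}}=\mathbf B_{b,u}^{\rm H}\mathbf A_{b,u}^{\rm H}\mathbf A_{b,u}\mathbf B_{b,u}$, yielding the stated Eq. \eqref{equal3_4}. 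The only genuine subtlety, rather than an obstacle, is bookkeeping the two equivalent representations of $\hat{\mathbf h}_{b,u}$ (the reduced form with $\mathbf F_{\Lambda_{b,u}},\mathbf B_{\Lambda_{b,u}}$ versus the full form with $\mathbf F,\mathbf B_{b,u}$), and being explicit that the DFT orthogonality kills the cross-port terms while the disjoint-support hypothesis kills the surviving same-port terms.
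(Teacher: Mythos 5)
Your proof is correct and follows essentially the same route as the paper: expand $[\hat{\mathbf H}^{\rm H}\hat{\mathbf H}]_{u,v}$ as a per-BS sum of inner products and eliminate the $u\neq v$ terms using DFT orthonormality together with the hypothesis $\Lambda_{b,u}\cap\Lambda_{b,v}=\emptyset$. The only cosmetic difference is bookkeeping: the paper stays in the reduced representation and gets $\mathbf F_{\Lambda_{b,u}}^{\rm H}\mathbf F_{\Lambda_{b,v}}=\mathbf 0$ directly for disjoint index sets, whereas you lift to the full $M$-dimensional form, apply $\mathbf F^{\rm H}\mathbf F=\mathbf I_M$ first, and then let the disjoint supports of the zero-padded coefficient vectors annihilate the cross terms --- the same two facts invoked in a different order.
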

	\begin{proof}
	According to the definition of $\hat {\mathbf{H}}$, we have
		\begin{equation}
		\left[\hat{\mathbf{H}}^{\mathrm{H}} \hat{\mathbf{H}}\right]_{u,v}=\sum_{b=1}^{B}\hat{\mathbf{h}}^{\mathrm{H}}_{b,u}\hat{\mathbf{h}}_{b,v}, \forall u, v\in \mathbb{U}.
	\end{equation}	
	 From Eq. \eqref{feedback_channel},
	 \textcolor{black}{
	\begin{equation}\label{channel_feedback_inner_prod}
	\hat{\mathbf{h}}^{\mathrm{H}}_{b,u}\hat{\mathbf{h}}_{b,v}=M \hat{\bar{\mathbf{h}}}_{b, u}^{\mathrm{H}} \mathbf{B}_{\Lambda_{b, u}}^{\mathrm{H}}\mathbf{F}_{\Lambda_{b, u}}^{\mathrm{H}} \mathbf{F}_{\Lambda_{b, v}} \mathbf{B}_{\Lambda_{b, v}} \hat{\bar{\mathbf{h}}}_{b, v}.
	\end{equation}}Since ${{\Lambda }_{b,u}}\cap {{\Lambda }_{b,v}}=\emptyset ,\forall u\ne v\in \mathbb{U}$, we have 
\begin{equation}
	\linespread{1.5} \selectfont
\mathbf{F}_{\Lambda_{b, u}}^{\mathrm{H}} \mathbf{F}_{\Lambda_{b, v}} = \left\{ \begin{array}{cc}
	{\bf{I}}_{|\Lambda_{b,u}|}, &{\text{if }}\, u=v \\
	{\bf{0}}, &{\text{if }}\, u \neq v
	\end{array}
\right..
\end{equation}
This leads to Eq. \eqref{equal3_4} and the conclusion that ${\hat {\mathbf{H}}^{\rm{H}}}\hat {\mathbf{H}}$ is a diagonal matrix.
	\end{proof}
	
	 Lemma \ref{lemma_0} shows that via allocating each BS port to at most one user, i.e., for each $b\in \mathbb{B}$, the sets $\Lambda_{b,1},\ldots,\Lambda_{b,U}$ are mutually exclusive, the acquired multiple-BS channel vectors for the users $\hat{\mathbf{h}}_{u}$'s become orthogonal to each other. 
	 For the subsequent sum-rate analysis and port selection scheme, we consider this constraint ${{\Lambda }_{b,u}}\cap {{\Lambda }_{b,v}}=\emptyset,\forall u\ne v\in \mathbb{U}$ 
	  \footnote{\textcolor{black}{With a large number of antennas at the BS, the probability of overlapping between the strongest ports of different users tends to be small. By avoiding beam collisions, it is possible to select the strongest ports for users with the guarantee of inter-user channel orthogonality \cite[Sec. V-B]{alkhateeb2015limited}.}}.
	  

	 	From ${\left\| {{{\mathbf{\bar{w}}}}_{u}} \right\|}^{2}={\left[ {{{\mathbf{\bar{W}}}}^{\rm{H}}}\mathbf{\bar{W}} \right]}_{u,u}$ we have
	 \begin{equation}\label{equal3_2}
	 	\setlength{\abovedisplayskip}{8pt}%
	 	\setlength{\belowdisplayskip}{8pt}%
	 	\setlength{\abovedisplayshortskip}{4pt}%
	 	\setlength{\belowdisplayshortskip}{4pt}%
	 	\begin{aligned}
	 		{\left\| {{{{\bf{\bar w}}}_u}} \right\|^2} = {\left[ {{{\left( {{{\hat {\bf{H}}}^{\rm{H}}}\hat {\bf{H}}} \right)}^{ - 1}}} \right]_{u, u}}.
	 	\end{aligned}
	 \end{equation} 
	 Based on Lemma \ref{lemma_0}, this gives the following result:
	 \textcolor{black}{
	 \begin{equation}\label{equal3_5}
	 	\setlength{\abovedisplayskip}{8pt}%
	 	\setlength{\belowdisplayskip}{8pt}%
	 	\setlength{\abovedisplayshortskip}{4pt}%
	 	\setlength{\belowdisplayshortskip}{4pt}%
	 	{\left\| {{{{\bf{\bar w}}}_u}} \right\|^2} = \frac{1}{M}{\left( {\sum\limits_{b = 1}^B {\sum\limits_{k \in {\Lambda _{b,u}}} {{{\bar \beta }_{b,u,k}}{{\left| {{\hat{\bar h}_{b,u,k}}} \right|}^2}} } } \right)^{ - 1}}.
	 \end{equation}  }
	 
	 Define ${{\bf{B }}_{\Lambda_u}} = {\rm{blk}}[{{\bf{B }}_{\Lambda_{1,u}}},\ldots,{{\bf{B }}_{\Lambda_{B,u}}}]$, ${ {\hat{\bf{R}}_{\Lambda_u}}} = { {{\bf{R}}_{\Lambda_u}}} -{ {\tilde{\bf{R}}_{\Lambda_u}}} = \left(1-\varepsilon ^2\right){ {{\bf{R}}_{\Lambda_u}}}$, and we provide an exact analytical expression of $\mathbb{E}\left\{ {{\left\| {{{{\bar{\mathbf w}}}}_{u}} \right\|}^{2}} \right\}$ in the following Lemma \ref{lemma_1}.
	 

		\begin{lemma}
			\label{lemma_1}
			\textit{Define 
				\textcolor{black}{
			\begin{equation}\label{S_u}
			{{\bf{S}}_u} \triangleq \left(1-\varepsilon ^2\right) {\bf{R}}_{\Lambda_u}^{{\rm{1}}/{2}}
			{{\bf{B }}_{\Lambda_u}}{{\bf{B }}^{\rm{H}}_{\Lambda_u}} {\bf{R}}_{\Lambda_u}^{{\rm{1}}/{2}}.
			\end{equation}}Denote the rank of ${{\bf{S}}_u}$ as $\rho_u$ and denote the positive eigenvalues of ${{\bf{S}}_u}$ as ${\lambda _{u,1}},\ldots,{\lambda _{u,{\rho _u}}}$.
			Given any port selection ${\Lambda }_{b,u}$'s, the average port power ${{\bar{\beta }}}_{b,u,l}$'s, and the port coefficient correlation  $\rho _{u,b,b'}^{l,l'}$'s, $\forall u \in \mathbb{U}$, $ b,b'\in \mathbb{B}$, $l\in \Lambda_{b,u} $, $l'\in \Lambda_{b',u} $, for ${\rho _u} > 1$, we have 
			\begin{equation}
				\label{equal_lemma1}
				\mathbb{E}\left\{ {{\left\| {{{\mathbf{\bar{w}}}}_{u}} \right\|}^{2}} \right\} = \frac{1}{M}\sum\limits_{k = 0}^\infty  {\frac{{{\alpha_{u,k}}}}{{2\beta_{u} \left( {\rho_{u}  + k - 1} \right)}}}
			\end{equation}
			for $u\in \mathbb{U}$, where 
			\begin{equation}
			\begin{small}
				\label{para_lemma1}
				\begin{array}{ll}
					{\beta _u} = \displaystyle \frac{{{\rho _u}}}{{2\sum\limits_{j = 1}^{{\rho _u}} {\lambda _{u,j}^{ - 1}} }}, & {\alpha_{u,0}} = \displaystyle \prod\limits_{j = 1}^{{\rho _u}} {\frac{{2{\beta _u}}}{{{\lambda _{u,j}}}}} ,  \\
					{b_{u,k}} = 2 \displaystyle\sum\limits_{j = 1}^{{\rho _u}} {{{\left( {1 - \frac{2 \beta_{u} }{{{\lambda _{u,j}}}}} \right)}^k}} , & {\alpha_{u,k}} = \displaystyle \frac{1}{{2k}}\sum\limits_{r = 0}^{k - 1} {{b_{u,k - r}}{\alpha_{u,r}}}, \forall k\ge 1.
				\end{array}
		\end{small}  	\end{equation} 
		}
		\end{lemma}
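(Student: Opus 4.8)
My plan is to reduce $\mathbb{E}\{\|\bar{\mathbf{w}}_u\|^2\}$ to the expected reciprocal of a weighted sum of independent exponentials, and then evaluate that expectation through a series representation of the corresponding density. Starting from Eq. \eqref{equal3_5}, I would write $\|\bar{\mathbf{w}}_u\|^2=\frac{1}{M}Q_u^{-1}$ with $Q_u\triangleq\sum_{b=1}^{B}\sum_{k\in\Lambda_{b,u}}\bar\beta_{b,u,k}|\hat{\bar h}_{b,u,k}|^2$, and recognize this scalar as the Hermitian quadratic form $Q_u=\hat{\bar{\mathbf{h}}}_{\Lambda_u}^{\mathrm{H}}\mathbf{B}_{\Lambda_u}\mathbf{B}_{\Lambda_u}^{\mathrm{H}}\hat{\bar{\mathbf{h}}}_{\Lambda_u}$, since $\mathbf{B}_{\Lambda_u}\mathbf{B}_{\Lambda_u}^{\mathrm{H}}$ is diagonal with entries $\bar\beta_{b,u,k}$. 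Because $\hat{\bar{\mathbf{h}}}_{\Lambda_u}\sim\mathbb{CN}(\mathbf{0},(1-\varepsilon^2)\mathbf{R}_{\Lambda_u})$ (from $\hat{\mathbf{R}}_{\Lambda_u}=(1-\varepsilon^2)\mathbf{R}_{\Lambda_u}$), I would whiten it as $\hat{\bar{\mathbf{h}}}_{\Lambda_u}=\sqrt{1-\varepsilon^2}\,\mathbf{R}_{\Lambda_u}^{1/2}\mathbf{z}$ with $\mathbf{z}\sim\mathbb{CN}(\mathbf{0},\mathbf{I})$, turning $Q_u$ into $\mathbf{z}^{\mathrm{H}}\mathbf{S}_u\mathbf{z}$ with $\mathbf{S}_u$ exactly as in \eqref{S_u}. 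Diagonalizing $\mathbf{S}_u$ and invoking the unitary invariance of the standard complex Gaussian then yields $Q_u=\sum_{j=1}^{\rho_u}\lambda_{u,j}|g_j|^2$, where the $|g_j|^2$ are i.i.d. unit-mean exponential (equivalently $\tfrac{1}{2}\chi_2^2$) variables.

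The core step is to obtain the density of $Q_u$, a weighted sum of independent exponentials. I would expand its moment generating function $M_{Q_u}(t)=\prod_{j=1}^{\rho_u}(1-\lambda_{u,j}t)^{-1}$ as an infinite mixture of Gamma (scaled $\chi^2$) MGFs with a common scale $2\beta_u$, namely $M_{Q_u}(t)=\sum_{k\ge0}\alpha_{u,k}(1-2\beta_u t)^{-(\rho_u+k)}$. Factoring out $(1-2\beta_u t)^{-\rho_u}$ and substituting $y=(1-2\beta_u t)^{-1}$ reduces the identification to matching $\prod_j\frac{2\beta_u}{\lambda_{u,j}}\prod_j\big(1-(1-\tfrac{2\beta_u}{\lambda_{u,j}})y\big)^{-1}=\sum_k\alpha_{u,k}y^k$; reading off the constant term gives $\alpha_{u,0}$, and logarithmic differentiation of the product gives precisely the recursion for $\alpha_{u,k}$ in terms of the power sums $b_{u,k}$. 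Inverting the mixture term by term then produces $f_{Q_u}(q)=\sum_{k\ge0}\alpha_{u,k}\,\frac{q^{\rho_u+k-1}e^{-q/(2\beta_u)}}{\Gamma(\rho_u+k)(2\beta_u)^{\rho_u+k}}$. I would also note that the harmonic-mean choice $2\beta_u=\rho_u/\sum_j\lambda_{u,j}^{-1}$ is exactly the value forcing $b_{u,1}=0$, so that the first correction term vanishes and the expansion is well centered.

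Finally I would integrate the density against $q^{-1}$. Interchanging summation and integration and using $\int_0^\infty q^{\rho_u+k-2}e^{-q/(2\beta_u)}\,dq=\Gamma(\rho_u+k-1)(2\beta_u)^{\rho_u+k-1}$ collapses each term to $\frac{\alpha_{u,k}}{2\beta_u(\rho_u+k-1)}$, which gives $\mathbb{E}\{Q_u^{-1}\}=\sum_{k\ge0}\frac{\alpha_{u,k}}{2\beta_u(\rho_u+k-1)}$ and hence \eqref{equal_lemma1} after multiplication by $1/M$. The gamma integral converges at the origin only when $\rho_u+k-1>0$ for the leading $k=0$ term, i.e., $\rho_u>1$; this is precisely where the hypothesis enters, and it is also necessary, since for $\rho_u=1$ the form $Q_u$ is a single exponential and $\mathbb{E}\{Q_u^{-1}\}$ diverges.

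I expect the main obstacle to be the density expansion together with the justification of the term-by-term inversion and integration: establishing the series representation of a weighted sum of exponentials, verifying that the mixture coefficients obey the stated recursion, and controlling convergence of the resulting series, which is governed by the eigenvalue spread of $\mathbf{S}_u$ relative to the reference scale $2\beta_u$. By contrast, the quadratic-form reduction, the whitening, and the final Gamma integral are routine.
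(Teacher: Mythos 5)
Your proposal is correct and follows essentially the same route as the paper's Appendix A: both use the diagonal Gram structure from Lemma \ref{lemma_0} to reduce $\mathbb{E}\{\|\bar{\mathbf{w}}_{u}\|^{2}\}$ to $\frac{1}{M}\mathbb{E}\{\zeta_u^{-1}\}$ via Eq. \eqref{equal3_5}, whiten $\hat{\bar{\mathbf{h}}}_{\Lambda_u}\sim\mathbb{CN}(\mathbf{0},(1-\varepsilon^2)\mathbf{R}_{\Lambda_u})$ to recognize $\zeta_u$ as a quadratic form with matrix $\mathbf{S}_u$, diagonalize to a $\chi_2^2$-weighted eigenvalue sum, and integrate $x^{-1}$ term by term against the same series density with identical parameters $\beta_u$, $\alpha_{u,k}$, $b_{u,k}$, invoking $\rho_u>1$ for convergence of the leading Gamma integral exactly as the paper does. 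The only difference is that the paper imports the series density directly from Provost and Rudiuk \cite[Eq.~(2.4)]{provost1996exact}, whereas you re-derive it through the Gamma-mixture MGF expansion (including the observation that the harmonic-mean choice of $2\beta_u$ forces $b_{u,1}=0$), which is precisely the content of that citation rather than a different method.
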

		\begin{proof}
			See Appendix A.		
		\end{proof}
		
		Next, we deal with ${\mathbb{E}\left\{ |{{{\tilde{\mathbf{h}}}}^{\rm{H}}_{u}}\mathbf{\bar{w}}_{v}|^2\right\}}$. 
		Define ${{\bf{B }}^{u}_{\Lambda_v}} = {\rm{blk}}[{\mathbf{A}}_{{1,v}} {\mathbf{B}}_{{1,u}},\ldots,{\mathbf{A}}_{{B,v}} {\mathbf{B}}_{{B,u}}]$.
		An approximate closed-form expression of ${\mathbb{E}\left\{ |{{{\tilde{\mathbf{h}}}}^{\rm{H}}_{u}}\mathbf{\bar{w}}_{v}|^2\right\}}$ is given in Lemma \ref{lemma2}.
		\begin{lemma}
			\label{lemma2}
				\textit{Define 
					\textcolor{black}{
			\begin{equation}\label{S_UV}
			\begin{tiny}
				{{\bf{S}}_{u,v}} \triangleq \left(1-\varepsilon ^2\right)
				{ {\bf{R}}_{\Lambda_v}^{1/2}}{\bf{B}}_{{\Lambda _v}}^u{\bf{B}}_{{\Lambda _v}}^{u,{\rm{H}}}
				{ {\bar{\bf{R}}^{u}_{\Lambda_v}}} 
				{{\bf{B}}_{{\Lambda _v}}}{\bf{B}}_{{\Lambda _v}}^{\rm{H}}{ {\bf{R}}_{\Lambda_v}^{1/2}},\\ 
			\end{tiny}
	\end{equation} 
where  
 \begin{equation}\label{S_UV_paras}
 	\linespread{1.2} \selectfont
 	{ {\bar{\bf{R}}^{u}_{\Lambda_v}}} \triangleq  \left\{ \begin{array}{cc}
 	\varepsilon ^2{ {{\bf{R}}_{\Lambda_u}}}, &{\text{if }}\, u=v \\
 		{ {{\bf{I}}_{K_v}}}, &{\text{if }}\, u \neq v
 	\end{array}
 	\right..
\end{equation}}Given any port selection ${\Lambda }_{b,u}$'s, the average port power ${{\bar{\beta }}}_{b,u,l}$'s, and the port coefficient correlation  $\rho _{u,b,b'}^{l,l'}$'s, $\forall u \in \mathbb{U}$, $ b,b'\in \mathbb{B}$, $l\in \Lambda_{b,u} $, $l'\in \Lambda_{b',u} $, for $\rho_v > 2$, we have 
			\textcolor{black}{
				\begin{equation}\label{lemma2_appro} 
					\begin{aligned}
						&{\mathbb{E}\left\{ |{{{\tilde{\mathbf{h}}}}^{\rm{H}}_{u}}\mathbf{\bar{w}}_{v}|^2\right\}} \approx \frac{{\rm{tr}}\left( {{\bf{S}}_{u,v}} \right)}{{{\left| {\rm{tr}}\left( {{\bf{S}}_{v}}  \right) \right|}^{2}}+\left\| {{\mathbf{S }}_{v}} \right\|_{\rm{F}}^{2}} + {{\delta }_{u,v}}{{\eta }_{v}},
					\end{aligned}
			 	\end{equation}}where ${{\delta }_{u,v}}$ is shown as Eq. \eqref{delta} at the bottom of the next page, 
				and
				\setcounter{equation}{31}
				\textcolor{black}{\begin{equation}\label{eta}
					{{\eta }_{v}}=\frac{1}{M^2}\sum\limits_{k=0}^{\infty }{\frac{{{\alpha}_{v,k}}}{{{\left( 2{{\beta }_{v}} \right)}^{2}}\left( {{\rho }_{v}}+k-1 \right)\left( {{\rho }_{v}}+k-2 \right)}},\,{\text{for}}\,\, \rho_v > 2.
				\end{equation}}with ${{\alpha }_{v,k}}$, $\beta_v$ and $\rho_v$ being defined in Eq. \eqref{para_lemma1}.
			}
		\end{lemma}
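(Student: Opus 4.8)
The plan is to combine Lemma \ref{lemma_0} with a whitening of the reconstructed CSI so as to reduce $\mathbb{E}\{|\tilde{\mathbf{h}}^{\rm{H}}_{u}\bar{\mathbf{w}}_{v}|^2\}$ to the expectation of a ratio of two quadratic forms in one standard complex Gaussian vector. Since Lemma \ref{lemma_0} makes $\hat{\mathbf{H}}^{\rm{H}}\hat{\mathbf{H}}$ diagonal, the $v$-th column of $\bar{\mathbf{W}}=\hat{\mathbf{H}}(\hat{\mathbf{H}}^{\rm{H}}\hat{\mathbf{H}})^{-1}$ is simply $\bar{\mathbf{w}}_v=\hat{\mathbf{h}}_v/\|\hat{\mathbf{h}}_v\|^2$, so $|\tilde{\mathbf{h}}^{\rm{H}}_{u}\bar{\mathbf{w}}_{v}|^2=|\tilde{\mathbf{h}}^{\rm{H}}_{u}\hat{\mathbf{h}}_v|^2/\|\hat{\mathbf{h}}_v\|^4$. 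Writing $\hat{\bar{\mathbf{h}}}_{\Lambda_v}=\sqrt{1-\varepsilon^2}\,\mathbf{R}_{\Lambda_v}^{1/2}\mathbf{z}$ with $\mathbf{z}\sim\mathbb{CN}(\mathbf{0},\mathbf{I})$ and invoking Eq. \eqref{equal3_5}, the denominator becomes $\|\hat{\mathbf{h}}_v\|^4=M^2(\mathbf{z}^{\rm{H}}\mathbf{S}_v\mathbf{z})^2$, with $\mathbf{S}_v$ as in Eq. \eqref{S_u}.

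Next I would expand the numerator. Substituting the CSI-error decomposition $\tilde{\mathbf{h}}_{b,u}=\sqrt{M}\mathbf{F}_{\Lambda^C_{b,u}}\mathbf{B}_{\Lambda^C_{b,u}}\bar{\mathbf{h}}_{b,u}+\sqrt{M}\mathbf{F}_{\Lambda_{b,u}}\mathbf{B}_{\Lambda_{b,u}}\tilde{\bar{\mathbf{h}}}_{b,u}$ and using $\mathbf{f}_m^{\rm{H}}\mathbf{f}_{m'}=\delta_{m,m'}$ together with the non-overlap constraint $\Lambda_{b,u}\cap\Lambda_{b,v}=\emptyset$, only one term of $\tilde{\mathbf{h}}_{b,u}$ survives the inner product with $\hat{\mathbf{h}}_{b,v}$ in each case: for $u=v$ it is the estimation/quantization error $\tilde{\bar{\mathbf{h}}}_{\Lambda_v}$ on the selected ports, while for $u\neq v$ it is the true coefficient of User $u$ on the ports $\Lambda_{b,v}$, which User $u$ has not selected. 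Both collapse to a bilinear form of the type $\mathbf{c}^{\rm{H}}(\cdot)\,\hat{\bar{\mathbf{h}}}_{\Lambda_v}$ in which $\mathbf{c}$ (the quantization/estimation error for $u=v$, the interfering channel for $u\neq v$) is independent of $\mathbf{z}$ and has second-order statistics encoded by $\bar{\mathbf{R}}^u_{\Lambda_v}$ of Eq. \eqref{S_UV_paras}. Taking the expectation over $\mathbf{c}$ first unifies the two cases into $\mathbb{E}_{\mathbf{c}}\{|\tilde{\mathbf{h}}^{\rm{H}}_u\hat{\mathbf{h}}_v|^2\}=M^2\,\mathbf{z}^{\rm{H}}\mathbf{S}_{u,v}\mathbf{z}$, and after the $M^2$ factors cancel the target becomes $\mathbb{E}_{\mathbf{z}}\{\mathbf{z}^{\rm{H}}\mathbf{S}_{u,v}\mathbf{z}/(\mathbf{z}^{\rm{H}}\mathbf{S}_v\mathbf{z})^2\}$.

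The final step is to approximate this ratio expectation. Its leading term is the ratio of the separate expectations: using $\mathbb{E}\{\mathbf{z}^{\rm{H}}\mathbf{S}_{u,v}\mathbf{z}\}={\rm tr}(\mathbf{S}_{u,v})$ and $\mathbb{E}\{(\mathbf{z}^{\rm{H}}\mathbf{S}_v\mathbf{z})^2\}=|{\rm tr}(\mathbf{S}_v)|^2+\|\mathbf{S}_v\|_{\rm F}^2$ gives exactly the first term of Eq. \eqref{lemma2_appro}, which would be exact if numerator and denominator were statistically independent. The term $\delta_{u,v}\eta_v$ corrects for their residual dependence through the shared vector $\mathbf{z}$, where $\eta_v=\frac{1}{M^2}\mathbb{E}\{(\mathbf{z}^{\rm{H}}\mathbf{S}_v\mathbf{z})^{-2}\}$ is the second negative moment of the same quadratic form studied in Lemma \ref{lemma_1}. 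I would obtain its series form Eq. \eqref{eta} by repeating the generating-function/Laplace argument of Appendix A on $Q^{-2}$ rather than $Q^{-1}$, reusing the coefficients $\alpha_{v,k},\beta_v$ of Eq. \eqref{para_lemma1}; the finiteness of $\mathbb{E}\{Q^{-2}\}$ is precisely what forces $\rho_v>2$.

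The main obstacle is pinning down $\delta_{u,v}$, i.e. quantifying the numerator/denominator correlation. I would diagonalize $\mathbf{S}_v=\mathbf{U}\boldsymbol{\Lambda}\mathbf{U}^{\rm{H}}$, set $\mathbf{y}=\mathbf{U}^{\rm{H}}\mathbf{z}$, note the off-diagonal entries of $\mathbf{U}^{\rm{H}}\mathbf{S}_{u,v}\mathbf{U}$ vanish in expectation by phase symmetry, and reduce the problem to the scalar integrals $\mathbb{E}\{|y_j|^2/Q^2\}$ with $Q=\sum_k\lambda_{v,k}|y_k|^2$. These couple the two quadratic forms; I would evaluate them through the identity $\mathbb{E}\{|y_j|^2 Q^{-2}\}=-\partial\,\mathbb{E}\{Q^{-1}\}/\partial\lambda_{v,j}$, differentiating the Lemma \ref{lemma_1} series, and then approximate the resulting spectral sum to condense it into the compact product $\delta_{u,v}\eta_v$ reported in Eq. \eqref{delta}. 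Verifying that $\mathbf{S}_{u,v}$ lies in the range of $\mathbf{S}_v$ (so that the ratio has finite expectation) and controlling the accuracy of this spectral approximation are the most delicate points of the argument.
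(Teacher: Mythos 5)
Your setup coincides with the paper's: Lemma \ref{lemma_0} indeed gives $\bar{\bf w}_v = \hat{\bf h}_v/\|\hat{\bf h}_v\|^2$, the whitening $\hat{\bar{\bf h}}_{\Lambda_v} = \sqrt{1-\varepsilon^2}\,{\bf R}_{\Lambda_v}^{1/2}{\bf z}$ yields $\|\hat{\bf h}_v\|^4 = M^2({\bf z}^{\rm H}{\bf S}_v{\bf z})^2$, your case analysis of which error component survives the inner product is right, and your evaluation of $\eta_v$ as the second negative moment of the quadratic form via the Lemma \ref{lemma_1} series, with $\rho_v>2$ for finiteness, is exactly the paper's Eq. \eqref{expectation2}. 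The genuine gap is in the middle step. You average over ${\bf c}$ first with $\mathbb{E}\{{\bf c}{\bf c}^{\rm H}\}=\bar{\bf R}^u_{\Lambda_v}$; but for $u\neq v$ the true covariance is ${\bf A}_v{\bf R}_u{\bf A}_v^{\rm T}$, whose off-diagonal inter-BS entries $\rho^{l,l'}_{u,b,b'}$ are nonzero, and replacing it by ${\bf I}_{K_v}$ as in \eqref{S_UV_paras} deletes precisely the statistics that generate $\delta_{u,v}$. After this replacement the reduced target $\mathbb{E}_{\bf z}\{{\bf z}^{\rm H}{\bf S}_{u,v}{\bf z}/({\bf z}^{\rm H}{\bf S}_v{\bf z})^2\}$ carries no dependence whatsoever on User $u$'s inter-BS correlations, so no refinement of the ratio expectation --- including your derivative identity $\mathbb{E}\{|y_j|^2Q^{-2}\}=-\partial\,\mathbb{E}\{Q^{-1}\}/\partial\lambda_{v,j}$, which can only produce functions of the spectrum of ${\bf S}_v$ and the diagonal of ${\bf U}^{\rm H}{\bf S}_{u,v}{\bf U}$ --- can reconstruct $\delta_{u,v}$, which in \eqref{delta} for $u\neq v$ is a bilinear functional of the products $\rho^{l,l'}_{u,b,b'}\rho^{l,l'}_{v,b,b'}$. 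Your attribution of the second term to ``residual dependence through the shared vector ${\bf z}$'' is therefore mistaken: if all inter-BS correlations vanish, $\delta_{u,v}=0$ in \eqref{delta} even though numerator and denominator remain strongly dependent through ${\bf z}$.

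What the paper actually does is split the numerator at the level of the random variables, $\xi^{\rm up}_{u,v}=\xi^{{\rm up},x}_{u,v}+\xi^{{\rm up},y}_{u,v}$, into the same-BS ($b=b'$) and cross-BS ($b\neq b'$) sums in \eqref{xi_uv_up} and \eqref{xi_uv_up_y}, and then applies two \emph{different} decouplings in \eqref{approx}: $\mathbb{E}\{\xi^{{\rm up},x}_{u,v}/\xi^{\rm down}_v\}\approx\mathbb{E}\{\xi^{{\rm up},x}_{u,v}\}/\mathbb{E}\{\xi^{\rm down}_v\}$, justified by the tight correlation between $\xi^{{\rm up},x}_{u,v}$ and $\zeta_v$ (they share the summands $\bar\beta_{b,v,k}|\hat{\bar h}_{b,v,k}|^2$), and $\mathbb{E}\{\xi^{{\rm up},y}_{u,v}/\xi^{\rm down}_v\}\approx\mathbb{E}\{\xi^{{\rm up},y}_{u,v}\}\,\mathbb{E}\{1/\xi^{\rm down}_v\}=\delta_{u,v}\eta_v$, justified by weak correlation; $\delta_{u,v}$ is then obtained by an elementary Gaussian moment factorization in \eqref{xi_uv_up_y_expectation_1} and \eqref{xi_uv_up_y_expectation_2}, not by any spectral-sum correction. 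Note also that your heuristic for the first term --- ``exact if numerator and denominator were independent'' --- points the wrong way: under independence $\mathbb{E}\{n/d\}=\mathbb{E}\{n\}\,\mathbb{E}\{1/d\}$, and $\mathbb{E}\{1/d\}>1/\mathbb{E}\{d\}$ by Jensen's inequality; the ratio-of-means form is the approximation appropriate to the \emph{strongly} correlated piece, which is precisely why the two pieces must be treated separately. To repair your argument, retain the exact $\mathbb{E}\{{\bf c}{\bf c}^{\rm H}\}$, split it into its block-diagonal and off-block parts, and only then apply the two decouplings above to the corresponding terms.
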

		\begin{proof}
			See Appendix B.
		\end{proof}
	
		\newcounter{TempEqCnt2}                         
		\setcounter{TempEqCnt2}{\value{equation}} 
		\setcounter{equation}{30}  
		\begin{figure*}[hb] 
			\centering 
			\hrulefill 
			\vspace*{0pt} 
			\textcolor{black}{
				\begin{equation}\label{delta} 
					{{\delta }_{u,v}} =
					\begin{cases}
						\vspace{1em}
						M^2\sum\limits_{b' = 1}^B {\sum\limits_{b \ne b'}^B {\sum\limits_{l \in {\Lambda _{b,u}}} {\sum\limits_{l' \in {\Lambda _{b',u}}} {
										{
											\varepsilon^2\left(1- \varepsilon ^2\right){\rho}_{u,b,b'}^{l,l',2} {{\bar \beta }_{b,u,l}}{{\bar \beta }_{b',u,l'}} }}}}} ,
						&\hspace{1em}{\textit{ if }} \, v=u \\
						M^2\sum\limits_{b' = 1}^B {\sum\limits_{b \ne b'}^B {\sum\limits_{l \in {\Lambda _{b,v}}} {\sum\limits_{l' \in {\Lambda _{b',v}}} \left(1- \varepsilon ^2\right){\rho _{u,b,b'}^{l,l'}{\rho}_{v,b,b'}^{l,l'} \sqrt {{{\bar \beta }_{b,v,l}}{{\bar \beta }_{b,u,l}}{{\bar \beta }_{b',v,l'}}{{\bar \beta }_{b',u,l'}}} } } } },
						&\hspace{1em}{\textit{ if }} \, v \ne u. \\
					\end{cases}
			\end{equation} }
		\end{figure*}
		\setcounter{equation}{\value{TempEqCnt2}} 

		\begin{theorem}\label{Theorem1}
			\textit{Given any port selection ${\Lambda }_{b,u}$'s, the average port power ${{\bar{\beta }}}_{b,u,l}$'s, and the port coefficient correlation  $\rho _{u,b,b'}^{l,l'}$'s, $\forall u \in \mathbb{U}$, $ b,b'\in \mathbb{B}$, $l\in \Lambda_{b,u} $, $l'\in \Lambda_{b',u}$, for $\rho_v > 2$, the downlink achievable rate of user $u$ has the approximation as shown in Eq. \eqref{equal_sumrate} at the bottom of the next page,
			where ${\mu _v} =  \sum\limits_{k = 0}^\infty  {\frac{{{\alpha_{v,k}}}}{{2{\beta _v}\left( {{\rho _v} + k - 1} \right)}}} $.}
		\end{theorem}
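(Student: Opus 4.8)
The plan is to assemble Theorem \ref{Theorem1} directly from the achievable-rate formula in Eq. \eqref{eqn1-6} together with the two closed-form results already established in Lemma \ref{lemma_1} and Lemma \ref{lemma2}; no new probabilistic machinery is needed, since the heavy lifting is done in Appendices A and B. First I would observe that the quantity $\mu_v$ defined in the theorem statement is exactly $M$ times the expression for $\mathbb{E}\{\|\bar{\mathbf{w}}_v\|^2\}$ furnished by Eq. \eqref{equal_lemma1} of Lemma \ref{lemma_1}, so that
\begin{equation}
\mathbb{E}\left\{\|\bar{\mathbf{w}}_v\|^2\right\} = \frac{\mu_v}{M},\quad v\in\mathbb{U}.
\end{equation}
In particular the desired-signal power in the numerator of Eq. \eqref{eqn1-6} becomes $P_u/\mathbb{E}\{\|\bar{\mathbf{w}}_u\|^2\} = MP_u/\mu_u$.

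Next I would handle the interference-plus-noise denominator. Using the decomposition established just before Lemma \ref{lemma_1} (the step labeled $(b)$), each interference term satisfies
\begin{equation}
\mathbb{E}\left\{\Big|\sum_{b=1}^{B}\tilde{\mathbf{h}}_{b,u}^{\rm{H}}\mathbf{w}_{b,v}\Big|^2\right\}
= \frac{\mathbb{E}\left\{|\tilde{\mathbf{h}}_u^{\rm{H}}\bar{\mathbf{w}}_v|^2\right\}}{\mathbb{E}\left\{\|\bar{\mathbf{w}}_v\|^2\right\}}\,P_v
= \frac{MP_v}{\mu_v}\,\mathbb{E}\left\{|\tilde{\mathbf{h}}_u^{\rm{H}}\bar{\mathbf{w}}_v|^2\right\}.
\end{equation}
I would then substitute the approximation from Lemma \ref{lemma2}, Eq. \eqref{lemma2_appro}, for the factor $\mathbb{E}\{|\tilde{\mathbf{h}}_u^{\rm{H}}\bar{\mathbf{w}}_v|^2\}$, which introduces the terms $\mathrm{tr}(\mathbf{S}_{u,v})$, $|\mathrm{tr}(\mathbf{S}_v)|^2 + \|\mathbf{S}_v\|_{\rm{F}}^2$, and $\delta_{u,v}\eta_v$. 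Summing over $v\in\mathbb{U}$ and adding the noise power $\sigma_n^2$ gives the full denominator in terms of the quantities defined in Lemmas \ref{lemma_1} and \ref{lemma2}.

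Finally I would combine the numerator and denominator inside the $\log_2(1+\cdot)$ of Eq. \eqref{eqn1-6}, which yields the claimed expression in Eq. \eqref{equal_sumrate}. The approximation symbol in the theorem is inherited entirely from Lemma \ref{lemma2}; Lemma \ref{lemma_1} and the decomposition step are exact, so no further approximation is introduced at this stage. I do not expect any genuine mathematical obstacle here: the only real work is bookkeeping, namely keeping the factors of $M$ consistent between the numerator ($MP_u/\mu_u$) and each denominator summand ($MP_v/\mu_v$), and correctly pairing the index $v$ in $\mu_v$, $\mathbf{S}_v$, $\mathbf{S}_{u,v}$, $\delta_{u,v}$, and $\eta_v$. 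The condition $\rho_v > 2$ is simply the union of the convergence requirements of Lemmas \ref{lemma_1} and \ref{lemma2} (the latter being the binding one), guaranteeing that every $\mu_v$ and $\eta_v$ appearing in the sum is well defined.
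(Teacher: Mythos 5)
Your proposal is correct and takes essentially the same route as the paper: the paper's entire proof of Theorem \ref{Theorem1} is the one-line substitution of Eq. \eqref{equal_lemma1} and Eq. \eqref{lemma2_appro} into Eq. \eqref{eqn1-6}, and your explicit bookkeeping of the $M$ factors (via $\mathbb{E}\{\|\bar{\mathbf{w}}_v\|^2\}=\mu_v/M$ and the decomposition step $(b)$) together with the observation that $\rho_v>2$ is the binding convergence condition fills in exactly those steps. One minor remark: carrying your factors through to the end gives the noise term as $\sigma_n^2/M$ after cancelling $M$ between numerator and denominator, whereas Eq. \eqref{equal_sumrate} prints it as $M\sigma_n^2$ --- this discrepancy lies in the paper's displayed equation, not in your argument.
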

		\begin{proof}
			Via substituting Eq. \eqref{equal_lemma1} and Eq. \eqref{lemma2_appro} into Eq. \eqref{eqn1-6}, Eq. \eqref{equal_sumrate} is readily obtained.
		\end{proof}
		
	\newcounter{TempEqCnt}                         
	\setcounter{TempEqCnt}{\value{equation}} 
	\setcounter{equation}{32}  
	\begin{figure*}[hb] 
		\centering 
		\vspace*{-18pt} 
		\textcolor{black}{
			\begin{equation}
				\label{equal_sumrate}
					{{R}_{u}}\approx {{\log }_{2}}\left( 1+ \frac{{{P}_{u}}}{{{\mu }_{u}}}
					{{ \left( \sum\limits_{v =1}^{U}{\frac{{{P}_{v}}}{{{\mu }_{v}}}\left( \frac{{\rm{tr}}({\mathbf{S}_{u,v}})}{{{\left| {\rm{tr}}({\mathbf{S}_{v}}) \right|}^{2}}+\left\| {{\mathbf{S }}_{v}} \right\|_{\rm{F}}^{2}}+{{\delta }_{u,v}}{{\eta }_{v}} \right)+M{\sigma _n^2}} \right) }^{-1}} \right).
		\end{equation} }
	\end{figure*}
	\setcounter{equation}{\value{TempEqCnt}} 
		
		The closed-form expression in Eq. \eqref{equal_sumrate} allows efficient evaluation of the system sum-rate $R_\text{sum}$ given arbitrary transmit power $P_u$'s, the average channel power in different port directions, and the correlation among different port coefficients.

	The computational complexity involved in the sum-rate calculation using Eq. \eqref{equal_sumrate} is analyzed as follows.
	The complexity using eigen-decomposition of ${\bf{S}}_v$ is typically on the order of $\mathcal{O}(K_v^3)$.
	The computation of ${{\delta }_{u,v}}$ has a complexity of the order $\mathcal{O}(K_v^2)$.
	For the computation of ${{\eta }_{v}}$, a truncated summation of the first $L_{\mu}$ terms can be employed, along with the normalization of ${{\alpha}_{v,k}}$’s. 
	The additional complexity required for ${{\eta }_{v}}$ is on the order of $\mathcal{O}({L^2_{\mu}})$.
	The complexity for the computation of $\mu_{v}$ is the same as that of  $\eta_{v}$. 
	For the special case that the total number of ports allocated to each user is the same, i.e., $K_v = P$, $\forall v \in \mathbb{U}$, to calculate the achievable rate using 
	Eq. \eqref{equal_sumrate}, the computational complexity has the order of $\mathcal{O}(UP^3)+
	\mathcal{O}(U{L^2_{\mu}})$. The complexity of calculating the system sum-rate is thus $\mathcal{O}(U^2P^3)+
	\mathcal{O}(U{L^2_{\mu}})$.

\section{Port Selection Design Schemes}
\label{Port Selection Design Schemes}
With the derived analytical sum-rate expression given in Eq. \eqref{equal_sumrate}, the port selection optimization problem 
can be formulated as 
\setcounter{equation}{33}  
\begin{eqnarray}\label{eqn_opt}
	&\hspace{-5cm}\underset{\left\{\Lambda_{b, u}, b\in\mathbb{B},u \in \mathbb{U} \right\}}{\max } R_{\text {sum }} \\
	\hspace{-2.2cm}\text{s.t.}&\hspace{-2.2cm} 
	\sum_{b=1}^B\left|\Lambda_{b, u}\right| \leq N_u, \forall u \in \mathbb{U},\nonumber\\
	&\Lambda_{b,u}\cap\Lambda_{b,u'}=\emptyset, \forall b\in \mathbb{B} \text{ }\&\text{ } u\neq u'\in \mathbb{U} \nonumber,
\end{eqnarray} 
where in the first constraint $N_u$ is the maximum number of ports that User $u$ can be assigned, and in the second constraint no two users can share the same port from the same BS.

	\subsection{Greedy-Search-Type Port Selection}\label{Greedy-Search-Type Port selection}
	
	One way to find the optimal solution to the port selection problem presented in Eq. \eqref{eqn_opt} is the exhaustive search but with a prohibitively high computational cost. In this work, we propose an alternative approach, namely the GS-JPS algorithm, which can be used to find a feasible suboptimal solution.		
	\begin{algorithm}[htb!] 
		\linespread{1} \selectfont
		\caption{Greedy-Search Based Joint Port Selection (GS-JPS)}  
		\label{algorithm1}
		\KwIn{${\bar \beta}_{b,u,m}$'s, $\mathbf{R}_{u}$'s, $P_u$'s, $\sigma_{n}^{2}$, $N_{\text {rand}}$, $|\Lambda_{b, u}|$'s, $R_\text{sum}= 0$.} 
		\KwOut{$\Lambda^{\star}_{b, u}$'s.} 
		
		\For{$n =1,\ldots,N_{\textnormal{rand}} $} 
		{ {\textbf{Initialization: }$\mathbf{v} = \text{randperm}(U)$. ${\Lambda }_{b}=\emptyset$, $\Lambda^C_{b} = \mathbb{M}-{{\Lambda }_{b}}$, $\forall b\in \mathbb{B}$;
				
				\For{$u =1,\ldots,U$}
				{Obtain indices of $|\Lambda_{b, v_u}|$ strongest ports among $\Lambda^C_{b}$ to form $\Lambda_{b, v_u}$, $\forall b\in \mathbb{B}$;
					
					${{\Lambda }_{b}}={{\Lambda }_{b}} {\mathop{\bigcup }}{{\Lambda }_{b,v_u}}$, $\Lambda^C_{b} = \mathbb{M}-{{\Lambda }_{b}}$, $\forall b\in \mathbb{B}$;
				}
				
				Calculate $R^{{(n)}}_{u}$'s and $R_{\text{sum}}^{{(n)}}$ using Eq. \eqref{equal_sumrate};
				
				Obtain $\bar{\mathbf{v}}$ satisfying $R^{{(n)}}_{\bar{v}_1} \ge R^{{(n)}}_{\bar{v}_2}\ldots\ge R^{{(n)}}_{\bar{v}_U}$\;}	
			\For{$u =1,\ldots,U$}
			{{Calculate $\mathbf{b}$ for user $\bar{v}_u$ satisfying $\sum_{m=1}^M{\bar \beta}_{b_1,\bar{v}_u,m} \ge \ldots \ge \sum_{m=1}^M{\bar \beta}_{b_B,\bar{v}_u,m}$\;}
				
				\For{$i=1,\ldots,B$}
				{ {Put elements of ${\Lambda }_{b_i,\bar{v}_u}$ with decreasing port average power into the vector $\bar{\mathbf{p}}$\;}	
					\For{$m =1,\ldots,{|{\Lambda }_{b_i,\bar{v}_u}|}$}
					{ {Replace port $\bar{p}_m$ with port $l$ and calculate $R_{\text{sum},l}$,  $\forall l \in  \Lambda^C_{b_i}$\;}
						{$l^{\star} = \arg \max_{l \in  \Lambda^C_{b_i}} \left\{R_{\text{sum},l}\right\}$\;}
						\If{${R_{\textnormal{sum},l^{\star}}}>{{R}_{\textnormal{sum}}^{{(n)}}}$}
						{
							{Conduct ${{R}_{\text{sum}}^{{(n)}}} \gets R_{\text{sum},l^{\star}}$;
								
								Update ${{\Lambda }_{b_i,\bar{v}_u}}\gets ({\Lambda _{b_i,{\bar v}_u}}-\left\{ \bar{p}_m \right\}){\mathop{\bigcup }}\{l^{\star}\}$ and $\Lambda^C_{b_i} \gets (\Lambda^C_{b_i} - \{l^{\star}\}) {\mathop{\bigcup }}\left\{ \bar{p}_m \right\}$\;
							}	
						}
					}
				}	
			}
			\If{${R_{\textnormal{sum}}^{{(n)}}}>{{R}_{\textnormal{sum}}}$}
			{
				{Conduct ${{R}_{\text{sum}}} \gets R_{\text{sum}}^{{(n)}}$\;}
				{Update ${{\Lambda }^{\star}_{b,u}} = {{\Lambda }_{b,u}}$, $\forall u \in \mathbb{U}, b \in \mathbb{B}$\; }
			}	
		}
	\end{algorithm}

The proposed port selection is shown in Algorithm \ref{algorithm1}, which contains $N_{\text{rand}}$ rounds of port selection and update. For the $n$-th round, first, a random initialization is conducted to determine the user order for port selection, i.e., $\mathbf{v} = \text{randperm}(U)$. Then, the CU selects ports for users $v_u, u=1,\ldots,U$ in turn, where $v_u$ is the $u$-th element of the vector $\mathbf{v}$. Specifically, the CU selects for User $v_1$ the $|\Lambda_{b, v_1}|$ ports of BS $b$ with the highest average power, and the remaining set of unselected ports of BS $b$ is  $\Lambda^C_{b} = \mathbb{M}-\Lambda_{b, v_1}$. Then for User $v_2$, the CU selects $|\Lambda_{b, v_2}|$ ports of BS $b$ with the highest average power from $\Lambda^C_{b}$, and the remaining set of unselected ports of BS $b$ becomes $\Lambda^C_{b} = \Lambda^C_{b}-\Lambda_{b, v_2}$. This process is repeated until the port selection for User $v_U$ is finished.	
	After the initial port selection, the per-user rate, denoted as  $R^{{(n)}}_{u}, u=1,\ldots,U$ and the sum-rate $R_{\text{sum}}^{{(n)}}$ are calculated from Eq. \eqref{equal_sumrate}. 
	Then we order the users in the descending order of their individual rates  to obtain $\bar{\mathbf{v}}$ satisfying 
	\begin{equation}\label{user_oder}
	R^{{(n)}}_{\bar{v}_1} \ge R^{{(n)}}_{\bar{v}_2}\ldots\ge R^{{(n)}}_{\bar{v}_U}.
	\end{equation} 
	For each user ${\bar{v}_u}, u=1,...,U$, we find the order of the BSs  $\mathbf{b}$ satisfying
	\begin{equation}\label{BS_oder}
		\sum_{m=1}^M{\bar \beta}_{b_1,\bar{v}_u,m} \ge \sum_{m=1}^M{\bar \beta}_{b_2,\bar{v}_u,m} \ldots \ge \sum_{m=1}^M{\bar \beta}_{b_B,\bar{v}_u,m},
	\end{equation}
and for given user ${\bar{v}_u}, u=1,\ldots,U$ and BS $b_i, i=1,\ldots,B$, we order the ports in $\Lambda_{b_i, \bar{v}_u}$ in the decreasing order of the port average power.
	Sequentially based on the ordering, each port in $\Lambda_{b_i,\bar{v}_u}$ is replaced by each port in $\Lambda_{b_i}^C$. 
	The sum-rate resulting from the substitution with port $l$ is recorded as ${R_{\textnormal{sum},l}}$, and $l^{\star}$ denotes the replacement port that results in the highest sum-rate.
	If ${R_{\textnormal{sum},l^{\star}}} > R^{(n)}_{\text {sum}}$, we update $\Lambda_{b_i, \bar{v}_u}$,  $\Lambda^C_{b_i}$ and $R^{(n)}_{\text {sum}}={R_{\textnormal{sum},l^{\star}}} $.
	After $N_{\text{rand}}$ rounds of port selection initialization and update, the port selection with the highest sum-rate is adopted as the algorithm output.

	For the updating of port selection, it is critical to determine the order and condition under which an update is to be made. 
	Since the optimization problem aims to maximize the sum rate, we first use Eq. \eqref{user_oder} to obtain the order of users for port selection updating, giving the user with a higher rate a higher update priority.
	Then, the order of the BSs is determined according to Eq. \eqref{BS_oder} where the BS with higher total port energy has a higher priority since it provides the most useful signal power for the given user with high probability.
	In addition, for the given user and BS, the port with higher average energy also has a greater impact on the user rate and needs to be prioritized for updating.
	Under the aforementioned  ordering and updating strategy, the GS-JPS algorithm can update the port selection in a greedy but relatively efficient manner.

	For the computational complexity analysis of the proposed GS-JPS algorithm, we illustrate the special case that each BS assigns the same number of ports to each user, i.e., ${\left| {{\Lambda }_{b,v}} \right|} = T$, $\forall b \in \mathbb{B}$, $\forall v \in \mathbb{U}$. 
	For one round of iteration, the algorithm needs to sort the users, BSs, and ports with the complexity of $\mathcal{O}\left(U^2+UB^2+UBT^2  \right)$.
	Additionally, the calculation of the sum-rate takes $1+UBT(M-UT)$ times each with the complexity of $\mathcal{O}\left(U^2B^3T^3+U{L^2_{\mu}}\right)$
	Therefore, the total computational complexity of the algorithm with $N_{\text{rand}}$ rounds of iteration is 
	$\mathcal{O}\left({N_{\text{rand}}}U^2BT(M-UT)(UB^3T^3+
	{L^2_{\mu}})\right)$
	\footnote{Note that the joint port selection algorithm is executed once in one channel statistical coherence duration, i.e., no additional computational complexity is introduced for the majority of channel coherence time slots contained in a channel statistical coherence duration.}.
	\textcolor{black}{The statistical SLNR maximizing port selection (SLNR-PS) scheme proposed in \cite{kim2020downlink} also enables joint port selection with the computational complexity of $\mathcal{O}\left(U^2M^2B^3(M+UL)(T^3+L^3)\right)$. With the increase of the number of BS antennas $M$, our proposed algorithm can obtain port selection results with lower complexity compared to SLNR-PS.}

\subsection{DL-Assisted Fast Port Selection}\label{ML-assisted enhanced port selection}

In cell-free massive MIMO systems, the complexity of the GS-JPS algorithm proposed in Section \ref{Greedy-Search-Type Port selection} becomes prohibitive for scenarios with a large number of antennas and users. Moreover, when dealing with rapidly changing channel statistics, this issue becomes even more pressing.  Therefore, we further propose the DL-JPS scheme in the following where a deep neural network (DNN) is used to simulate the decision mechanism of the GS-JPS algorithm and realize fast online port selection.

The DNN is trained in a supervised way to learn the mapping from the normalized port average power to the port selection labels provided by the proposed GS-JPS algorithm. 
We formulate the port selection as a classification problem. Specifically, the average port power of User $u$ corresponding to BS $b$ is denoted as $\bar {\boldsymbol{\beta }} _{b,u} = \left[ {\bar \beta  _{b,u,1}, \ldots ,\bar \beta  _{b,u,M}} \right]^{\rm{T}} \in {\mathbb{R}^{M \times 1}}$. Then,  ${\bar {\boldsymbol{\beta }}} = {\left[ {\bar {\boldsymbol{\beta }} _1, \ldots ,\bar {\boldsymbol{\beta }} _U} \right]} \in \mathbb{R}^{BM \times U}$ is the DNN input with $\bar {\boldsymbol{\beta }} _u = {\left[ {\bar {\boldsymbol{\beta }} _{1,u}^{{\rm{T}}}, \ldots ,\bar {\boldsymbol{\beta }} _{B,u}^{{\rm{T}}}} \right]^{\rm{T}}} \in \mathbb{R}^{BM \times 1}$.
As shown in Fig. \ref{figure4_3}, to efficiently represent the selected ports of all collaborative BSs for all serving users, the $M\times B\times U$-dimensional output of the DNN is divided into $U\times B$ blocks. For the $\left( u,b \right)$-th block, $\forall u\in \mathbb{U}$, $\forall b\in \mathbb{B}$, $M$ logistic activation function-based binary classifiers are adopted to determine whether each port is selected by BS $b$ for User $u$.
Given the output of the $\left( u,b \right)$-th block classifier, i.e., ${{\hat{\mathbf{p}}}_{b,u}}=\left[ {\hat{p}_{b,u,1}},\ldots ,{\hat{p}_{b,u,M}} \right]$, the indices of selected ports are determined, i.e., 
$
{\hat{{\Lambda }}_{b,u}}=f_{\text{find}}^{\left| {{\Lambda }_{b,u}} \right|}\left( {\hat{\mathbf{p}}_{b,u}} \right),
$
where the function $f_{\text{find}}^{\left| {{\Lambda }_{b,u}} \right|}\left( \cdot \right)$ outputs the set of subscripts of the top  $\left| {{\Lambda }_{b,u}} \right|$ maximums in the input vector.
The cross entropy error (CEE) is used as the loss function, i.e.,
\begin{equation}\label{loss1}
\begin{aligned}
	{{L}_{\text{loss}}}=-\sum\limits_{b=1}^{B}\sum\limits_{u=1}^{U}\sum\limits_{m=1}^{M}&{{p}_{b,u,m}}\log {{\hat{p}_{b,u,m}}}\\&+(1-{{p}_{b,u,m}})\log ({1-{\hat{p}_{b,u,m}}}),
\end{aligned}
\end{equation} 
where ${p}_{b,u,m}$ is the label with its value being $1$ if the $m$-th port is selected by BS $b$ for User $u$, i.e., $m \in {\Lambda _{b,u}}$, and $0$ otherwise. 
\begin{figure}[htp]
	\centering 
	\includegraphics[width=1.05\linewidth]{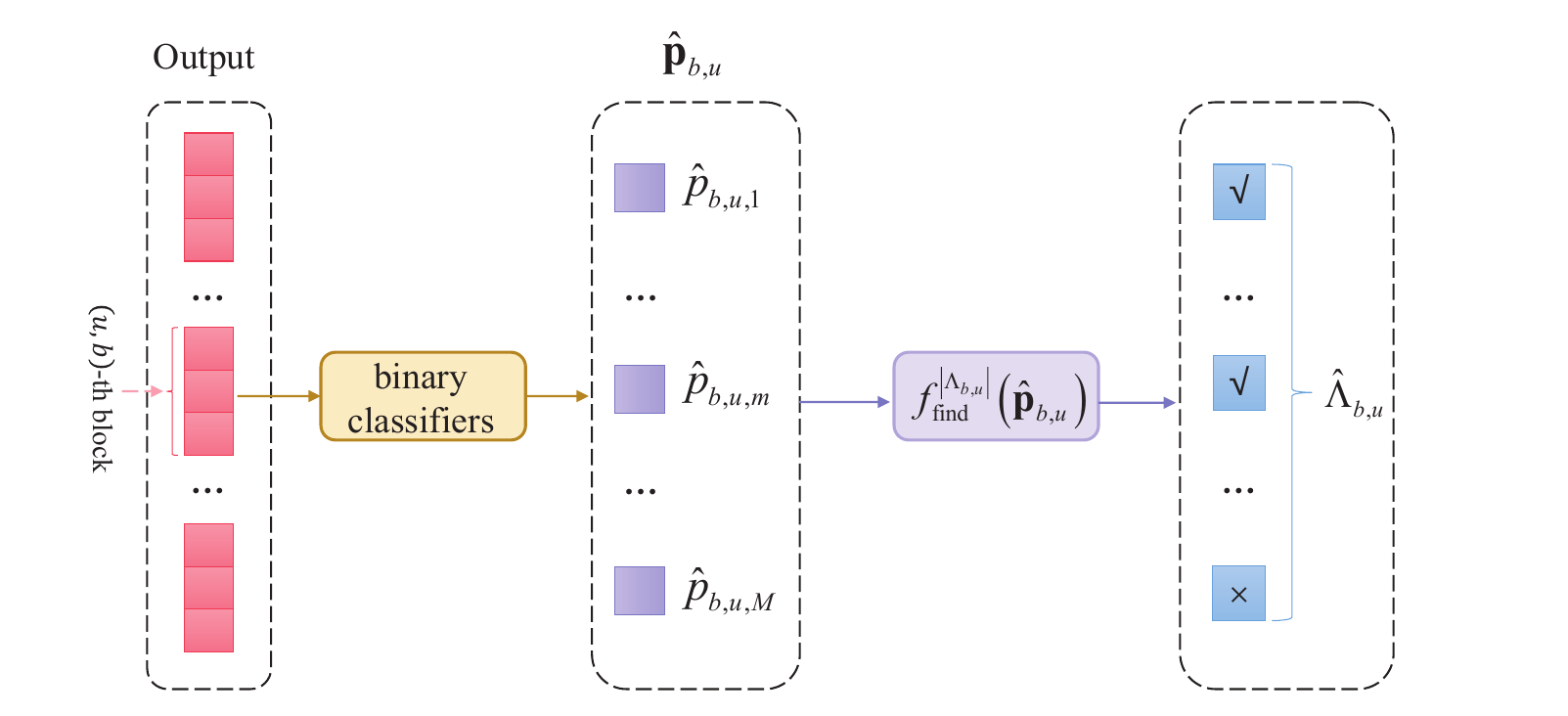} 
	\caption{The structure of the DNN output layer.}
	\label{figure4_3}
\end{figure}

	\section{Simulation and Discussion}
	\label{Simulation and Discussion}

	In this section, numerical results are provided to demonstrate the performance of the proposed GS-JPS, DL-JPS, and EDT-based feedback algorithms. For comparison, we employ the SLNR-PS scheme and the maximum magnitude selection scheme (MM-S)  \cite{sayeed2013beamspace} as the baselines. 
	\subsection{Simulation Setup}
		
	\begin{figure}[htp]
		\centering 
		\includegraphics[width=0.8\linewidth]{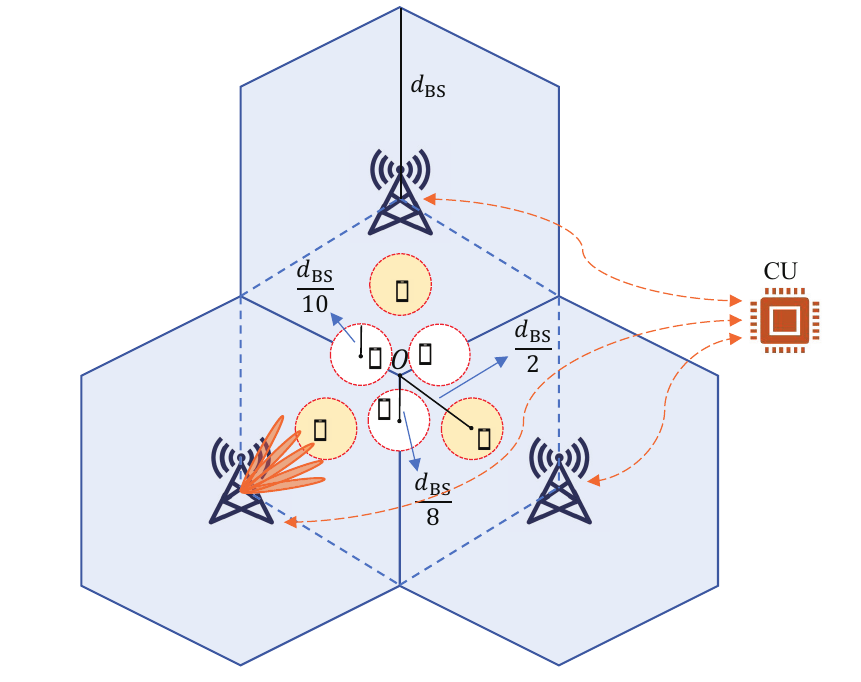}
		\caption{Schematic diagram of simulation scenario.}
		\label{figure0}
	\end{figure}

	\newcounter{TempEqCnt3}                         
	\setcounter{TempEqCnt3}{\value{equation}} 
	\setcounter{equation}{40}  
	\begin{figure*}[hb] 
		\centering 
		\hrulefill 
		\textcolor{black}{
			\begin{equation}\label{Covariance_paradigm2}
				\mathbf{R}_{u,b,b' } = 
				\begin{bmatrix}
					\mathbf{0}_{(p_b-1)\times (p_{b'}-1)}& \mathbf{0}_{(p_b-1)\times L_0}& \mathbf{0}_{(p_b-1)\times (M-L_0-p_{b'}+1)}\\
					\mathbf{0}_{L_0\times (p_{b'}-1)}& \rho_{\rm{c}} \mathbf{I}_{L_0\times L_0}& \mathbf{0}_{L_0\times (M-L_0-p_{b'}+1)}\\
					\mathbf{0}_{(M-L_0-p_{b}+1)\times (p_{b'}-1)}& \mathbf{0}_{(M-L_0-p_{b}+1)\times L_0}&\mathbf{0}_{(M-L_0-p_{b}+1)\times (M-L_0-p_{b'}+1)}
				\end{bmatrix}, {\text{ for }} \, b' \ne b.
		\end{equation} }
		\vspace*{0pt} 
	\end{figure*}
	\setcounter{equation}{\value{TempEqCnt3}} 

	Our simulations focus on a cell-free massive MIMO system operating in the Urban Microcell scenarios with $B=3$ BSs equipped with $M=64$ antennas individually. 
	 As shown in Fig. \ref{figure0}, the polar coordinate of the center point $O$ is $(0,0)$, and the adjacent service areas of the BSs are connected to form a hexagonal region with inter-site spacing $d_{\rm{BS}}=250$ m. $U=6$ users are served by the BSs via the collaborative transmission among which $3$ are inter-cell users and $3$ are cell-edge users. 
	 To model the users' random location in small local areas, each user location is generated following a uniform distribution on a circle with radius $r_0 = d_{\rm{BS}}/10$. The centers of the circles for the intra-cell users and the cell-edge users are set as $\left( {d_{\rm{BS}}}/{2},-{\pi}/{6}+2(b-1){\pi}/{3}\right)$, and $\left({d_{\rm{BS}}}/{8},{\pi}/{6}+2(b-1){\pi}/{3}\right),b\in\mathbb{B}$, respectively. 
	The coefficient ${\bar \beta}_{b,u,m}$ of each path/port in the adopted channel model in Eq. \eqref{channelmodel} satisfies $\sum\nolimits_{m=1}^{M}{\bar \beta}_{b,u,m} = {\bar \beta}_{b,u}$, where the channel power angular spectrum over different ports follows the truncated Laplacian distribution \cite{pedersen2000stochastic}. ${\bar \beta}_{b,u}$ can be expressed as
	\begin{equation}
		{{\bar{\beta }}_{b,u}}[{\rm{dB}}]=- 28-20\, {\text{log}_{10}}\left( f_0 \right)-22\,{\text{log}_{10}}\left( {{d}_{b,u}} \right) [\rm{dB}] ,
	\end{equation}
	where $f_0 = 2.1\,$GHz is the downlink transmission frequency, and ${d_{b,u}}$ (in meter) denotes the distance between User $u$ and BS $b$. The strategy of equal power allocation among users is adopted under the single-BS power constraint. The maximum power of one BS is $P_{\text{tx}}$, and the system SNR  is defined as $\text{SNR} = {P_{\text{tx}}{\bar\beta}}/{\sigma_n^2}$ with ${\bar\beta} = \min_{b\in\mathbb{B},u\in\mathbb{U}} \{ \bar\beta_{b,u}\}$ being the largest path loss of BS-user link and $\sigma_n^2 = 1$ for noise power normalization. For example, for $\text{SNR} = 15$ dB, $P_{\text{tx}} = 95$ dB if ${\bar\beta}=-80$ dB.
	


	\begin{figure*}[ht]
		\vspace{-0.5cm}  
		\centering 
		\includegraphics[width=0.8\linewidth]{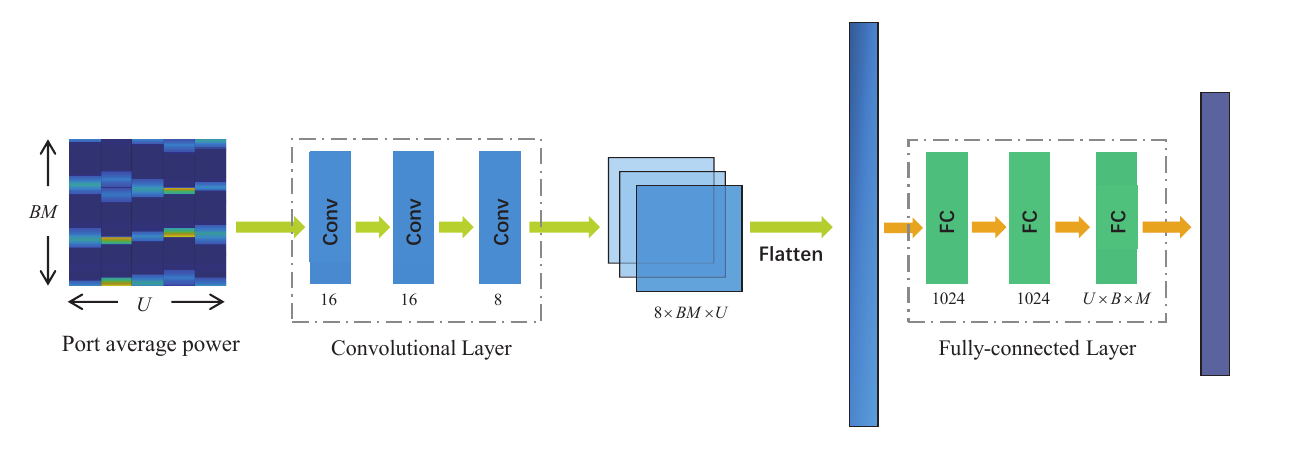}
		\caption{The architecture of DL-assisted fast port selection network.}
		\label{DNN}
		\vspace{-0.5cm}
	\end{figure*}

	Denote the port with non-zero average power as the effective channel port. The effective channel ports in Eq. \eqref{channelmodel} are evenly distributed on both sides of the line-of-sight (LoS) path between the BS and the user, and the number of effective ports is $L = 20$ for typical angular spread (AS) value of $AS=18{}^\circ$. 
	\textcolor{black}{For the modeling of the port-domain correlation matrix, we assume the matrix for User $u$ has the following form
	\begin{equation}\label{Covariance_paradigm}
		\mathbf{R}_{u} = \begin{bmatrix}
			\mathbf{R}_{u,1,1} &\dots   &\mathbf{R}_{u,1,B} \\
			\vdots &  \ddots & \vdots\\
			\mathbf{R}_{u,B,1}&\dots   &\mathbf{R}_{u,B,B}
		\end{bmatrix},
	\end{equation}
	where $\mathbf{R}_{u,b,b'}$ denotes the mutual covariance matrix of ${{{\bar{\mathbf h}}}_{b,u}}$ and ${{{\bar{\mathbf h}}}_{b',u}}$. The exponential correlation model \cite{loyka2001channel}, \cite{zhang2023interleaved} is considered for the covariance matrix of the same BS, i.e., 
	\begin{equation}
		\linespread{1.2} \selectfont
		\left [ \mathbf{R}_{u,b,b} \right ]_{l,l'} = \left\{ \begin{array}{cc}
			{\rho}^{\left |l-l'  \right | }_{\rm{s}}, &{\text{if }} l,l' {\text{ are effective ports}}   \\
			0, &{\text{otherwise}}
		\end{array}
		\right.,
	\end{equation}
	where $\rho_{\rm{s}}$ is the correlation coefficient of neighboring ports, which satisfying $\left | \rho_{\rm{s}} \right | \le 1 $. }
	 For the different ports from different BSs, we assume that there is at most one non-zero entry in $\rho_{u,b,b'}^{l,l'}$'s for all $l'\in\mathbb{M}$ with given $l\in\mathbb{M}$. For simplicity, we assume that the first $L_0$  effective ports of User $u$ at each BS are with inter-BS correlation, and 
	 \textcolor{black}{the correlation coefficient $\rho_{u,b,b'}^{l,l'}$ is set to $\rho_{\rm{c}}$. For example, if the indices of $L$ consecutive effective ports from BS $b$ and  $b' \ne b$ to User $u$ are $\{p_b,\ldots,L+p_b\}$ and $\{p_{b'},\ldots,L+p_{b'}\}$, respectively, we have $\mathbf{R}_{u,b,b'}$ as shown in Eq. \eqref{Covariance_paradigm2} at the bottom of this page.}
	In addition, we set $K_u = P$, $\forall u \in \mathbb{U}$ in the proposed GS-JPS method. 
	And the constraint of no port sharing between users is considered.

The DNN architecture for the proposed DL-JPS algorithm is shown in Fig. \ref{DNN}. It 
consists of $3$ convolutional layers (kernel size: $3\times3$), $1$ reconstruction layer and $3$ fully connected layers. The number of kernels for $3$ convolutional layers is $16$, $16$, and $8$, respectively. The reconstruction layer reorganizes the convolved structure into a $1$-dimensional form to facilitate subsequent network processing. The following $3$ fully connected layers are with $1024$ Relu neurons, $1024$ Relu neurons, and $U\times B\times M$ Sigmoid neurons, respectively. In addition,  batch normalization (BN) and the adaptive moment estimation optimizer (Adam) are adopted with a learning rate of $0.001$ and a dropout loss rate of $0.2$. Our training dataset comprises $9000$ samples. The ratio of the training set size to the testing one is $4$. We use the small batch training scheme with a batch size of $50$ and set the number of training epochs between $50$ and $100$ depending on the convergence. 
The port-selection accuracy of the DL-JPS algorithm is calculated by averaging the correct port selection rate for all BS-user pairs over $N$ testing samples, i.e., 
\setcounter{equation}{41}
\begin{equation}
	\eta  = \frac{1}{{UB}}\sum\limits_{u = 1}^U {\sum\limits_{b = 1}^B  \left( \frac{1}{N}\sum\limits_{n = 1}^N {\frac{{\left| {\Lambda _{b,u}^{n,\star}} \right|}}{{\left| {{\Lambda ^n_{b,u}}} \right|}}}  \right)} \times  \text{100\%} ,
\end{equation}	
where $\Lambda _{b,u}^{n,\star} = \Lambda _{b,u}^n  \cap  \hat{\Lambda}^n_{{b,u}}$ with ${\Lambda _{b,u}^n}$ and $\hat{\Lambda}^n_{b,u}$ being the label port set and the output port set for the $n$-th sample.

\subsection{Results and Discussions}	
	\begin{figure}[htbp!]
		\centering 
		\includegraphics[width=1\linewidth]{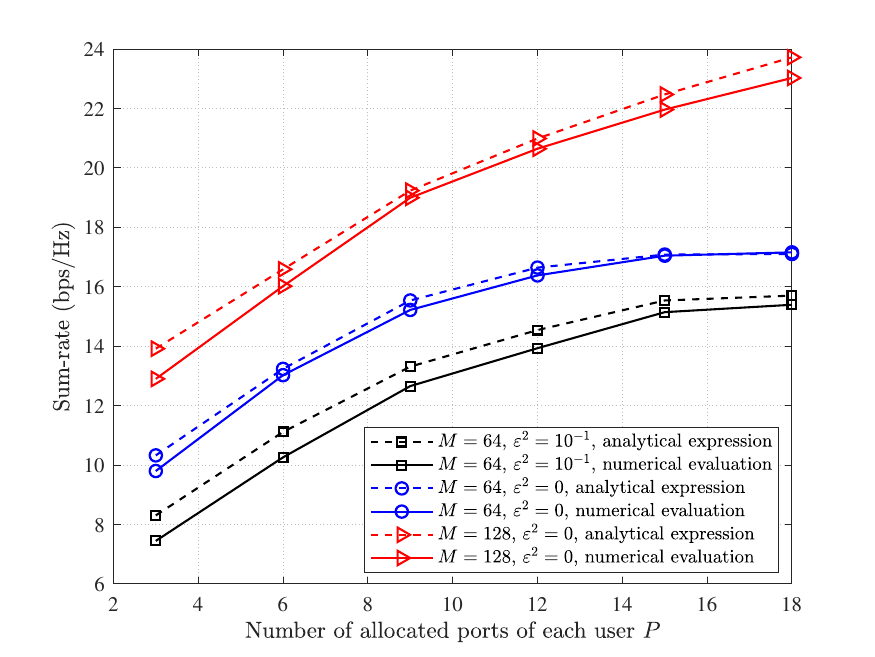}
		\caption{Analytical and numerical sum-rate versus the number of allocated ports of each user $P$. $L=20$,  ${L}_{0}=4$, $\text{SNR} = 15\,\text{dB}$, $\rho_{\rm{s}}=0$, $\rho_{\rm{c}}=1$.}
		\label{figure1}
	\end{figure}

	Fig. \ref{figure1} shows both the simulated and the analytical sum-rate results with the port selection resulted from performing Algorithm \ref{algorithm1} where $N_{\text{rand}}=100$ and $L_0 = 4$. 
	It can be seen that either with perfect estimation and feedback or with estimation and quantized feedback error, the analytical results, calculated from Eq. \eqref{equal_sumrate}, well match the simulated results in both cases with $M=64$ and $M=128$.  
	{To focus on the performance of the port selection feedback scheme, we omit the estimation and quantized feedback error of the port coefficients in the following.}
		
	\begin{figure}[htbp!]
		\vspace{-0.5cm}  
		\centering 
		\includegraphics[width=1\linewidth]{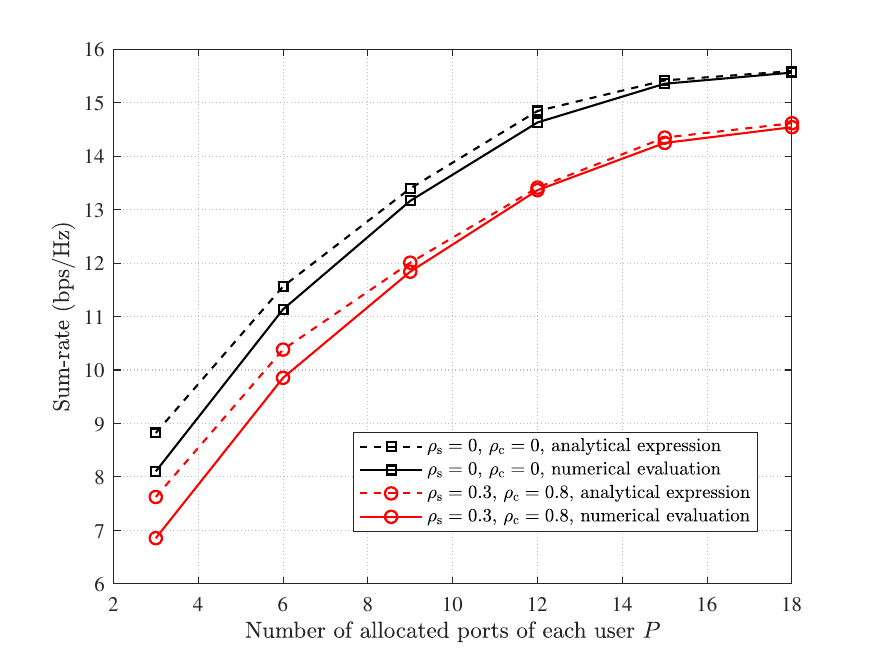}
		\caption{Sum-rate versus the number of allocated ports of each user $P$ with different port correlation levels. $M=64$, $L=20$,  ${L}_{0}=4$, $\text{SNR} = 15\,\text{dB}$.}
		\label{figure1_2}
		\vspace{-0.5cm}
	\end{figure}

	Fig. \ref{figure1_2} shows the sum-rate versus the number of allocated ports of each user $P$ with different beam-domain channel correlation levels for $M = 64$, $L = 20$, $L_0 = 4$, and $\text{SNR} = 15\,\text{dB}$.
	This proves that the derived analytical expressions are applicable to different port correlation levels, especially for $P \ge 10$, where the analytical results remain in high coincidence with the numerical results. In addition, the port correlation of multiple BSs causes some degradation in the performance of the MIMO architecture.
	Specifically, when $\rho_{\rm{s}}$ increases to $0.3$ for ports at the same BS and $\rho_{\rm{c}}$ increases to $0.8$ for ports at different BSs, the system sum-rate decreases by more than $1$ bps/Hz.

	  \begin{figure*}[htp]
	  	\setlength{\belowcaptionskip}{-5pt}
	  	\centering
	  	\subfigure[$\text{SNR} = 15\text{dB}$]{\label{fig6_a}
	  		\begin{minipage}[t]{0.5\linewidth}
	  			\centering
	  			\includegraphics[width=3.5in]{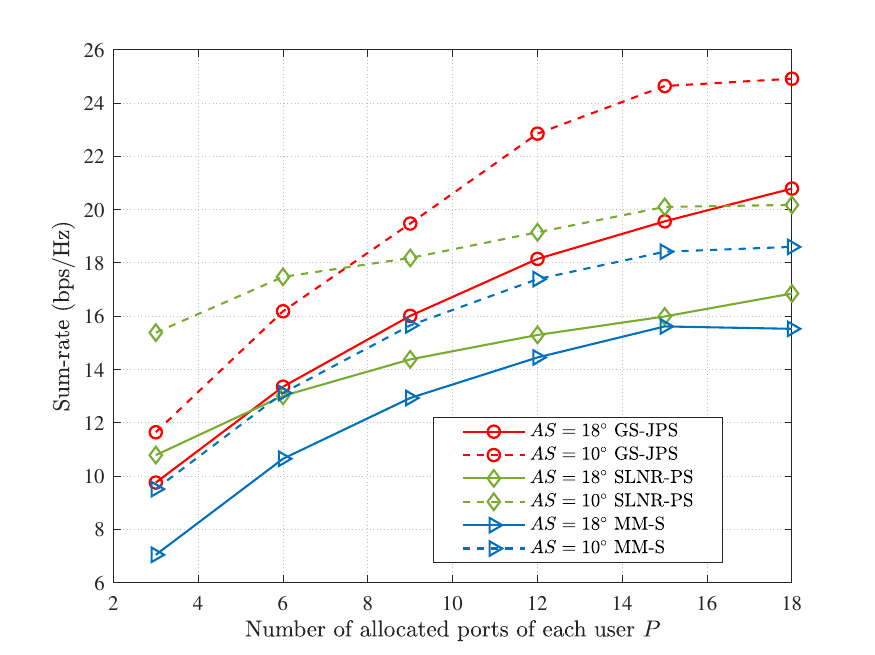}
	  		\end{minipage}%
	  	}%
	  	\subfigure[$\text{SNR} = -10\text{dB}$]{\label{fig6_b}
	  		\begin{minipage}[t]{0.5\linewidth}
	  			\centering
	  			\includegraphics[width=3.5in]{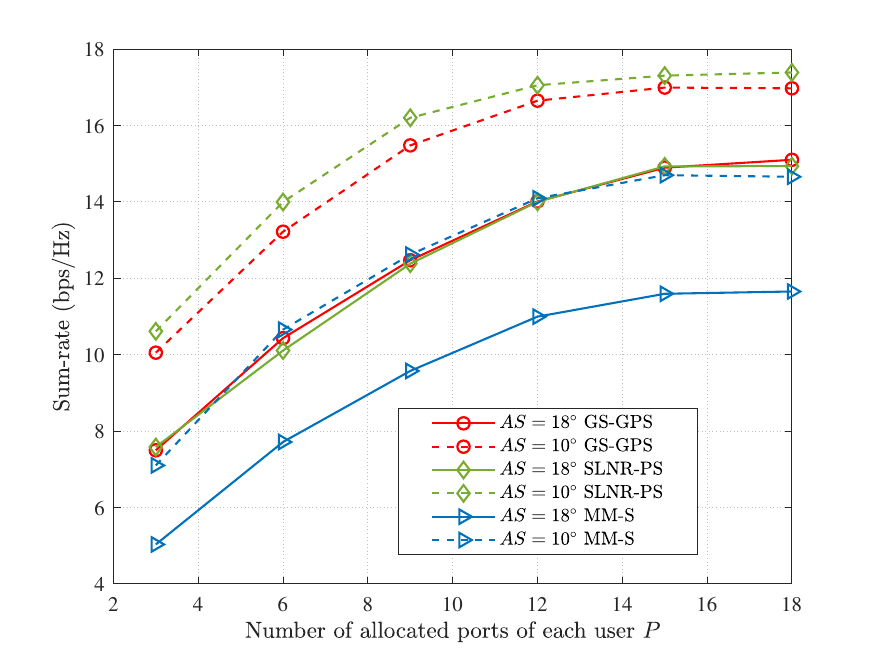}
	  		\end{minipage}%
	  	}%
	  	\centering
	  	\caption{Sum-rate of the GS-JPS scheme, the SLNR-PS scheme, and the MM-S scheme versus the number of allocated ports of each user $P$. $M=64$, ${L}_{0}=4$, $\rho_{\rm{s}}=0$, $\rho_{\rm{c}}=1$.}
	  	\label{Figure2}
	  \end{figure*}
  
  Fig. \ref{Figure2} shows the sum-rate performance of the proposed GS-JPS algorithm, the SLNR-PS scheme, and the MM-S scheme for different numbers of allocated ports. Two cases where $L=20$ and $L=12$ are considered, and the corresponding angular spreads are about $AS=18{}^\circ $ and $AS=10{}^\circ $, respectively.
  In Fig. \ref{fig6_a} and Fig. \ref{fig6_b}, the SNR is set as $15\,\text{dB}$ and $-10\,\text{dB}$, respectively.
  It can be seen that larger $L$ causes performance degradation to all schemes due to more severe conflicts of effective ports among users.
  By comparing Fig. \ref{fig6_a} and Fig. \ref{fig6_b}, it can be concluded that the proposed GS-JPS scheme performs better in scenarios with large $P$ and high SNR, referred to as the interference-limited scenario.
  Specially, as shown in Fig. \ref{fig6_a}, when $\text{SNR} = 15\,\text{dB}$, the GS-JPS algorithm outperforms the SLNR-PS algorithm for $P\ge 9$ when 
  $AS=10{}^\circ $ and for $P\ge 6$ when $ AS=18{}^\circ$.
  For the case of $AS=10{}^\circ $ and $P=15$, the sum-rate of the GS-JPS algorithm is $22.6$\% higher than that of SLNR-PS and $34.0$\% higher than that of MM-S. 
  In the noise-limited scenario as shown in Fig. \ref{fig6_b}, i.e., $\text{SNR} = -10\,\text{dB}$, the GS-JPS and the SLNR-PS have similar sum-rate, but the GS-JPS needs much less computational complexity than the SLNR-PS.
  For $AS=10^\circ$ and $P=18$, the GS-JPS algorithm outperforms the MM-S algorithm by about $18.6$\% in sum-rate and is slightly better than the SLNR-PS. 

		The performance gain of the GS-JPS scheme over the SLNR-PS scheme in the interference-limited scenario comes from 1) the ZF precoding adopted in the GS-JPS scheme performs better in sum-rate compared to the SLNR precoding for high $\text{SNR}$; 2) the GS-JPS provides more efficient port selection compared to the subtractive port selection in the SLNR-PS scheme, especially under the constraint of no port sharing between users.

	Fig. \ref{figure3} shows the sum rate of the three port-selection schemes versus the number of different effective ports $L$, where $M=64$, $P=12$, ${L}_{0}=4$, and $\text{SNR} = 15$ dB. With increasing $L$, the average power distribution of the ports is more dispersed, and the sum-rates of the GS-JPS scheme, the SLNR-PS scheme, and the MM-S scheme all decrease due to the increased overlap of effective distinguishable ports among users. In addition, the GS-JPS algorithm outperforms the other two schemes consistently with a sum-rate advantage of $4$ bps/Hz. The reason for this performance gain has already been analyzed in the previous paragraph.
	
	\begin{figure}[htp!]
		\vspace{-0.5cm} 
		\centering 
		\includegraphics[width=1\linewidth]{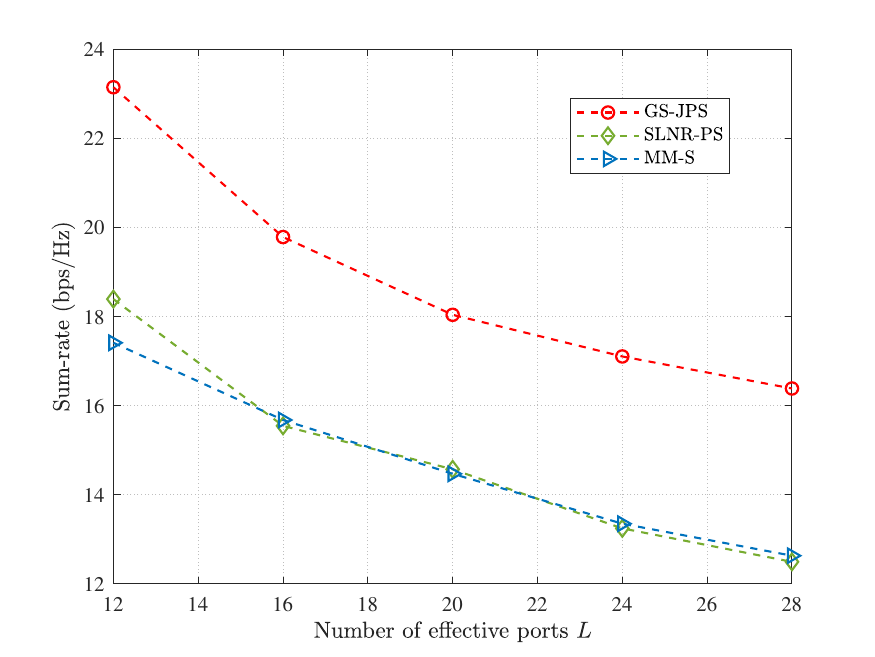}
		\caption{Sum-rate of the GS-JPS scheme, the SLNR-PS scheme and the MM-S scheme versus the number of effective ports $L$. $M=64$,  $P = 12$, ${L}_{0}=4$, $\text{SNR} = 15\,\text{dB}$, $\rho_{\rm{s}}=0$, $\rho_{\rm{c}}=1$.}
		\label{figure3}
		\vspace{-0.5cm}  
	\end{figure}

	\begin{figure*}[htp]
		\vspace{-0.6cm}
		\centering
		\subfigure[Sum-rate of different compression methods]{\label{fig4_a}
			
			\begin{minipage}[t]{0.5\linewidth}
				\centering
				\includegraphics[width=3.5in]{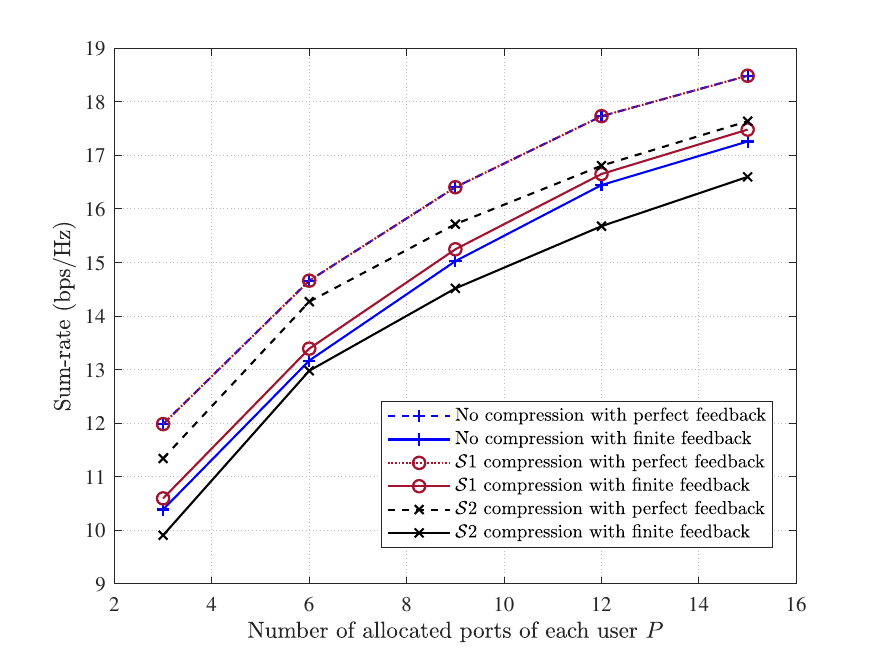}
			\end{minipage}%
		}%
		\subfigure[CCDF curves of the average compression ratio $\rm{CR_1}$ with $\mathcal{S}1$]{\label{fig4_b}
			\begin{minipage}[t]{0.5\linewidth}
				\centering
				\includegraphics[width=3.5in]{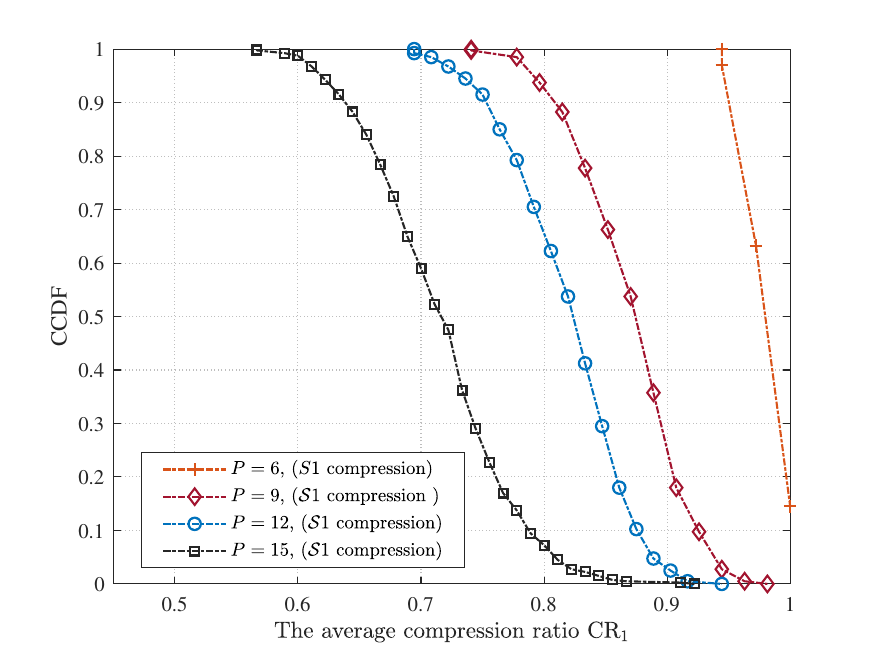}
			\end{minipage}%
		}%
		\centering
		\caption{Performance of the EDT-based feedback algorithm. $M=64$, ${L}=20$, ${L}_{0}=12$, $\text{SNR} = 15\,\text{dB}$, $\rho_{\rm{s}}=0$, $\rho_{\rm{c}}=1$.}
		\label{Figure4}
		\vspace{-0.3cm}
	\end{figure*}

	Next, we study the performance of port-selection-based CSI feedback and reconstruction scheme. Some compression methods are first explained. 
	$\mathcal{S}1$ denotes the compression method according to Eq. \eqref{equal3_18}, and for the method $\mathcal{S}2$, the largest $\left \lceil \frac{3P}{4} \right \rceil $ eigenvalues of $\mathbf{R}_{{\Lambda}_{u}}$ and their corresponding eigenvectors are used to compress the port coefficients, i.e., ${{\mathbf{{r}}}_{u,\text{r}}}=\boldsymbol{\Sigma} _{u,\mathcal{S}2}^{-1/2}\mathbf{U}_{u,\mathcal{S}2}^{\rm{H}}{\bar{\mathbf{{h}}}_{\Lambda_u}}$, where ${\boldsymbol{\Sigma}_{u,\mathcal{S}2}}$ contains the largest $\left \lceil \frac{3P}{4} \right \rceil $ eigenvalues of $\mathbf{R}_{{\Lambda}_{u}}$ as its diagonal elements and ${{\mathbf{U}}_{u,\mathcal{S}2}}$ is composed of the corresponding eigenvectors of these eigenvalues.
	Further, scalar quantization is considered where $4$ bits and $3$ bits are used to represent the amplitude and the phase of each feedback coefficient, respectively.
	Denote ${\text{C}}_{u}=7P$, ${\text{C}}_{1,u}=7r_u$ and ${\text{C}}_{2,u}=7\left \lceil \frac{3P}{4} \right \rceil $ as the feedback overhead (bits) for no compression, compression using $\mathcal{S}1$ and $\mathcal{S}2$, respectively.
	 Define ${\rm{CR_1 = }}\sum\nolimits_{u = 1}^U {{\text{C}}_{1,u}} /\sum\nolimits_{u = 1}^U {{{\text{C}}_{u}}} $ as the average compression ratio for all users with $\mathcal{S}1$.

	Fig. \ref{fig4_a} plots the sum-rate under different compression strategies, with perfect and finite feedback denoting whether the CSI is quantized or not, respectively, and Fig. \ref{fig4_b} shows the CCDF curves for $\rm{CR_1}$ with 
	 $M=64$, ${L}=20$, ${L}_{0}=12$ and $\text{SNR} = 15\,\text{dB}$.
	Fig. \ref{fig4_a} shows that $\mathcal{S}2$ with perfect feedback reduces the overhead by about $25$\% while only losing $5$\% sum-rate. 
 $\mathcal{S}1$ with perfect feedback, on the other hand, has negligible sum-rate performance loss, and the overhead is reduced by over $25$\% with a probability of $77$\% for $P =15$.
	Notice that 
	$\mathcal{S}1$ also performs better than the no-compression case in terms of sum-rate when finite feedback is considered. This is because the eigenvalues of port coefficient space are more suitable for scalar quantization due to their much lower correlation than that between port coefficients.
	In Fig. \ref{fig4_b}, the compression rate $\rm{CR_1}$ decreases with the selection of more ports, because more ports with correlation make the ratio of the rank to $P$ smaller.
	
	\begin{figure}[htbp!]
		\centering 
		\includegraphics[width=1\linewidth]{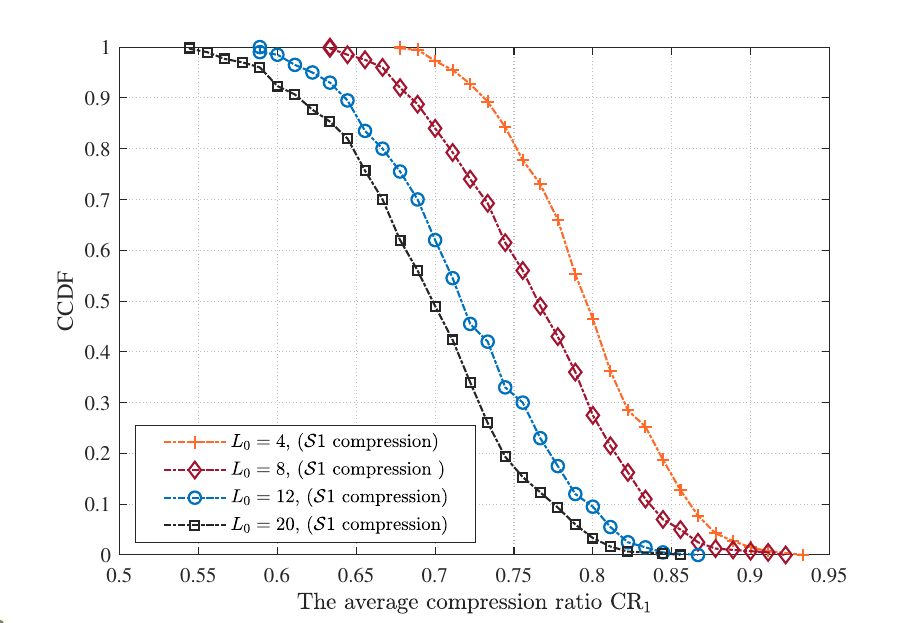}
		\caption{CCDF curves of the average compression ratio $\rm{CR_1}$  with $\mathcal{S}1$ compression method versus the number of correlated ports $L_0$. $M=64$, ${L}=20$, $P = 15$, $\text{SNR} = 15\,\text{dB}$, $\rho_{\rm{s}}=0$, $\rho_{\rm{c}}=1$.}
		\label{figure5}
	\end{figure}
	
	Fig. \ref{figure5} plots CCDF curves of the compression ratio $\rm{CR_1}$ versus the number of correlated ports $L_0$ between multiple BSs at each user with $M=64$, ${L}=20$, $P = 15$ and $\text{SNR} = 15\,\text{dB}$.
	With increasing $L_0$ the compression margin using the $\mathcal{S}1$ method becomes larger, i.e., $\mathbf{R}_{{\Lambda}_{u}}$ has a small rank with greater probability. 
	Therefore, the proposed EDT-based feedback algorithm, especially the $\mathcal{S}1$ method, has desirable performance in scenarios with abundant common scatterers.
	
\begin{figure*}[htbp!]
	\centering
	\subfigure[Accuracy of port prediction]{\label{figure6}
		
		\begin{minipage}[t]{0.5\linewidth}
			\centering
			\includegraphics[width=3.5in]{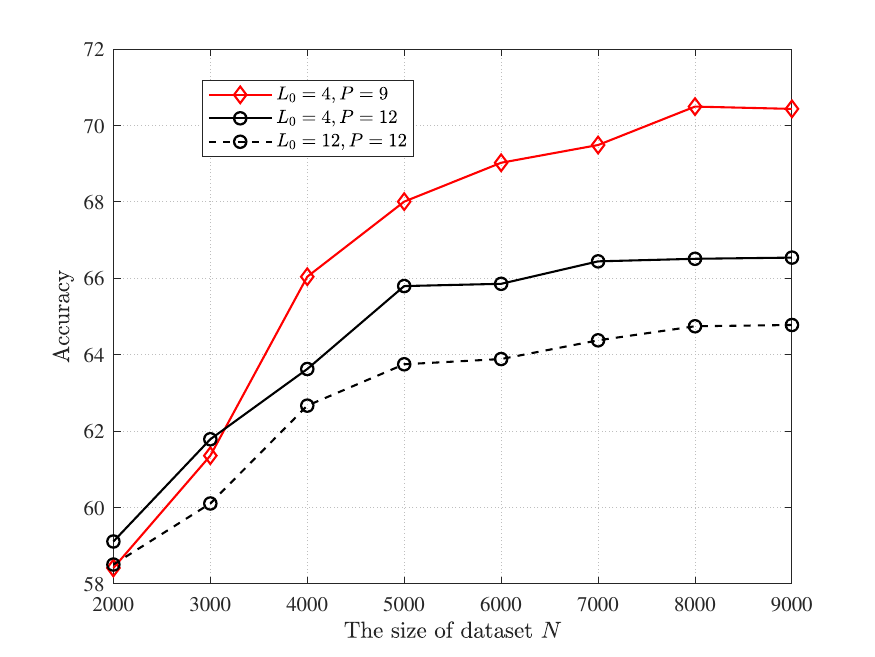}
		\end{minipage}%
	}%
	\subfigure[Sum-rate based on the proposed algorithms]{\label{figure7}
		\begin{minipage}[t]{0.5\linewidth}
			\centering
			\includegraphics[width=3.5in]{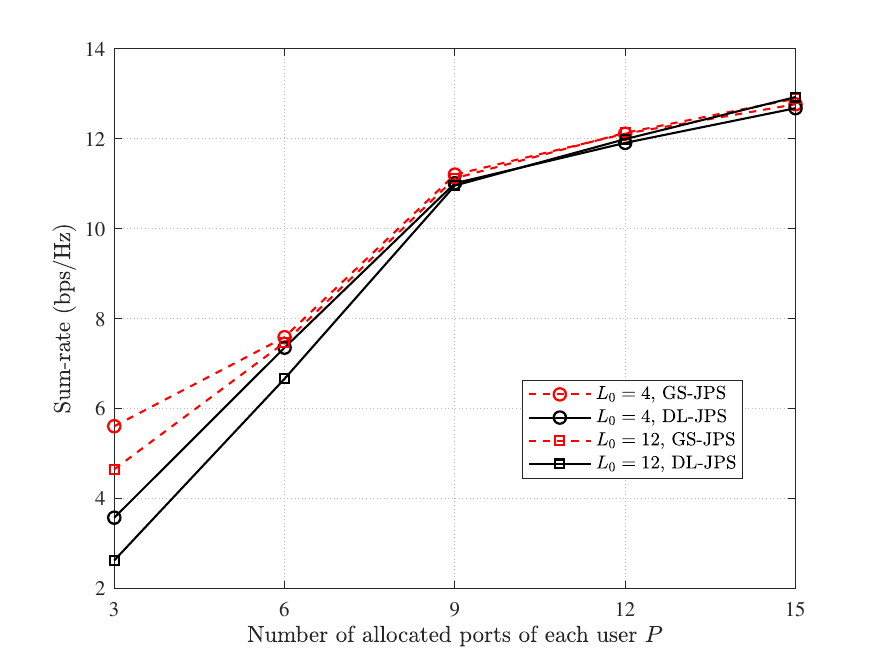}
		\end{minipage}%
	}%
	\centering
	\caption{Performance based on the port selection of the DL-JPS algorithm. $M=32$, ${L}=20$, $\text{SNR} = 15\,\text{dB}$.}
	\label{Figure4}
\end{figure*}

   Fig. \ref{figure6} shows the accuracy of DL-JPS for varying dataset size with $M=32$, ${L}=20$, $\text{SNR} = 15\,\text{dB}$. It can be concluded that the accuracy of port prediction initially increases rapidly as the sample size increases and eventually reaches a plateau. 
When the sample size is $6000$ and $9000$, the accuracy of port prediction for $P=12$ and $L_{0} = 4$ is about 66.0\% and 66.5\%, respectively, while the accuracy for $P=9$ and $L_{0} = 4$ is around 69.1\% and 71.2\%, respectively.
	As for $L_0=12$, due to the increased correlation of the port coefficients, the input-output relationship of the GS-JPS algorithm becomes more complex, resulting in relatively low accuracy of the DL-JPS learning. Specifically, the prediction accuracy of the DL-JPS scheme decreases by about $2$\% from $L_{0} = 4$ to $L_{0} = 12$ with the same $P=12$.
 
	Fig. \ref{figure7} shows the sum-rate based on the proposed GS-JPS scheme and DL-JPS scheme when $L_0 = 4 $ and $L_0 = 12 $ with data size $N=9000$. It can be seen that the DL-JPS scheme achieves a comparable sum-rate to that of the GS-JPS scheme. 
	Considering that the online execution complexity of DL-JPS comes from only one forward propagation of the employed deep network, its applicability in fast time-varying scenarios can be effectively guaranteed.

	\section{Conclusion}\label{Conclusion}
	\vspace{0em}
	In this paper, we study the FDD cell-free massive MIMO downlink with zero-forcing precoding and under a general spatial domain channel model with port coefficient correlation and heterogeneous average port power profile. A joint-port-selection-based channel acquisition and feedback scheme was proposed which uses an EDT-based algorithm to reduce the feedback overhead by sufficiently exploring the port correlation.  
	Further, we derived an expression of the system sum-rate as a function of the port coefficient correlation, average port power, and other system parameters.
	We then formulated the port selection problem to maximize the sum-rate.
	As the size of the search space for the port selection grows exponentially with the total number of antennas and the total number of selected ports, we proposed two low-complexity schemes, the GS-JPS algorithm and its DL-aided imitator algorithm DL-JPS. 
	Simulations show that our derived sum-rate expression is accurate and the overall proposed scheme consisting of the EDT feedback and the GS-JPS or the DL-JPS performs better in sum-rate with comparable feedback overhead, compared to baseline schemes, e.g., the SLNR-PS and MM-S schemes.
	\vspace{-0em}


	%

	\section{Appendix}

	\subsection{Proof of Lemma \ref{lemma_1}}\label{{Proof of Lemma 1}}
	
	\textcolor{black}{
	Define $\zeta_u \triangleq \sum\limits_{b=1}^{B}{\sum\limits_{k \in {\Lambda _{b,u}}} {{{\bar \beta }_{b,u,k}}{{\left| {{\hat{\bar h}_{b,u,k}}} \right|}^2}}} $, we have 
	\begin{equation}\label{zc_eq_14}
		\begin{aligned}
			\zeta_u &= {\hat{\bar{\bf{h}}}}^{\rm{H}}_{{\Lambda}_{u}}
			{{\bf{B }}_{\Lambda_u}}{{\bf{B }}^{\rm{H}}_{\Lambda_u}}
			{\hat{\bar{\bf{h}}}}_{{\Lambda}_{u}},\\
			& = {\hat{\bar{\bf{h}}}}^{\rm{H}}_{{\Lambda}_{u}}
			{ {\hat{\bf{R}}_{\Lambda_u}}} ^{-{\rm{1}}/{2}}{{\bf{S }}_{u}}{ {\hat{\bf{R}}_{\Lambda_u}}} ^{-{\rm{1}}/{2}}
			{\hat{\bar{\bf{h}}}}_{{\Lambda}_{u}}.
		\end{aligned}
	\end{equation}
Given the EVD of the positive semi-definite ${{\bf{S}}_u}$: ${{\bf{S}}_u} = {{\bf{U}}_{u,{\rm{s}}}}{{\bf{\Sigma }}_{u,{\rm{s}}}}{\bf{U}}_{u,{\rm{s}}}^{\rm{H}}$, $\zeta_u$ can be further expressed as
	\begin{equation} \label{zetau_appendix}
		\begin{aligned}
			{\zeta _u} 
			 &= {\hat{\bar{\bf{h}}}}^{\rm{H}}_{{\Lambda}_{u}}
			{ {\hat{\bf{R}}_{\Lambda_u}}} ^{-{\rm{1}}/{2}}
			{{\bf{U}}_{u,{\rm{s}}}}{{\bf{\Sigma }}_{u,{\rm{s}}}}{\bf{U}}_{u,{\rm{s}}}^{\rm{H}}
		{ {\hat{\bf{R}}_{\Lambda_u}}} ^{-{\rm{1}}/{2}}
			{\hat{\bar{\bf{h}}}}_{{\Lambda}_{u}},\\
			 &= {\check{\bf{h}}^{\rm{H}}_{{\Lambda}_{u}}}{{\boldsymbol{\Sigma }}_{u,{\rm{s}}}}{\check{\bf{h}}_{{\Lambda}_{u}}} ,\\
			&=\sum\limits_{i = 1}^{{\rho _u}} {\frac{{{\lambda _{u,i}}}}{2}{{\left| {\sqrt 2 \check h_{\Lambda_u,i}} \right|}^2}}  ,
		\end{aligned}
	\end{equation}
	where ${\check{\bf{h}}_{{\Lambda}_{u}}} \triangleq {\bf{U}}_{u,{\rm{s}}}^{\rm{H}}
	{ {\hat{\bf{R}}_{\Lambda_u}}} ^{-{\rm{1}}/{2}}
	{\hat{\bar{\bf{h}}}}_{{\Lambda}_{u}}$ and ${\check h_{\Lambda_u,i}}$ is the $i$-th element of ${\check{\bf{h}}_{{\Lambda}_{u}}}$. Since  ${{\hat{\bar{\bf{h}}}}_{{\Lambda}_{u}}} \sim \mathbb{CN}({\bf{0}},{ {\hat{\bf{R}}_{\Lambda_u}}})$, it can be shown that ${\check{\bf{h}}_{{\Lambda}_{u}}} \sim \mathbb{CN}({\bf{0}},{\bf{I}}_{K_u})$ and ${\left| {\sqrt 2 \check h_{\Lambda_u,i}} \right|^2} \sim \chi _2^2\left( 0 \right)$.}
	From \cite[Eq. (2.4)]{provost1996exact}, the PDF of $\zeta_u$ can be written as 
	\begin{equation}
		\label{PDF_zeta_u}
		{f_{{\zeta _u}}}\left( x \right) = \sum\limits_{k = 0}^\infty  {\frac{{{\alpha_{u,k}}}}{{\Gamma \left( {{\rho _u} + k} \right){{\left( {2{\beta _u}} \right)}^{{\rho _u} + k}}}}} {x^{{\rho _u} + k - 1}}{e^{ - \frac{1}{{2{\beta _u}}}x}},
	\end{equation}
	where the parameters are defined in Eq. \eqref{para_lemma1}. Therefore,
	\begin{equation}
		\label{expectation}
		\begin{aligned}
			&\,\quad\mathbb{E}\left\{ {{\left\| {{{{\bar{\mathbf w}}}}_{u}} \right\|}^{2}} \right\}=\frac{1}{M}\mathbb{E}\left\{ {\frac{1}{{{\zeta _u}}}} \right\} ,\\
			&= \frac{1}{M}\sum\limits_{k = 0}^\infty  {\frac{{{\alpha_{u,k}}}}{{\Gamma \left( {{\rho _u} + k} \right){{\left( {2{\beta _u}} \right)}^{{\rho _u} + k}}}}} \int_0^\infty  {{x^{{\rho _u} + k - 2}}} {e^{ - \frac{1}{{2{\beta _u}}}x}}\mathrm{d}x,\\
			&= \frac{1}{M}\sum\limits_{k = 0}^\infty  {\frac{{{\alpha_{u,k}}}{\left( {2{\beta _u}} \right)^{{\rho _u} + k - 1}}}{{\Gamma \left( {{\rho _u} + k} \right){{\left( {2{\beta _u}} \right)}^{{\rho _u} + k}}}}}  \Gamma \left( {{\rho _u} + k - 1} \right),\\
			&= \frac{1}{M}\sum\limits_{k = 0}^\infty  {\frac{{{\alpha_{u,k}}}}{{2{\beta _u}\left( {{\rho _u} + k - 1} \right)}}} , \,{\text{for}}\,\,{\rho _u} > 1.
		\end{aligned}
	\end{equation}


	\subsection{Proof of Lemma \ref{lemma2}}
	
	\newcounter{TempEqCnt4}                         
	\setcounter{TempEqCnt4}{\value{equation}} 
	\setcounter{equation}{48}  
	\begin{figure*}[hb] 
		\centering 
		\hrulefill 
		\textcolor{black}{
			\begin{equation} \label{eq_44} 
				\begin{aligned}
					{{\left[ {{{\hat{\mathbf{H}}}}^{\rm{H}}} {{{\tilde{\mathbf{h}}}}_{u}}\tilde{\mathbf{h}}_{u}^{\rm{H}} \hat{\mathbf{H}} \right]}_{{{v}_{1}},{{v}_{2}}}}&=
					{M^2}\sum\limits_{b' = 1}^B {\sum\limits_{b = 1}^B {\hat {\bar {\bf{h}} }}_{b,{v_1}}^{\rm{H}}{\bf{B}}_{{\Lambda _{b,{v_1}}}}^{\rm{H}}{{\bf{B}}_{\Lambda _{b,u}^C \cap {\Lambda _{b,{v_1}}}}}{{{\bf{\bar h}}}_{b,u}}{\bf{\bar h}}_{b',u}^{\rm{H}}{\bf{B}}_{\Lambda _{b',u}^C \cap {\Lambda _{b',{v_2}}}}^{\rm{H}}{{\bf{B}}_{{\Lambda _{b',{v_2}}}}}{\hat {\bar {\bf{h}} }}_{b',{v_2}}} \\
					&\quad+ {M^2}\sum\limits_{b' = 1}^B {\sum\limits_{b = 1}^B {\hat {\bar {\bf{h}}} }_{b,{v_1}}^{\rm{H}}{\bf{B}}_{{\Lambda _{b,{v_1}}}}^{\rm{H}}{\bf{F}}_{\Lambda _{b,{v_1}}^{}}^{\rm{H}}{{\bf{F}}_{\Lambda _{b,u}^{}}}{{\bf{B}}_{\Lambda _{b,u}^{}}}{\tilde{\bar {\bf{ h}}}}_{b,u}}{\tilde{\bar {\bf {h}}}}_{b',u}^{\rm{H}}{\bf{B}}_{\Lambda _{b',u}^{}}^{\rm{H}}{\bf{F}}_{\Lambda _{b',u}^{}}^{\rm{H}}{{\bf{F}}_{\Lambda _{b',{v_2}}^{}}}{{\bf{B}}_{{\Lambda _{b',{v_2}}}}}{\hat {\bar {\bf{h}}} }_{b',{v_2}} .
				\end{aligned}
		\end{equation} }
	\end{figure*}
	\setcounter{equation}{\value{TempEqCnt4}} 

	\begin{figure*}[hb] 
		\centering 
		\vspace*{-20pt} 
		\newcounter{TempEqCnt5}                         
		\setcounter{TempEqCnt5}{\value{equation}} 
		\setcounter{equation}{50}  
		\textcolor{black}{
			\begin{equation}
				\label{xi_uv_up}
				{\xi_{u,v}^{\text{up}}} =
				\begin{cases}
					{M^2}\sum\limits_{b' = 1}^B {\sum\limits_{b = 1}^B {\hat {\bar {\bf{h}}} }_{b,u}^{\rm{H}}{\bf{B}}_{{\Lambda _{b,u}}}^{\rm{H}}{{\bf{B}}_{\Lambda _{b,u}^{}}}{\tilde{\bar {\bf { h}}}}_{b,u}{\tilde{\bar {\bf {h}}}}_{b',u}^{\rm{H}}{\bf{B}}_{\Lambda _{b',u}^{}}^{\rm{H}}{{\bf{B}}_{{\Lambda _{b',u}}}}{\hat {\bar {\bf{h}}} }_{b',u}} ,& {\text{ if }}\, v=u\\
					{M^2}{\sum\limits_{b'=1}^{B}{\sum\limits_{b=1}^{B}{\hat{\bar{\bf{h}}}}_{b,v}^{\rm{H}}\mathbf{B}_{{{\Lambda }_{b,v}}}^{\rm{H}}{{\mathbf{B}}_{\Lambda _{b,u}^{C}\cap {{\Lambda }_{b,v}}}} {{{\mathbf{\bar{h}}}}_{b,u}}\mathbf{\bar{h}}_{b',u}^{\rm{H}} \mathbf{B}_{\Lambda _{b',u}^{C}\cap {{\Lambda }_{b',v}}}^{\rm{H}}{{\mathbf{B}}_{{{\Lambda }_{b',v}}}}{{\hat{\bar{\bf{h}}}}_{b',v}}}},
					& {\text{ if }}\, v \ne u.
				\end{cases} 
		\end{equation} }
	\setcounter{equation}{\value{TempEqCnt5}} 
		\newcounter{TempEqCnt6}                         
		\setcounter{TempEqCnt6}{\value{equation}} 
		\setcounter{equation}{55}  
		\textcolor{black}{
			\begin{equation} \label{xi_uv_up_y}
				\xi_{u,v}^{{\rm{up}},y} = 
				\begin{cases}
					M^2\sum\limits_{b' = 1}^B {\sum\limits_{b \ne b'}^B {\sum\limits_{l \in {\Lambda _{b,u}}} {\sum\limits_{l' \in {\Lambda _{b',u}}} {
									{{{\bar \beta }_{b,u,l}}{{\bar \beta }_{b',u,l'}}{{\tilde {\bar h}}_{b,u,l}}\tilde {\bar h}_{b',u,l'}^*{\hat {\bar h}_{b,u,l}^*}{{\hat {\bar h}}_{b',u,l'}}} } } } } ,
					& {\text{ if }}\, v=u \\
					M^2\sum\limits_{b' = 1}^B {\sum\limits_{b  \ne  b'}^B {\sum\limits_{l \in {\Lambda _{b,v}}} {\sum\limits_{l' \in {\Lambda _{b',v}}} {\sqrt {{{\bar \beta }_{b,v,l}}{{\bar \beta }_{b,u,l}}{{\bar \beta }_{b',v,l'}}{{\bar \beta }_{b',u,l'}}}{{\bar h}_{b,u,l}}{{\bar h}^*_{b',u,l'}}
									{\hat{\bar {h}}_{b,v,l}^*{{\hat{\bar{ h}}}_{b',v,l'}} }  } } } }  ,		
					& {\text{ if }}\, v \ne u.
				\end{cases}
		\end{equation} }
		\setcounter{equation}{\value{TempEqCnt6}} 
	\end{figure*}

	From Eq. \eqref{eqn_ZF}, we obtain
	\begin{equation} \label{equal_lemma2_1}
		\begin{aligned}
			|{{{\tilde{\mathbf{h}}}}^{\rm{H}}_{u}}\mathbf{\bar{w}}_{v}|^2 
			&= {{\left[ {{\left( {{{\hat{\mathbf{H}}}}^{\rm{H}}}\hat{\mathbf{H}} \right)}^{-1}}{{{\hat{\mathbf{H}}}}^{\rm{H}}} {{{\tilde{\mathbf{h}}}}_{u}}\tilde{\mathbf{h}}_{u}^{\rm{H}} \hat{\mathbf{H}}{{\left( {{{\hat{\mathbf{H}}}}^{\rm{H}}}\hat{\mathbf{H}} \right)}^{-1}} \right]}_{v,v}} 
		\end{aligned}.
	\end{equation}
	For $\mathbf{i}=M\left( i-1 \right)+1:Mi$ and $\mathbf{k}=M\left( k-1 \right)+1:Mk$, ${{\left[  {{{\tilde{\mathbf{h}}}}_{u}}\tilde{\mathbf{h}}_{u}^{\rm{H}}  \right]}_{\mathbf{i},\mathbf{k}}}\in {{\mathbb{C}}^{M\times M}}$ is the $(i,k)$-th $M \times M$ block of ${{ {{{\tilde{\mathbf{h}}}}_{u}}\tilde{\mathbf{h}}_{u}^{\rm{H}} }}$. It can be calculated as
	\textcolor{black}{
	\begin{equation} 
		\begin{aligned}
			{{\left[  {{{\tilde{\mathbf{h}}}}_{u}}\tilde{\mathbf{h}}_{u}^{\rm{H}}  \right]}_{\mathbf{i},\mathbf{k}}} &=  
			\tilde{\mathbf{h}}_{i, u}\tilde{\mathbf{h}}^{\rm{H}}_{k, u},\\
			&=
			M{{\mathbf{F}}_{\Lambda _{i,u}^{C}}}{{\mathbf{B}}_{\Lambda _{i,u}^{C}}} {{{\mathbf{\bar{h}}}}_{i,u}}\mathbf{\bar{h}}_{k,u}^{\rm{H}} \mathbf{B}_{\Lambda _{k,u}^{C}}^{\rm{H}}\mathbf{F}_{\Lambda _{k,u}^{C}}^{\rm{H}} \\
			&\quad+M{{\bf{F}}_{\Lambda _{i,u}}}{{\bf{B}}_{\Lambda _{i,u}}}{{{\tilde{\bar {\bf h}}}}_{i,u}}{\tilde{\bar {\bf h}}}_{k,u}^{\rm{H}}{\bf{B}}_{\Lambda _{k,u}^{}}^{\rm{H}}{\bf{F}}_{\Lambda _{k,u}^{}}^{\rm{H}}\\
			&\quad + M{{\bf{F}}_{\Lambda _{i,u}^C}}{{\bf{B}}_{\Lambda _{i,u}^C}}{{{\bf{\bar h}}}_{i,u}}{\tilde{\bar {\bf h}}}_{k,u}^{\rm{H}}{\bf{B}}_{\Lambda _{k,u}^{}}^{\rm{H}}{\bf{F}}_{\Lambda _{k,u}^{}}^{\rm{H}}\\
			&\quad + M{{\bf{F}}_{\Lambda _{i,u}^{}}}{{\bf{B}}_{\Lambda _{i,u}^{}}}{{{\tilde{\bar {\bf h}}}}_{i,u}}{\bf{\bar h}}_{k,u}^{\rm{H}}{\bf{B}}_{\Lambda _{k,u}^C}^{\rm{H}}{\bf{F}}_{\Lambda _{k,u}^C}^{\rm{H}}.
		\end{aligned}
	\end{equation}}Recall that $\Lambda _{b,u}^{C}\cap {{\Lambda }_{b,u}}=\emptyset  ,\forall u \in \mathbb{U}$, ${{\left[ {{{\hat{\mathbf{H}}}}^{\rm{H}}} {{{\tilde{\mathbf{h}}}}_{u}}\tilde{\mathbf{h}}_{u}^{\rm{H}} \hat{\mathbf{H}} \right]}_{{{v}_{1}},{{v}_{2}}}}$, ${{v}_{1}},{{v}_{2}}\in \mathbb{U}$ can be represented as Eq. \eqref{eq_44} at the bottom of the next page, where ${{\mathbf{B}}_{\Lambda _{b,u}^{C}\cap {{\Lambda }_{b,{{v}_{1}}}}}}$ consists of the rows of ${{\mathbf{B}}_{b,u}}$ with indices $i \in {{\Lambda }_{b,{{v}_{1}}}}$ if $v_1 \ne u$.
	If $v_1 = u$, then ${{\mathbf{B}}_{\Lambda _{b,u}^{C}\cap {{\Lambda }_{b,{{v}_{1}}}}}} = {\bf{0}}$.
	From Eq. \eqref{equal3_4}, Eq. \eqref{equal_lemma2_1} can be calculated as
	 \setcounter{equation}{49}
	\begin{equation}\label{equal3_6}
		|{{{\tilde{\mathbf{h}}}}^{\rm{H}}_{u}}\mathbf{\bar{w}}_{v}|^2  =\frac{{\xi_{u,v}^{\text{up}}}}{\xi _{v}^{{\rm{down}}}},
	\end{equation}
	where  ${\xi_{u,v}^{\text{up}}}$ is defined as Eq. \eqref{xi_uv_up} at the bottom of the next page, which can be briefly derived from Eq. \eqref{eq_44} , $\xi _{v}^{{\rm{down}}} =M^2{\zeta^2 _{v}}$, and ${\zeta _{v}} = {\sum\limits_{b = 1}^B {\sum\limits_{k \in {\Lambda _{b,v}}} {{{\bar \beta }_{b,v,k}}{{\left| {{\hat{\bar h}}_{b,v,k}} \right|}^2}} } }$.

	First, we analyze the expression for the numerator with the case of $v=u$, the second half of ${\xi_{u,v}^{\text{up}}}$ satisfies
	\setcounter{equation}{51}
	\begin{equation}
		\begin{aligned}
			&\quad	{ \left[ {\tilde{\bar {\bf{ h}}}}_{b,u} {\tilde{\bar {\bf {h}}}}_{b',u}^{\rm{H}}{\bf{B}}_{\Lambda _{b',u}}^{\rm{H}}{\bf{B}}_{\Lambda _{b',u}}{\hat {\bar {\bf{h}}}}_{b',u}\right]_m} \\
			&= \sum\limits_{l' \in {\Lambda _{b',u}}} {{{\bar \beta }_{b',u,l'}}} {\tilde {\bar h}_{b,u,m}}\tilde {\bar h}_{b',u,l'}^*{\hat {\bar h}_{b',u,l'}},\forall m = 1,...,M.
		\end{aligned}
	\end{equation}
	And ${\xi_{u,v}^{\text{up}}}$ can be rewritten as
	\begin{equation}
		\begin{aligned}
		&\xi _{u,v}^{{\rm{up}}} = \\
			&M^2\sum\limits_{b' = 1}^B 
			\sum\limits_{b = 1}^B 
			\sum\limits_{l \in {\Lambda _{b,u}}}^{}\! 
			\sum\limits_{l' \in {\Lambda _{b',u}}}^{}\!\!\! 
			{{\bar \beta }_{b,u,l}}{{\bar \beta }_{b',u,l'}}
			{{\tilde {\bar h}}_{b,u,l}}\tilde {\bar h}_{b',u,l'}^*
			{\hat {\bar h}_{b,u,l}^*{{\hat {\bar h}}_{b',u,l'}}}.
		\end{aligned}
	\end{equation}

	In the same way,  ${\xi_{u,v}^{\text{up}}}$ in the case of $v \ne u$ can be expressed as 
	\begin{equation}
		\begin{aligned}
			{\xi_{u,v}^{\text{up}}}
			 = M^2\sum\limits_{b' = 1}^B \sum\limits_{b = 1}^B \sum\limits_{l \in {\Lambda _{b,v}}} \sum\limits_{l' \in {\Lambda _{b',v}}}\!\!\!   
			&\sqrt {{{\bar \beta }_{b,v,l}}{{\bar \beta }_{b,u,l}}{{\bar \beta }_{b',v,l'}}{{\bar \beta }_{b',u,l'}}} \\
			&\times {{\bar h}_{b,u,l}}{{\bar h}^*_{b',u,l'}}
			{\hat{\bar h}}_{b,v,l}^*{\hat{\bar h}_{b',v,l'}}  .
		\end{aligned}
	\end{equation}

	 Due to the correlation between ${\xi_{u,v}^{\text{up}}}$ and $\zeta_v$, both of which follow the generalized chi-squared distribution, it is challenging to compute $\mathbb{E}\left\{ \frac{{\xi_{u,v}^{\text{up}}}}{{\xi _v^{{\rm{down}}}}} \right\}$. 
	 Note that ${\xi_{u,v}^{\text{up}}}$ can be further split into $\xi_{u,v}^{{\rm{up}},x}$ $+$ $\xi_{u,v}^{{\rm{up}},y}$, where
	 \begin{equation}
	 	\linespread{1.0} \selectfont
	 	\xi_{u,v}^{{\rm{up}},x} =
	 	\begin{cases}
	 	 M^2\sum\limits_{b = 1}^B 
	 		\left|\sum\limits_{k \in {\Lambda _{b,u}}}^{} 
	 			{{\bar \beta }_{b,u,k}}{{\tilde {\bar h}}_{b,u,k}}{\hat {\bar h}_{b,u,k}^*}
	 		\right|^2,
	 		& {\text{ if }}\, v=u \\
	 		M^2\sum\limits_{b = 1}^B 
	 		\left|
	 		{\sum\limits_{k \in {\Lambda _{b,v}}}\!\! \!
	 			{\sqrt{{{\bar \beta }_{b,v,k}}{{\bar \beta }_{b,u,k}}}
	 				{{ {{{\bar h}_{b,u,k}}} }}	
	 				{{ {{\hat{\bar h}^*_{b,v,k}}} }}} }
 			\right|^2\!\!\!\!\!\!\!& {,\text{if }}\, v\ne u,
	 	\end{cases}
	 \end{equation}
	 and $\xi_{u,v}^{{\rm{up}},y}$ is shown as Eq. \eqref{xi_uv_up_y} at the bottom of this page.
	 Comparing $\xi_{u,v}^{{\rm{up}},x}$ and $\zeta_v$, we notice that they have many elements in common, i.e., ${{{\bar \beta }_{b,v,k}}{{\left| {{\hat{\bar h}_{b,v,k}}} \right|}^2}}$, except that the elements are preceded by different relaxation factors, so the two are very tightly correlated, while the correlation between $\xi_{u,v}^{{\rm{up}},y}$ and ${\xi _v^{{\rm{down}}}}$ is relatively weak. We use different approximation treatments for the two terms as follows
	 \setcounter{equation}{56}
	\begin{equation}\label{approx}
		\begin{aligned}
			\mathbb{E}\left\{ \frac{{\xi_{u,v}^{\text{up}}}}{{\xi _v^{{\rm{down}}}}} \right\}&=\mathbb{E}\left\{ \frac{\xi_{u,v}^{{\rm{up}},x}}{\xi _{v}^{\text{down}}} \right\}+\mathbb{E}\left\{ \frac{\xi_{u,v}^{{\rm{up}},y}}{\xi _{v}^{\text{down}}} \right\},\\
			&\approx \frac{\mathbb{E}\left\{\xi_{u,v}^{{\rm{up}},x} \right\}}{\mathbb{E}\left\{ \xi _{v}^{\text{down}} \right\}}+\mathbb{E}\left\{ \xi_{u,v}^{{\rm{up}},y}\right\}\mathbb{E}\left\{ \frac{1}{\xi _{v}^{\text{down}}} \right\}.
		\end{aligned}
	\end{equation}

	In the case of $v=u$, we have 
	\begin{equation}\label{xi_up_x_equ}
		\begin{aligned}
			& \quad\,\,\,\,{\mathbb{E}\left\{\xi_{u,v}^{{\rm{up}},x} \right\}} \\ 
			& \overset{(a)}{=}M^2
			{\mathbb{E}}\left\{ {{\hat{\bar{\bf{h}}}}^{\rm{H}}_{{\Lambda}_{u}}}
			{{\bf{B }}^{u}_{\Lambda_u}}{{\bf{B }}^{u,{\rm{H}}}_{\Lambda_u}}
			{\tilde{\bar{\bf{h}}}}_{\Lambda_u}{\tilde{\bar{\bf{h}}}}^{\rm{H}}_{\Lambda_u}
			{{\bf{B }}_{\Lambda_u}}{{\bf{B }}^{\rm{H}}_{\Lambda_u}}
			{{\hat{\bar{\bf{h}}}}_{{\Lambda}_{u}}}  \right\},\\
			& \overset{(b)}{=}M^2
			{\mathbb{E}}\left\{ {{\hat{\bar{\bf{h}}}}^{\rm{H}}_{{\Lambda}_{u}}}
			{{\bf{B }}^{u}_{\Lambda_u}}{{\bf{B }}^{u,{\rm{H}}}_{\Lambda_u}}
			{\mathbb{E}}\left\{{\tilde{\bar{\bf{h}}}}_{\Lambda_u}{\tilde{\bar{\bf{h}}}}^{\rm{H}}_{\Lambda_u}  \right\}
			{{\bf{B }}_{\Lambda_u}}{{\bf{B }}^{\rm{H}}_{\Lambda_u}}
			{{\hat{\bar{\bf{h}}}}_{{\Lambda}_{u}}}  \right\},\\
			& =  M^2
			{\mathbb{E}}\left\{ {{\hat{\bar{\bf{h}}}}^{\rm{H}}_{{\Lambda}_{u}}}
			{{\bf{B }}^{u}_{\Lambda_u}}{{\bf{B }}^{u,{\rm{H}}}_{\Lambda_u}}
			{{\tilde{\bf{R}}_{\Lambda_u}}}
			{{\bf{B }}_{\Lambda_u}}{{\bf{B }}^{\rm{H}}_{\Lambda_u}}
			{{\hat{\bar{\bf{h}}}}_{{\Lambda}_{u}}}  \right\},\\
			& =M^2 {\mathbb{E}}\left\{ {{\hat{\bar{\bf{h}}}}^{\rm{H}}_{{\Lambda}_{u}}}
			{ {\hat{\bf{R}}_{\Lambda_u}}} ^{-{\rm{1}}/{2}}
			{{\bf{S }}_{u,u}}{ {\hat{\bf{R}}_{\Lambda_u}}} ^{-{\rm{1}}/{2}}
			{{\hat{\bar{\bf{h}}}}_{{\Lambda}_{u}}}  \right\},\\
		\end{aligned}
	\end{equation}
	where $(a)$ replaces scalar accumulation with matrix multiplication, and $(b)$ follows from the independence between ${{\hat{\bar{\bf{h}}}}_{{\Lambda}_{v}}} $ and ${\tilde{\bar{\bf{h}}}}_{\Lambda_u}$.

	For ${\mathbb{E}\left\{\xi_{u,v}^{{\rm{up}},x} \right\}}$ with the case of $v \ne u$, we have
	\begin{equation}\label{xi_up_x}
	\begin{aligned}
		& \quad\,\,\,\,{\mathbb{E}\left\{\xi_{u,v}^{{\rm{up}},x} \right\}} \\ & =
			M^2{\mathbb{E}}\left\{ {{\hat{\bar{\bf{h}}}}^{\rm{H}}_{{\Lambda}_{v}}}
			{{\bf{B }}^{u}_{\Lambda_v}}{{\bf{B }}^{u,{\rm{H}}}_{\Lambda_v}}
			{\bf{A}}_{v}{\bar{\bf{h}}}_{u}{\bar{\bf{h}}}^{\rm{H}}_{u}{\bf{A}}^{\rm{T}}_{v}
			{{\bf{B }}_{\Lambda_v}}{{\bf{B }}^{\rm{H}}_{\Lambda_v}}
			{{\hat{\bar{\bf{h}}}}_{{\Lambda}_{v}}}  \right\},\\
			&\overset{(a)}{=}M^2
			{\mathbb{E}}\left\{ {{\hat{\bar{\bf{h}}}}^{\rm{H}}_{{\Lambda}_{v}}}
			{{\bf{B }}^{u}_{\Lambda_v}}{{\bf{B }}^{u,{\rm{H}}}_{\Lambda_v}}
			{\bf{A}}_{v}  {\mathbb{E}}\left\{{\bar{\bf{h}}}_{u}{\bar{\bf{h}}}^{\rm{H}}_{u}
			\right\}
			{\bf{A}}^{\rm{T}}_{v}
			{{\bf{B }}_{\Lambda_v}}{{\bf{B }}^{\rm{H}}_{\Lambda_v}}
			{{\hat{\bar{\bf{h}}}}_{{\Lambda}_{v}}}  \right\},\\
			&\overset{(b)}{=} M^2
			{\mathbb{E}}\left\{ {{\hat{\bar{\bf{h}}}}^{\rm{H}}_{{\Lambda}_{v}}}
			{{\bf{B }}^{u}_{\Lambda_v}}{{\bf{B }}^{u,{\rm{H}}}_{\Lambda_v}}
			{{\bf{B }}_{\Lambda_v}}{{\bf{B }}^{\rm{H}}_{\Lambda_v}}
			{{\hat{\bar{\bf{h}}}}_{{\Lambda}_{v}}}  \right\},\\
			 &=  M^2	{\mathbb{E}}\left\{ {{\hat{\bar{\bf{h}}}}^{\rm{H}}_{{\Lambda}_{v}}}
			{ {\hat{\bf{R}}_{\Lambda_v}}} ^{-{\rm{1}}/{2}}
			{{\bf{S }}_{u,v}}{ {\hat{\bf{R}}_{\Lambda_v}}} ^{-{\rm{1}}/{2}}
			{{\hat{\bar{\bf{h}}}}_{{\Lambda}_{v}}}  \right\},
	\end{aligned}
	\end{equation}
where $(a)$ follows from the independence between ${{\hat{\bar{\bf{h}}}}_{{\Lambda}_{v}}} $ and ${\bar{\bf{h}}}_{u}$, $\forall v\ne u$.
	Recall that the diagonal elements of ${\bf{R}}_{u} = {\mathbb{E}}\left\{{\bar{\bf{h}}}_{u}{\bar{\bf{h}}}^{\rm{H}}_{u}
	\right\}$ are all $1$, indicating that the port auto-correlation coefficient is $1$, i.e., $\rho_{u,b,b}^{l,l}  = 1$. Thus, ${\bf{A}}_{v}{\bf{R}}_{u}{\bf{A}}_{v}^{\rm{T}} = {\bf{I}}_{K_v}$, and $(b)$ holds.
	
	Define $\check{\check{\mathbf{ h}} } _{{\Lambda}_{v}} =  { {\hat{\bf{R}}_{\Lambda_v}}} ^{-{\rm{1}}/{2}}
	{{\hat{\bar{\bf{h}}}}_{{\Lambda}_{v}}}$. We have $\check{\check{\mathbf{ h}} } _{{\Lambda}_{v}} \sim \mathbb{CN}({\bf{0}},{\bf{I}}_{K_v})$. Therefore, from Eq. \eqref{xi_up_x_equ} and Eq. \eqref{xi_up_x},
	\begin{equation}
	{\mathbb{E}\left\{\xi_{u,v}^{{\rm{up}},x} \right\}} = M^2{\mathbb{E}\left\{ \check{\check{\mathbf{ h}} }^{\rm{H}} _{{\Lambda}_{v}} {\bf{S}}_{u,v} \check{\check{\mathbf{ h}} } _{{\Lambda}_{v}} \right\}} = M^2{{\rm{tr}}\left( {\mathbf{S}_{u,v}} \right)}.
	\end{equation}
	
	Similar to Eq. \eqref{zetau_appendix}, ${{\xi _v^{{\rm{down}}}}}$ can be re-expressed as ${{\xi _v^{{\rm{down}}}}} = M^2 \left({\check{\bf{h}}^{\rm{H}}_{{\Lambda}_{v}}}{{\boldsymbol{\Sigma }}_{v,\text{s}}}{\check{\bf{h}}_{{\Lambda}_{v}}}\right)^2$,
	where $ {\check{\bf{h}}_{{\Lambda}_{v}}} \sim \mathbb{CN}{\left( {{\bf{0}},{{\bf{I}}_{K_v}}} \right)}$, and  ${{\boldsymbol{\Sigma }}_{v,\text{s}}}$ is the eigenvalue matrix of  ${{\bf{S}}_{v}}$ defined in Eq. \eqref{S_u}. Therefore, 
	\begin{equation}
		\begin{aligned}
			{\mathbb{E}\left\{ \xi _{v}^{\text{down}} \right\}}&=M^2{\mathbb{E}\left\{ {{\left| \sum\limits_{i=1}^{K_v}{{{\lambda }_{v,i}}{{\left| \check{h}_{\Lambda_v,i} \right|}^{2}}} \right|}^{2}} \right\}},\\
			&={2M^2\sum\limits_{p=1}^{K_v}{\lambda _{v,p}^{2}}+M^2\sum\limits_{p=1}^{K_v}{\sum\limits_{q\ne p}^{K_v}{{{\lambda }_{v,p}}{{\lambda }_{v,q}}}}},\\
			&={M^2{{\left| {\rm{tr}}\left( {\mathbf{S}_{v}} \right) \right|}^{2}}+M^2\left\| {{\mathbf{S }}_{v}} \right\|_{\rm{F}}^{2}},\\
		\end{aligned}
	\end{equation}
and
	\begin{equation}\label{first_appro}
		\begin{aligned}
			\frac{\mathbb{E}\left\{ \xi_{u,v}^{{\rm{up}},x} \right\}}{\mathbb{E}\left\{ \xi _{v}^{\text{down}} \right\}}
			=\frac{{\rm{tr}}\left( \mathbf{S}_{u,v} \right)}{{{\left| {\rm{tr}}\left( \mathbf{S}_{v} \right) \right|}^{2}}+\left\| {{\mathbf{S }}_{v}} \right\|_{\rm{F}}^{2}}.
		\end{aligned}
	\end{equation}

	\newcounter{TempEqCnt7}                         
	\setcounter{TempEqCnt7}{\value{equation}} 
	\setcounter{equation}{62}  
	\begin{figure*}[ht] 
		\centering 
		\vspace*{-20pt} 
		\textcolor{black}{
			\begin{equation} \label{xi_uv_up_y_expectation_1}
				\begin{aligned}
					\mathbb{E }\left\{\xi_{u,v}^{{\rm{up}},y}\right\} &= 
					 M^2\sum\limits_{b' = 1}^B {\sum\limits_{b \ne b'}^B {\sum\limits_{l \in {\Lambda _{b,u}}} {\sum\limits_{l' \in {\Lambda _{b',u}}} {
									{{{\bar \beta }_{b,u,l}}{{\bar \beta }_{b',u,l'}}
										\mathbb{E}\left \{ 
										{{\tilde {\bar h}}_{b,u,l}}
										\tilde {\bar h}_{b',u,l'}^*
										\right \} 
										\mathbb{E}\left \{ 
										{{\hat {\bar h}}_{b,u,l}}
										\hat {\bar h}_{b',u,l'}^*
										\right \} 
					} } } } } , \\
					&= M^2	\sum\limits_{b' = 1}^B {\sum\limits_{b \ne b'}^B {\sum\limits_{l \in {\Lambda _{b,u}}} {\sum\limits_{l' \in {\Lambda _{b',u}}} {
									{{{\bar \beta }_{b,u,l}}{{\bar \beta }_{b',u,l'}}
									\varepsilon^2\left(1- \varepsilon ^2\right){\rho}_{u,b,b'}^{l,l',2}
										 ,\quad {\text{if }}\, v = u
					} } } } } .
				\end{aligned}
			\end{equation}
	\begin{equation} \label{xi_uv_up_y_expectation_2}
		\begin{aligned}
			\mathbb{E }\left\{\xi_{u,v}^{{\rm{up}},y}\right\} &= M^2
			 \sum\limits_{b' = 1}^B {\sum\limits_{b  \ne  b'}^B {\sum\limits_{l \in {\Lambda _{b,v}}} {\sum\limits_{l' \in {\Lambda _{b',v}}} {\sqrt {{{\bar \beta }_{b,v,l}}{{\bar \beta }_{b,u,l}}{{\bar \beta }_{b',v,l'}}{{\bar \beta }_{b',u,l'}}}
							\mathbb{E}\left \{  
							{{\bar {h}}}_{b,u,l}^* 
							{{{\bar{ h}}}_{b',u,l'}} 
							\right \} 
							\mathbb{E}\left \{  
							{\hat{\bar {h}}}_{b,v,l}^* 
							{{\hat{\bar{ h}}}_{b',v,l'}} 
							\right \} 	}  } } }  ,\\			
			& = M^2\sum\limits_{b' = 1}^B {\sum\limits_{b  \ne  b'}^B {\sum\limits_{l \in {\Lambda _{b,v}}} {\sum\limits_{l' \in {\Lambda _{b',v}}} {\sqrt {{{\bar \beta }_{b,v,l}}{{\bar \beta }_{b,u,l}}{{\bar \beta }_{b',v,l'}}{{\bar \beta }_{b',u,l'}}}
							\left(1- \varepsilon ^2\right){\rho}_{u,b,b'}^{l,l'}
							{\rho}_{v,b,b'}^{l,l'}  , \quad {\text{if }}\, v \ne u
			}  } } }   .
		\end{aligned}
	\end{equation}  }
		\hrulefill 
	\end{figure*}
	\setcounter{equation}{\value{TempEqCnt7}} 

	Next, we compute $\mathbb{E}\left\{ \xi_{u,v}^{{\rm{up}},y} \right\} $  with the case of $v=u$ and $v \ne u$ as shown in Eq. \eqref{xi_uv_up_y_expectation_1} and Eq. \eqref{xi_uv_up_y_expectation_2} at the top of this page, respectively. 
	Thus, we obtain $\mathbb{E}\left\{ {\xi_{u,v}^{{\rm{up}},y}} \right\} = {{\delta }_{u,v}} $ defined in Eq. \eqref{delta}.
	
	For $\mathbb{E}\left\{ \frac{1}{\xi _{v}^{\text{down}}} \right\}$,
	 we can obtain for $\rho_v >2$ from Eq. \eqref{PDF_zeta_u},
	 \setcounter{equation}{64}
	\begin{equation}
		\label{expectation2}
		\begin{aligned}
			&\quad\,\mathbb{E}\left\{ \frac{1}{\xi _{v}^{\text{down}}} \right\} = \frac{1}{M^2}\mathbb{E}\left\{ {\frac{1}{{\zeta _v^2}}} \right\} ,\\
			&= \frac{1}{M^2}\sum\limits_{k = 0}^\infty  {\frac{{{\alpha_{v,k}}}}{{\Gamma \left( {\rho_v  + k} \right){{\left( {2\beta_v } \right)}^{\rho_v  + k}}}}} \int_0^\infty  {{x^{\rho_v  + k - 3}}} {e^{ - \frac{1}{{2\beta_v }}x}}\mathrm{d}x ,\\ 
			&= \frac{1}{M^2}\!\!\sum\limits_{k = 0}^\infty \! {\frac{{{\alpha_{v,k}}}}{{\Gamma \left( {\rho_v  + k} \right){{\left( {2\beta_v } \right)}^{\rho_v  + k}}}}} {\left( {2\beta_v } \right)^{\rho_v  + k - 2}}\Gamma \left( {\rho_v  + k - 2} \right), \\
		\end{aligned}
	\end{equation}
	which simplifies to ${{\eta }_{v}}$ in Eq. \eqref{eta}.
	Up to this point, the approximate expression in Eq. \eqref{lemma2_appro} can be obtained.

	\ifCLASSOPTIONcaptionsoff
	\newpage
	\fi


	
	%
	%
	


\begin{IEEEbiography}[{\includegraphics[width=1in,height=1.25in,clip,keepaspectratio]{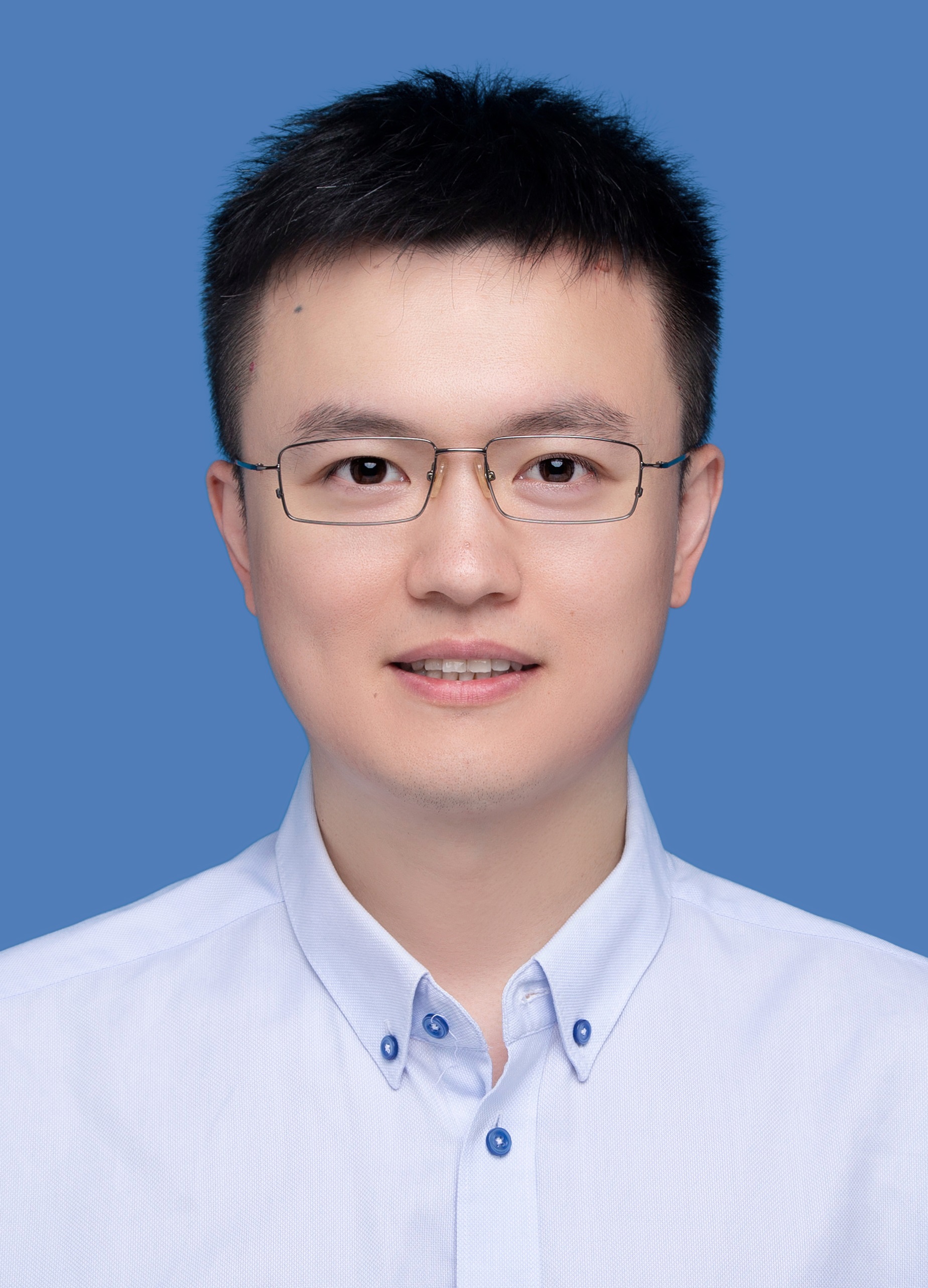}}]{Cheng Zhang}
	(Member, IEEE) received the B.Eng. degree from Sichuan University, Chengdu, China, in June 2009, the M.Sc. degree from the Xi’an Electronic Engineering Research Institute (EERI), Xi’an, China, in May 2012, and the Ph.D. degree from Southeast University (SEU), Nanjing, China, in Dec. 2018. From Nov. 2016 to Nov. 2017, he was a Visiting Student with the University of Alberta, Edmonton, AB, Canada.
	
	From June 2012 to Aug. 2013, he was a Radar Signal Processing Engineer with Xi’an EERI. Since Dec. 2018, he has been with SEU, where he is currently an Associate Professor, and supported by the Zhishan Young Scholar Program of SEU. His current research interests include space-time signal processing and machine learning-aided optimization for B5G/6G wireless communications. He has authored or co-authored more than 50 IEEE journal papers and conference papers. He was the recipient of the excellent Doctoral Dissertation of the China Education Society of Electronics in Dec. 2019, that of Jiangsu Province in Dec. 2020, and the Best Paper Awards of 2023 IEEE WCNC and 2023 IEEE WCSP.
\end{IEEEbiography}

\begin{IEEEbiography}[{\includegraphics[width=1in,height=1.25in,clip,keepaspectratio]{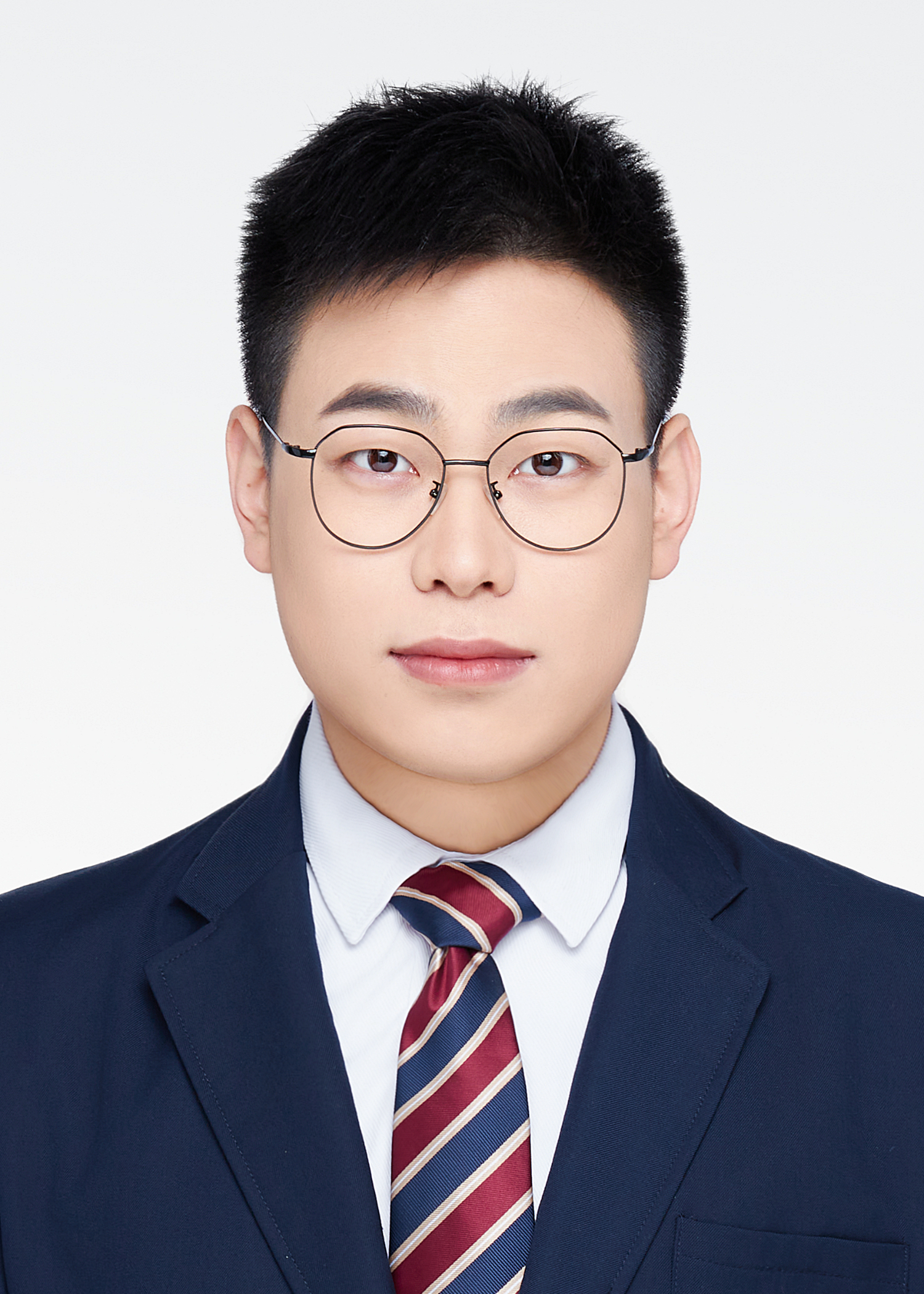}}]{Pengguang Du}
		(Graduate Student Member, IEEE) received the B.Eng. degree in communication engineering from Jilin University, Changchun, China, in 2021. And he is currently pursuing the Ph.D. degree in information and communication engineering with the School of Information Science and Engineering, Southeast University, Nanjing, China. His research interests mainly focus on massive MIMO channel acquisition and intelligent wireless communications.
\end{IEEEbiography}

\begin{IEEEbiography}[{\includegraphics[width=1in,height=1.25in,clip,keepaspectratio]{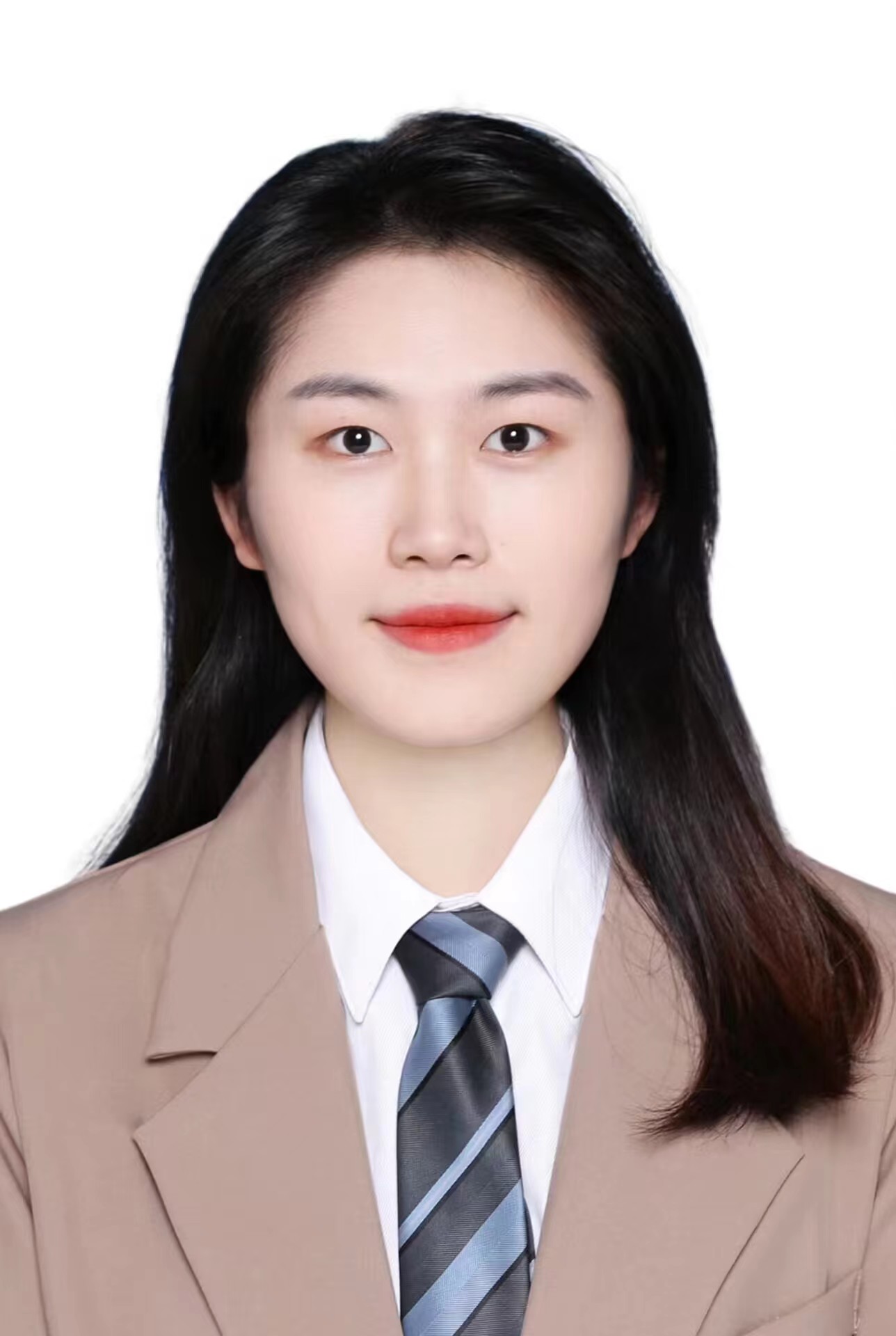}}]{Minjie Ding}
		received the B.Eng. degree in information engineering from the School of Electronic Information and Engineering, Nanjing University of Aeronautics and Astronautics, Nanjing, China, in 2020, where she is currently pursuing the M.Sc. degree in information and communication engineering with the School of Information Science and Engineering, Southeast University. Her research interests mainly focus on low overhead massive MIMO channel acquisition.
\end{IEEEbiography}

\begin{IEEEbiography}[{\includegraphics[width=1in,height=1.25in,clip,keepaspectratio]{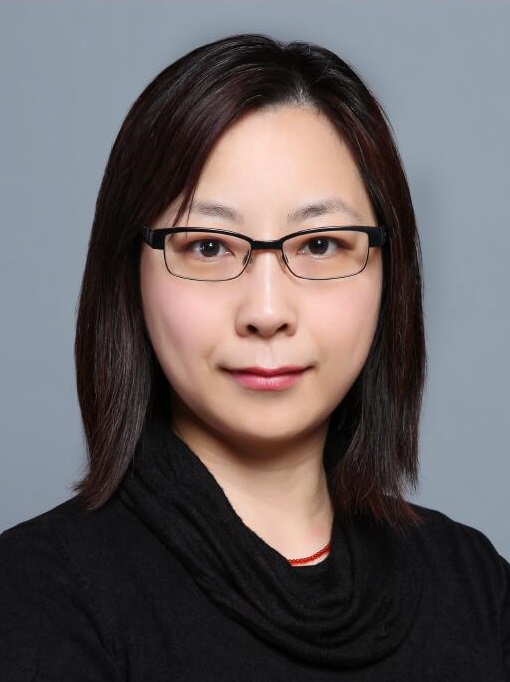}}]{Yindi Jing}
	(Senior Member, IEEE) received the B.Eng. and M.Eng. degrees in automatic control
	from the University of Science and Technology of China, Hefei, China, in 1996 and 1999, respectively. She received the M.Sc. degree and the Ph.D. in electrical engineering from California Institute of Technology, Pasadena, CA, in 2000 and 2004, respectively. From October 2004 to August 2005, she was a postdoctoral scholar at the Department of Electrical Engineering of California Institute of Technology. From February 2006 to June 2008, she was a postdoctoral scholar at the Department of Electrical Engineering and Computer Science of the University of California, Irvine. In 2008, she joined the Electrical and Computer Engineering Department of the University of Alberta, where she is currently a professor. She was an Associate Editor for IEEE TRANSACTIONS ON WIRELESS COMMUNICATIONS and a Senior Area Editor for IEEE SIGNAL PROCESSING LETTERS. She was a member of the IEEE Signal Processing Society Signal Processing for Communications and Networking (SPCOM) Technical Committee and a member of the NSERC Discovery Grant Evaluation Group for Electrical and Computer Engineering. Her research interests are in wireless communications and signal processing.
\end{IEEEbiography}

\begin{IEEEbiography}[{\includegraphics[width=1in,height=1.25in,clip,keepaspectratio]{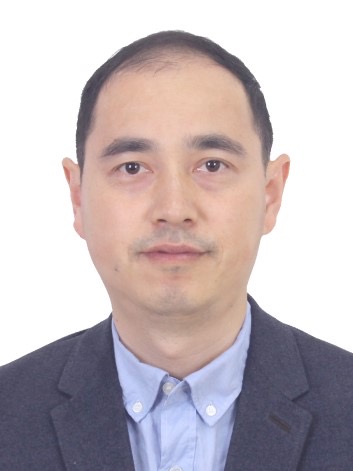}}]{Yongming Huang}
	(M’10-SM’16) received the B.S. and M.S. degrees from Nanjing University, Nanjing, China, in 2000 and 2003, respectively, and the Ph.D. degree in electrical engineering from Southeast University, Nanjing, in 2007.
	
	Since March 2007 he has been a faculty in the School of Information Science and Engineering, Southeast University, China, where he is currently a full professor. He has also been the Director of the Pervasive Communication Research Center, Purple Mountain Laboratories, since 2019. From 2008 to 2009, he was visiting the Signal Processing Lab, Royal Institute of Technology (KTH), Stockholm, Sweden. He has published over 200 peer-reviewed papers, hold over 80 invention patents. His current research interests include intelligent 5G/6G mobile communications and millimeter wave wireless communications. He submitted around 20 technical contributions to IEEE standards, and was awarded a certiﬁcate of appreciation for outstanding contribution to the development of IEEE standard 802.11aj. He served as an Associate Editor for the IEEE Transactions on Signal Processing and a Guest Editor for the IEEE Journal on Selected Areas in Communications. He is currently an Editor-at-Large for the IEEE Open Journal of the Communications Society.
\end{IEEEbiography}


\end{document}